\documentclass[preprintnumbers,floatfix,superscriptaddress,nofootinbib,prx,twocolumn,showpacs,groupedaddress]{revtex4-1}
\usepackage[T1]{fontenc}
\usepackage{amsmath,amsthm,amssymb}
\usepackage{graphicx}
\graphicspath{{Figures/}}
\usepackage{dcolumn}
\usepackage{bbold}
\usepackage{hyperref} 
\usepackage{braket}
\usepackage{xcolor}
\usepackage{dsfont}
\usepackage{setspace}
\usepackage{multirow}
\usepackage{booktabs}
\usepackage[justification=raggedright]{caption}
\usepackage{lipsum}
\usepackage{algorithm}
\usepackage{physics}

\newcommand{\tn}{\textnormal}
\newcommand{\inn}{\textnormal{in}}
\newcommand{\out}{\textnormal{out}}
\newcommand{\guess}{\textnormal{guess}}

\newcommand{\vl}[1]{\textcolor{black}{#1}}
\newcommand{\change}[1]{\textcolor{black}{#1}}
\newcommand{\de}[1]{\left(#1\right)}

\newcommand{\DE}[1]{\left\{#1\right\}}
\newcommand{\hon}{\mathcal{H}}
\newcommand{\adv}{\mathcal{A}}

\theoremstyle{definition}

\newtheorem{definition}{Definition}
\newtheorem{assumption}{Assumption}

\theoremstyle{plain}
\newtheorem{lemma}{Lemma}
\newtheorem{theorem}{Theorem}

\usepackage{array}
\newcolumntype{L}[1]{>{\raggedright\let\newline\\\arraybackslash\hspace{0pt}}m{#1}}
\newcolumntype{C}[1]{>{\centering\let\newline\\\arraybackslash\hspace{0pt}}m{#1}}
\newcolumntype{R}[1]{>{\raggedleft\let\newline\\\arraybackslash\hspace{0pt}}m{#1}}

\begin{document}

\title{Anonymous transmission in a noisy quantum network using the $W$ state}
\author{Victoria Lipinska}\email{v.lipinska@tudelft.nl}

\author{Gl\'{a}ucia Murta}\email{g.murtaguimaraes@tudelft.nl}

\author{Stephanie Wehner}

\affiliation{QuTech, Delft University of Technology, Lorentzweg 1, 2628 CJ Delft, The Netherlands}

\date{November 23, 2018}

\begin{abstract}
We consider the task of anonymously transmitting a quantum message in a network. We present a protocol that accomplishes this task using the W state and we analyze its performance in a quantum network where some form of noise is present.
We then compare the performance of our protocol with some of the existing protocols developed for the task of anonymous transmission. We show that, in many regimes, our protocol tolerates more noise and achieves higher fidelities of the transmitted quantum message than the other ones. Furthermore, we demonstrate that our protocol tolerates one nonresponsive node. We prove the security of our protocol in a semiactive adversary scenario, meaning that we consider an active adversary and a trusted source.
\end{abstract}

\maketitle

\section{Introduction}
In cryptographic scenarios we are often concerned with hiding the content of the messages being exchanged. However, sometimes the identity of the parties who communicate may also carry relevant information. Examples of tasks where the identities of the ones who communicate carry crucial information are voting, electronic auctions \cite{Stajano2000} or, more practically, sending a message to a secret beloved \cite{Chaum1981}. Therefore, the establishment of anonymous links in a network, where identities of connected parties remain secret, is an important primitive for both classical \cite{Chaum1988} and quantum communication.

In this paper we consider a task of anonymously transmitting a quantum message in a network.
To define the task more precisely, consider a quantum network with $N$ nodes. One of the nodes, sender $S$, would like to communicate a quantum state $\ket{\psi}$ to a receiver $R$ in a way that their identities remain completely hidden throughout the protocol. In particular, for $S$ it implies that her identity remains unknown to all the other parties, whereas for $R$ it implies that no one except $S$ knows her identity. The essence of the protocol is to create an entangled link between $S$ and $R$ by performing local operations on the other nodes of the network. Such a link is called \textit{anonymous entanglement} (AE) \citep{Christandl2005}, since the identities of the nodes holding the shares of the entangled pair is kept anonymous. After  anonymous entanglement is created, $S$ and $R$ use it as a resource for teleporting the quantum information $\ket{\psi}$. Note that the main goal of anonymous transmission is to fully hide the identities of the sender and the receiver; it does not aim at guaranteeing the reliability of the transmitted message.

A number of protocols have been proposed to tackle this task, which was first introduced in \cite{Christandl2005}. There, the authors present a protocol which makes use of a given multipartite Greenberger-Horne-Zeilinger (GHZ) state as a quantum resource, i.e., {$\ket{\text{GHZ}_N} = \frac{1}{\sqrt{2}}(\ket{0\dots 0} + \ket{1 \dots 1})$}. The problem was subsequently developed to consider the preparation and certification of the GHZ state \cite{Bouda07, Brassard2007}. In \cite{Brassard2007}, it was first shown that the proposed protocol is information-theoretically secure against an active adversary. What is more, other protocols were proposed, which do not make use of multipartite entanglement, but utilize solely Bell pairs to create anonymous entanglement \cite{Yang2016}. Yet, so far, it has not been discussed whether multipartite states other than the GHZ allow for anonymous transmission of a quantum state. Moreover, nothing is known about the performance of such protocols in a realistic quantum network, where one inevitably encounters different forms of noise.

Here we design a protocol for quantum anonymous transmissions which uses the W state, {$\ket{\tn{W}}_N = \frac{1}{\sqrt{N}}(\ket{10\dots0} + \dots + \ket{0\dots01})$}. Just like other existing protocols, our protocol is based on establishing anonymous entanglement between $S$ and $R$. We prove the security of our protocol in a semiactive adversary scenario, meaning that we consider an active adversary and a trusted source, as in \citep{Christandl2005}.
We also show that security is preserved in the presence of noise in the network, when all the particles are subjected to the same type of noise. What is more, we compare the performance of our protocol with previously proposed protocols that use the GHZ state and Bell pairs. We quantify the performance of protocols by the fidelity of the transmitted quantum state. We find that, in many cases, our W-state  based protocol tolerates more noise than the other protocols and achieves higher fidelity of the transmitted state. Additionally, we show that our protocol can tolerate one nonresponsive node, e.g., if one of the qubits of a multipartite state gets lost.  In contrast, the protocol using the GHZ state cannot be carried out at all in this case, since the loss of a single qubit destroys the entanglement of the state. We also address the performance of the Bell-pair based protocol, presented in \cite{Yang2016}, and we show that in the presence of noise, the performance of the protocol depends on the ordering of $S$ and $R$ in the network. To the best of our knowledge this is the first analysis of anonymous transmission in the presence of noise. Without such an analysis the performance of near-future applications for quantum networks cannot be characterized \cite{Qroadmap}.

{The paper is organized as follows. In Sec.~\ref{sec:protocol}, we present the protocol for anonymous transmission with the W state and discuss its correctness. In Sec.~\ref{sec:security_analysis}, we provide the security definition and prove that our protocol is secure in the semiactive and passive adversary scenario. Finally, in Sec.~\ref{sec:performance} we examine the behavior of our protocol in a noisy quantum network and compare it with the other existing protocols. }

\section{The protocol}\label{sec:protocol}

Our anonymous transmission protocol, Protocol {1}, allows a sender $S$ to transmit an arbitrary quantum state $\ket{\psi}$ to a receiver $R$ in an anonymous way and uses the $N$-partite W state as a quantum resource.
\begin{algorithm}[H]
\caption*{\textbf{Protocol 1:} Anonymous transmission with the W state.}\label{protocol}
\vspace{.5em}
{\textbf{Goal:} Transmit a quantum state $\ket{\psi}$ from the sender $S$ to the receiver $R$, while keeping the identities of $S$ and $R$ anonymous. }\\
\begin{enumerate}
\item \textit{Collision detection.} \\ Nodes run the {classical collision detection protocol \cite{Broadbent2007}} to determine a single sender $S$. All nodes input 1 if they do wish to be the sender and 0 otherwise. If a single node wants to be the sender, continue.
\item \textit{Receiver notification.} \\ Nodes run the {classical receiver notification protocol \cite{Broadbent2007}}, where the receiver $R$ is notified of her role.
\item \label{prot:state_distr} \textit{State distribution.} \\A trusted source distributes the $N$-partite W state. 
\item \textit{Measurement.} \\$N-2$ nodes (all except for $S$ and $R$) measure in the $\{\ket{0},\ket{1} \}$ basis. 
\item \textit{Anonymous announcement of outcomes.} \\Nodes use the 
{classical veto protocol \cite{Broadbent2007}} which outputs 0 if all the $N-2$ measurement outcomes are 0, and 1 otherwise. If the output is 0 then anonymous entanglement is established, else abort.
\item \textit{Teleportation.} \\Sender $S$ teleports the message state $\ket{\psi}$ to the receiver $R$. Classical message $m$ associated with teleportation is sent anonymously. The communication is carried out using {the classical logical OR protocol \cite{Broadbent2007}}
which computes $m \oplus \tn{rand}$, where $\tn{rand}$ is a random 2-bit string input by the receiver $R$.
\end{enumerate}
\end{algorithm}
Protocol {1} is built on a number of classical subroutines -- collision detection, receiver notification, veto and logical OR. {Specifically: collision detection checks whether only one of the nodes wishes to be the sender; receiver notification notifies the receiver of her role in the protocol; veto announces if at least one of the parties has given input 1; 
and logical OR computes the XOR of the input of all the parties.} In Ref. \cite{Broadbent2007}, protocols for implementing these classical subroutines were proposed. The protocols were proven to be information-theoretically secure {in the classical regime}, even with an arbitrary number of corrupted participants, assuming the parties share pairwise authenticated private channels and a broadcast channel. 
However, security against a quantum adversary was not analyzed. Like in related work \citep{Brassard2007}, here we will assume that the protocols listed above remain secure even in the presence of a quantum adversary. We make this assumption explicit in the security proof presented in App. \ref{app:sec:states_and_registers}, where we assume that the classical subprotocols only act on the classical input register and create the output register, therefore, not revealing any information other than what is specified by the protocol.

The main concern of any anonymous transmission protocol is to hide the identities of sender $S$ and receiver $R$. {Nonetheless, it is also desired that, in the case in which all the parties act honestly, no information about the transmitted message is revealed.} In order to achieve this functionality we add the step where $R$ randomizes the output of the logical OR in Step 6 of Protocol {1}. In that way, the classical outcome of the teleportation, $m$, is sent from $S$ to $R$ in a secret way.
Indeed, even though the classical bit $m$ could be sent by a simple anonymous broadcast protocol, the probability of obtaining a particular outcome $m$ can depend on which state is teleported if the established anonymous entanglement is not a maximally entangled state. This is the case especially in the presence of noise in the network (for more details see App. \ref{app:sec:security_analysis}). 

Note that our protocol is probabilistic, as the parties may abort in Step 5. However, since the measurement outcomes are announced, the creation of anonymous entanglement is heralded. Hence, $S$ and $R$ know when the anonymous entanglement failed to be established before they initiate the teleportation, so in the case in which the protocol aborts, $S$ keeps the state $\ket{\psi}$. In the following we first state the correctness of the protocol and then elaborate on the probability of success in the protocol, as a function of the number of parties in the network $N$.

\begin{lemma}[correctness]
	\change{If all the parties act honestly and Protocol {1} does not abort, the state $\ket{\psi}$ is transferred from the sender $S$ to the receiver $R$, except with probability $\epsilon_{\rm corr}$, where $\epsilon_{\rm corr}$ is an exponentially vanishing function of the number of rounds used to implement the classical subroutines}.
\end{lemma}
\begin{proof}
	\change{First, recall that Protocol {1} is built on several classical subroutines and in Ref.~\cite{Broadbent2007}, protocols to implement these subroutines were presented. The protocols were proven to be correct except with a probability that vanishes exponentially with the number of rounds $n_{\rm class}$ used to implement the subroutines. Secondly, conditioned on the fact that the classical subroutines are correct and the parties act honestly, the measurement in the $\{\ket{0},\ket{1} \}$ basis can lead to two situations: \textit{(i)} all parties obtain measurement outcome 0, in which case the anonymous entangled state between $S$ and $R$ is $\ket{\psi^+} = \frac{1}{\sqrt{2}}(\ket{01} + \ket{10})$, or \textit{(ii)} a single party obtains a measurement outcome 1 and then the state between $S$ and $R$ is $\ket{00}$, {in which case they abort the protocol}. If the parties do not abort the protocol in Step 5, then the state shared by $S$ and $R$ is the maximally entangled state $\ket{\psi^+} = \frac{1}{\sqrt{2}}(\ket{01} + \ket{10})$, which is then used to perfectly teleport state $\ket{\psi}$ from $S$ to $R$. Altogether, this implies that Protocol {1} is correct except with probability $\epsilon_{\rm corr}$ which vanishes exponentially with $n_{\rm class}$.}
	
\end{proof}

\begin{lemma}[probability of success]\label{lemmaprob}
\change{Given sender S and receiver R, the probability of obtaining the anonymous entangled state {$\ket{\psi^+}$} in Step 4 of Protocol~{1} is $\frac{2}{N}$.}
\end{lemma}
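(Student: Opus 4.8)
The plan is to compute directly the probability of the heralding event in Step 5, namely that all $N-2$ measuring nodes obtain outcome $0$, since by construction of the Veto subroutine this is exactly when anonymous entanglement is declared. First I would invoke the permutation invariance of $\ket{W}_N$: because the prescribed operation (a computational-basis measurement on every node other than $S$ and $R$) is permutation-covariant, relabelling the nodes so that the measured ones are $1,\dots,N-2$ and $S,R$ are the last two does not affect the outcome statistics.

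Next I would group the $N$ terms of $\ket{W}_N=\frac{1}{\sqrt{N}}\sum_{j=1}^{N}\ket{0\cdots 1_j\cdots 0}$ according to whether the single excitation lies on one of the $N-2$ measured nodes or on $S$ or $R$. Applying the projector $\Pi_0=(\ket{0}\bra{0})^{\otimes(N-2)}$ onto the measured qubits kills the $N-2$ terms of the first group and leaves the unnormalized post-measurement vector $\frac{1}{\sqrt{N}}\ket{0\cdots0}_{1\dots N-2}\otimes(\ket{1}_S\ket{0}_R+\ket{0}_S\ket{1}_R)$. By the Born rule the probability of this outcome is the squared norm of this vector, which is $\frac{1}{N}(1+1)=\frac{2}{N}$; after normalization the state on $S,R$ is $\frac{1}{\sqrt{2}}(\ket{10}+\ket{01})=\ket{\psi^+}$, in agreement with the Correctness lemma.

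For completeness I would also record that the complementary event occurs with probability $\frac{N-2}{N}$ and, as already noted in the proof of correctness, leaves $\ket{00}$ on $S,R$, so the protocol aborts; these two cases exhaust all possibilities and their probabilities sum to $1$. I do not expect a real obstacle here: the computation is a one-line projection. The only point that deserves to be stated carefully is the reduction by symmetry, together with the (already assumed) fact that the classical Veto subroutine reports the OR of the announced measurement outcomes without disturbing the quantum state, so that the heralding event genuinely coincides with the projector $\Pi_0$.
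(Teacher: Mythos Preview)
Your proposal is correct and follows essentially the same approach as the paper: both compute the Born-rule probability of the all-zero outcome on the $N-2$ measured parties via the projector $|\vec{0}\rangle\!\langle\vec{0}|_{N-2}$, with the paper writing this compactly as $\Tr[\ketbra{W}{W}_N \cdot(\mathbb{1}_{SR}\otimes|\vec{0}\rangle\!\langle\vec{0}|_{N-2})]=\frac{2}{N}$ while you expand the W state term-by-term to reach the same result. Your added remarks on permutation invariance and the complementary abort event are sound but not needed for the lemma as stated.
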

\begin{proof}
Let $|\vec{0}\rangle\!\langle\vec{0}|_{N-2}$ denote the projection on the $\ket{0}$ state of $N-2$ parties. The probability $P_{\psi^+}$ of obtaining this state can be expressed as $P_{\psi^+} = \Tr[\ketbra{\tn{W}}_N \cdot \left(\mathbb{1}_{SR} \otimes |\vec{0}\rangle\!\langle\vec{0}|_{N-2}\right)] = \frac{2}{N}\Tr[\ketbra{\psi^+}{\psi^+}] = \frac{2}{N}$.
\end{proof}

\change{Lem. \ref{lemmaprob} states that in the honest implementation, the probability of not aborting in Step 4 of Protocol {1} decreases with the number of parties. Protocols based on the GHZ state \cite{Christandl2005,Brassard2007}, on the other hand, are deterministic in creating anonymous entanglement.} However, we remark that a fair comparison between the success rate of the two protocols should also take into account the rate of state generation.
Note that recently, a linear optical setup for generating the W state in {nitrogen-vacancy systems} was proposed \cite{Kalb2018}, which could offer a potential advantage in generation rates of the W state, over the GHZ state.

\section{Security}\label{sec:security_analysis}

As discussed in the previous section, in the task of anonymous transmission the main goal is to keep the identities of sender $S$ and receiver $R$ secret. In this section we present the security definitions and prove the security of Protocol {1} against a semiactive adversary. 

{ Let $[N] = \{1,\dots,N \}$ be the set of nodes. We say that dishonest nodes are a subset $\mathcal{A} \in [N] $, with $|\mathcal{A}| = t$.
{This set is defined at the beginning of the protocol, which  is} known as a \textit{nonadaptive} adversary.}

\begin{definition}[semiactive adversary]
We define the \textit{semiactive adversary} scenario as one in which the adversaries are active, i.e., can perform arbitrary {joint} operations on their state during the execution of the protocol, but the source distributing a quantum state is trusted.
\end{definition}

{In particular, for Protocol 1 this means that the state in Step 3 is exactly the W state}. This adversarial model is stronger than a \textit{passive} adversary, where it is assumed that the parties follow all the steps of the protocol and only collect the available classical information. However, note that a fully active adversarial scenario would allow the cheating participants to corrupt the source.

We define security in terms of the guessing probability, i.e., the maximum probability that adversaries guess the identity of the $S$ or $R$ given all the classical and quantum information they have available at the end of the protocol. {Intuitively, we say that the protocol is secure when the guessing probability is no larger than the uncertainty the adversaries have about the identity of the sender before the protocol begins. This uncertainty is defined by the prior probability, $P[S=i|S\notin \mathcal{A}]$. For example, in the case where all the nodes are equally likely to be the sender, the prior probability is uniform and, therefore, $P[S=i|S\notin \mathcal{A}] = \frac{1}{N-t}$.}

\change{In Protocol {1} it is assumed that the message $\ket{\psi}$ to be sent carries no information about the sender's identity. We remark that anonymous transmission is concerned with ensuring anonymity and not secrecy. In the case in which secrecy of the message is required, anonymous transmission could be combined with another primitive that allows one to encrypt the message. However, here, we do not address this issue.}

\begin{definition}[guessing probability]
Let $\mathcal{A}$ be the subset of semiactive adversaries. Let $C$ be the register that contains all classical and quantum side information accessible to the adversaries. Let $W^\adv$ denote the adversaries' quantum register of the state distributed by the source. Then, the probability of adversaries guessing the sender is given by
\begin{align}\label{EQ:sec_active2}
\begin{split}
& P_{\guess}[S|W^\adv,C, S\notin \mathcal{A}] = \\
& \quad = \max_{\{M^i\}} \sum_{i \in [N]} P[S=i|S\notin \mathcal{A}] \Tr[M^i \cdot \rho_{W^\adv C|S=i} ], 
\end{split}
\end{align}
where the maximization is taken over the set of POVMs ${\{M^i\}}$ for the adversaries and $\rho_{W^\adv C|S=i}$ is the state of the adversaries at the end of the protocol, given that node $i$ is the sender. 
\end{definition}

\begin{definition}[sender security]
We say that an anonymous transmission protocol is \textit{sender-secure} if, given that the sender is honest, the probability of the adversary guessing the sender is 
\begin{align}
P_{\guess}[S|W^\adv,C,S\notin \mathcal{A}] \leq \max_{i\in[N]} P[S=i|S\notin \mathcal{A}].
\end{align}
\end{definition}

In words, the protocol is sender-secure if the probability that the adversaries guess the identity of $S$ at the end of the protocol is not larger than the probability that an honest node $i$ is the sender, maximized over all the nodes.  An analogous definition can be given for the \textit{receiver security}. 

{We remark that, even  if $S$ and $R$ are honest, it is trivially possible for the \textit{malicious} parties to prevent $S$ and $R$ from exchanging the desired message. For example, the dishonest parties can measure the W state in a different basis affecting the resulting anonymous entanglement. In this sense, the correctness of Protocol~{1} is not robust to malicious attacks. However, in what follows, we show that Protocol~{1} is secure, and even in the presence of dishonest parties, the anonymity of $S$ and $R$ is preserved.}

\begin{theorem}\label{thm:WSsecure}
The anonymous transmission protocol with the W state, Protocol 1, is sender- and receiver-secure in the semiactive adversary scenario.
\end{theorem}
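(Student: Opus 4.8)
The plan is to show that the adversaries' final state $\rho_{W^\adv C \mid S=i}$ is, in fact, independent of which honest node $i$ plays the role of the sender (and similarly for the receiver). If this holds, then for any POVM $\{M^i\}$ the objective in Eq.~\eqref{EQ:sec_active2} reduces to $\sum_i P[S=i\mid S\notin\mathcal{A}]\Tr[M^i\rho]$ for a fixed $\rho$, which is maximized by the trivial ``always guess the most likely node'' strategy, giving exactly $\max_{i\in[N]}P[S=i\mid S\notin\mathcal{A}]$. So sender-security (and, by the same argument, receiver-security) follows immediately from this symmetry/indistinguishability claim. The whole content of the theorem is therefore establishing that claim.

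To do this I would decompose the side information $C$ into its constituent pieces, following the structure of Protocol~1: (i) the classical transcripts of the subprotocols (Collision Detection, Receiver Notification, Veto, Logical OR), (ii) the adversaries' share $W^\adv$ of the distributed W state together with any measurement outcomes they choose to produce, and (iii) the classical teleportation message communicated in Step~6. For each piece I would argue it carries no information distinguishing honest sender $i$ from honest sender $i'$. For the classical subprotocols this is inherited from the assumed information-theoretic security of the primitives of~\cite{Broadbent2007} (made explicit in App.~\ref{app:sec:states_and_registers}): by assumption they reveal nothing beyond their specified output, and the outputs themselves — ``a unique sender exists'', the veto bit, and $m\oplus\tn{rand}$ — do not depend on the label of the sender. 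The randomization by $R$ in Step~6 is exactly what kills the dependence of the teleportation message on $i$ (and on $\ket\psi$), which is why that step was inserted. The key quantum step is to show that the reduced state on $W^\adv$, conditioned on the event that the protocol does not abort, is the same regardless of which two honest nodes are $S$ and $R$: this is because the W state is permutation-invariant, so tracing out (or measuring in the computational basis) any $N-2$ of the honest parties' qubits and any fixed subset held by the adversaries yields a state depending only on $|\mathcal{A}|$ and the abort/no-abort outcome, not on the identities of $S$ and $R$ among the honest nodes.

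Concretely, I would set up registers as in the appendix, write the global state after Step~3 as $\ket{W}_N$ on the honest$\,\cup\,$adversary partition, let the adversaries apply an arbitrary CP map interleaved with the protocol's public steps, and then compute $\rho_{W^\adv C\mid S=i}$. Using permutation invariance of $\ket{W}_N$ one shows the honest parties' marginal behaviour is symmetric under relabelling $i\leftrightarrow i'$ within the honest set, and combining this with the transcript-independence of the classical subroutines gives $\rho_{W^\adv C\mid S=i}=\rho_{W^\adv C\mid S=i'}$ for all honest $i,i'$. Plugging into the guessing-probability expression and bounding $\sum_i P[S=i\mid S\notin\mathcal{A}]\Tr[M^i\rho]\le\max_i P[S=i\mid S\notin\mathcal{A}]\,\Tr[(\sum_i M^i)\rho]=\max_i P[S=i\mid S\notin\mathcal{A}]$ closes the sender case; receiver-security is verbatim the same with $R$ in place of $S$, additionally using that $R$'s role is known only to $S$ (from Receiver Notification's security) and that the Logical OR output hides $m$.

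The main obstacle I anticipate is handling the \emph{active} nature of the adversary cleanly: the cheating nodes may deviate arbitrarily (e.g.\ measure their W-state shares in a different basis, or feed adaptive inputs into the classical subroutines), so one must verify that no such deviation can break the symmetry — equivalently, that conditioning on the (possibly adversary-influenced) public transcript and abort event still leaves the honest-sender label uniform from the adversaries' viewpoint. This requires carefully invoking the assumption that the classical subprotocols leak nothing beyond their outputs even under quantum/active attack, and checking that the only quantum object the adversaries touch, $W^\adv$, has a marginal that is manifestly independent of the honest labelling by permutation invariance. Making the bookkeeping of ``all information available at the end of the protocol'' exhaustive — so that nothing sender-dependent sneaks in through a side channel — is the delicate part; the algebra itself is short.
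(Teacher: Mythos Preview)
Your proposal is correct and follows essentially the same approach as the paper: prove that the adversaries' reduced state $\rho_{W^\adv C|S=i}$ is independent of the honest sender's label by combining the assumed security of the classical subroutines with the permutation invariance of the W state, then plug this into the guessing-probability expression to get the bound $\max_i P[S=i\mid S\notin\mathcal{A}]$. The paper's detailed appendix proof additionally splits the sender-security argument into the two subcases $R\notin\mathcal{A}$ and $R\in\mathcal{A}$ (since in the latter the adversary holds the register $W^R$ and the teleportation outcome $m$), but this is exactly the ``exhaustive bookkeeping'' you already flagged as the delicate part, not a different idea.
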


\begin{proof}[Idea of the proof.]
For clarity, here we present the main idea of our security proof and we refer the reader to 
App.~\ref{app:sec:security_analysis} for details. Note that in the semiactive adversary scenario we allow the adversaries to apply an arbitrary cheating strategy, which in particular includes not following the steps of the protocol and performing global operations on their joint state. First, let us discuss the sender security. We consider the case when $R$ is honest, $R\notin \adv$, as well as when she is dishonest, $R\in\adv$. In both cases, the gist of our sender-security proof is to show that the reduced quantum state of the adversary $\rho_{W^\adv C|S=i}$ at the end of the protocol is independent of the sender, i.e., $\forall i \notin \adv$, $\rho_{W^\adv C|S=i} = \rho_{W^\adv C}$. To show it, we explicitly use the assumption that the classical protocols do not leak any information about $S$ or $R$'s identity even if the adversary has access to quantum correlations. Therefore, any quantum side information the adversary holds is independent of $S$. This, together with the fact that the state distributed by the source is permutationally invariant yields the desired equality. Since now the reduced quantum state of the adversary is independent of $S$ we can easily upper-bound the guessing probability by $\max_{i \in [N]} P[S=i|S\notin \mathcal{A}]$. {The receiver security can be proven following the same structure.}
\end{proof}
{Note that our security proof tolerates any number of cheating nodes. It is also general enough to make a security statement about any resource state that is invariant under permutation of nodes.}

Let us now discuss a passive adversarial model, also called the honest-but-curious model.
This is the case when the malicious parties follow all the steps of the protocol (in particular, they measure in the $\DE{0,1}$ basis in Step 4), but can collaborate to compare their classical data. Note that the passive adversary model is a special case of the semiactive adversary scenario.
{However, this model is interesting by itself, since in the case in which the nodes build their anonymous transmission protocol using weaker versions of classical subroutines, i.e., those that are not secure against quantum adversary,  the security still holds.}
  Indeed, it restricts the power of the adversary, so that they cannot share any quantum side information.  Then, the probability of the adversaries guessing the sender simplifies to
$
P_{\guess}[S|W^\adv,C, S\notin \mathcal{A}] = \sum_{a,c}P[W^\adv=a,C=c] \max_{i \in [N]} P[S=i|W^\adv=a,C=c,S\notin \mathcal{A}], 
$
where maximization is taken over all the values of the random variable $S$, and $a,c$ are possible values of random variables $W^\adv$ and $C$ respectively \cite{Tomamichel2012}. Note that, unlike before, here $W^\adv$ is a classical register of the adversary, since their share of the W state was measured in the $\DE{0,1}$ basis. An analogous expression holds for receiver-security. 

\begin{theorem}\label{thm:passive_security}
The anonymous transmission protocol with the W state, Protocol 1, is sender- and receiver-secure in the passive adversary scenario.
\end{theorem}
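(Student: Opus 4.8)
Since the passive adversary is a special case of the semi-active one, Theorem~\ref{thm:WSsecure} already implies the statement; the point of giving a direct argument is that it does not invoke the assumption that the classical subroutines of \cite{Broadbent2007} remain secure against a quantum adversary, but only uses their established classical security. The plan is therefore to work with the simplified form of $P_{\guess}[S|W^\adv,C, S\notin \mathcal{A}]$ recalled above and to show that, conditioned on the sender being honest, the joint distribution of the adversaries' data $(W^\adv,C)$ is the same no matter which honest node is the sender, i.e.\ $P[W^\adv=a,C=c\mid S=i,S\notin\mathcal{A}]$ is independent of $i\notin\mathcal{A}$, equivalently $P[S=i\mid W^\adv=a,C=c,S\notin\mathcal{A}]=P[S=i\mid S\notin\mathcal{A}]$ for all $a,c$. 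Granted this, substituting into the simplified guessing probability and using $\sum_{a,c}P[W^\adv=a,C=c]=1$ gives
\[
P_{\guess}[S|W^\adv,C, S\notin \mathcal{A}] = \sum_{a,c}P[W^\adv=a,C=c]\,\max_{i\in[N]}P[S=i\mid S\notin\mathcal{A}] = \max_{i\in[N]}P[S=i\mid S\notin\mathcal{A}],
\]
which is exactly sender-security; receiver-security follows by the same argument with the roles of $S$ and $R$ exchanged.

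I would prove the invariance claim by a permutation-symmetry argument applied to the entire honest execution together with the passive collection of classical data. Fix the non-adaptive set $\mathcal{A}$ with $S,R\notin\mathcal{A}$, pick two candidate senders $i,i'\notin\mathcal{A}$, and consider the permutation $\pi$ transposing the nodes $i$ and $i'$. The source's state $\ket{W}_N$ is invariant under any permutation of the $N$ nodes; the Step~4 measurement is in the same $\DE{\ket{0},\ket{1}}$ basis at every node; and the classical subroutines treat node labels symmetrically, in the sense that permuting the honest inputs permutes the outputs accordingly. Hence $\pi$ maps the run with $S=i$ onto the run with $S=i'$ while leaving the data accessible to the nodes in $\mathcal{A}$ — their Step~4 outcomes $W^\adv$, the outputs/transcripts of Collision Detection, Receiver Notification and Veto, and the bits delivered by the Logical~OR protocol — distributed identically, which is the desired equality.

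The two points that need care are where the real work sits. First, one has to verify that $C$ genuinely factors through this symmetry: the outputs of Collision Detection, Receiver Notification and Veto reveal only the bits specified by the protocol, none of which pins down an honest sender (this is precisely the classical guarantee of \cite{Broadbent2007}, and it suffices here because in the passive model these routines only ever see classical data), while the teleportation correction $m$ is communicated as $m\oplus\tn{rand}$ with $\tn{rand}$ uniformly random and chosen by $R$, so in an honest run this acts as a one-time pad and the delivered string is independent of $m$. Second, one must treat the sub-case $R\in\mathcal{A}$ in the sender argument (and, symmetrically, $S\in\mathcal{A}$ for receiver-security): the malicious $R$ then additionally holds her share of the $W$ state and her teleportation data, but conditioned on not aborting in Step~5 the anonymous entanglement is the fixed state $\ket{\psi^+}$ regardless of which honest node plays $S$, so $R$'s view depends only on $\ket{\psi}$ and her own randomness; the transposition is then taken between two honest non-sender nodes and the argument is unchanged. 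I expect the main obstacle to be not any single estimate but making the statement ``relabeling honest nodes is a symmetry of the full protocol transcript'' fully precise — in particular checking it against the concrete input/output behavior of each classical subroutine and against the abort branch in Step~5.
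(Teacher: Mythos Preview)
Your plan is correct and follows essentially the same route as the paper. The paper isolates your ``invariance claim'' as a separate lemma (proved by noting that in the passive case the adversaries' map $\mathcal{F}^{\vec\mu}_{W^\adv D^\adv}$ reduces to the standard-basis projector, so the permutational invariance argument from the semi-active proof applies verbatim and the adversaries' classical data is independent of $S$), then applies Bayes' rule exactly as you do to obtain $P[S=i\mid W^\adv=a,C=c,S\notin\mathcal{A}]=P[S=i\mid S\notin\mathcal{A}]$ and substitutes into the simplified guessing probability; your explicit discussion of the one-time pad on $m$ and of the subcase $R\in\mathcal{A}$ mirrors what the paper handles by pointing back to the corresponding pieces of the semi-active analysis.
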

The proof of this statement is a special case of the proof of Thm.~\ref{thm:WSsecure}. As before, we use the fact that classical protocols do not leak identities of $S$ and $R$ and the permutational invariance of the resource state {to conclude that the classical information generated during the protocol is independent of who is sender and receiver}. For details see App.~\ref{app:sec:security_analysis}.

\section{Anonymous transmission in a noisy quantum network}\label{sec:performance}
Equipped with the security tools from the previous section, here we analyze the security and performance of Protocol~{1} in a noisy quantum network. We consider a noise model in which each qubit is subjected to the same individual noisy channel. One can think that a trusted source prepared the multipartite state for the network, but each qubit is individually affected by a noise map $\Lambda$ while being transmitted to the nodes. Note that this model can also encompass noise on the local measurements performed on the state. Therefore, in our noisy network,
if $\ket{\tn{W}}_N$ is the perfect $N$-partite W state prepared by a trusted source, then
\begin{align} \label{eq:n-fold_noise}
\omega^\Lambda_N = \Lambda^{\otimes N} (\ketbra{\tn{W}}_N)
\end{align}
is the state distributed to the parties at Step 3 of Protocol~{1}.

\subsection{Security in the presence of noise}

\textbf{Perfect security.} In what follows we will show that our protocol is perfectly secure in the semiactive adversary scenario in the noisy network defined by Eq.~\eqref{eq:n-fold_noise}. We start by defining what it means for a map to preserve permutational invariance.

\begin{definition}[Permutational-invariance preserving map]\label{def:perm_inv_preser_channel}
Let $\pi$ be a permutationally invariant state, such that for all permutations $\Sigma,$
$\pi= \mathcal{V}_\Sigma \de{\pi}$, where $\mathcal{V}_\Sigma $ is a map that performs the permutation $\Sigma$ on the subsystems. A map $\mathcal{E}$ is permutational-invariance preserving if the state after the action of the map $\pi' = \mathcal{E}(\pi)$ is permutationally invariant, i.e., $\pi'= \mathcal{V}_\Sigma \de{\pi'}$.
\end{definition}

Note that the noise channel of our interest, $\Lambda^{\otimes N}$, preserves permutational invariance according to the above definition, due to the tensor structure.

\begin{theorem}\label{thm:WSsecure_Nfold}
The anonymous transmission protocol with the W state, Protocol 1, is sender- and receiver-secure in the semiactive adversary scenario in a noisy network, where noise is defined by Eq.~\eqref{eq:n-fold_noise}. 
\end{theorem}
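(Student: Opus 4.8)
The plan is to derive Theorem~\ref{thm:WSsecure_Nfold} as a corollary of Theorem~\ref{thm:WSsecure}. The key observation is that the proof of Theorem~\ref{thm:WSsecure} never used any property of the state handed out by the trusted source beyond its invariance under permutations of the $N$ nodes, together with the standing assumption that the classical subroutines (Collision Detection, Receiver Notification, Veto, Logical OR) act only on their classical registers and reveal nothing beyond their specified outputs. Hence it suffices to check that the noisy resource $\omega^\Lambda_N$ of Eq.~\eqref{eq:n-fold_noise} is still permutationally invariant, and then to replay the earlier argument with $\ketbra{W}_N$ replaced by $\omega^\Lambda_N$.

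First I would verify the invariance. Since $\ketbra{W}_N$ is permutationally invariant, $\mathcal{V}_\Sigma\de{\ketbra{W}_N} = \ketbra{W}_N$ for every permutation $\Sigma$; and because the noise map has the product form $\Lambda^{\otimes N}$ it commutes with every $\mathcal{V}_\Sigma$, so
\begin{align}
\mathcal{V}_\Sigma\de{\omega^\Lambda_N}
&= \mathcal{V}_\Sigma\de{\Lambda^{\otimes N}\de{\ketbra{W}_N}}
= \Lambda^{\otimes N}\de{\mathcal{V}_\Sigma\de{\ketbra{W}_N}} \nonumber \\
&= \Lambda^{\otimes N}\de{\ketbra{W}_N} = \omega^\Lambda_N ,
\end{align}
which is precisely the statement that $\Lambda^{\otimes N}$ is permutational-invariance preserving in the sense of Definition~\ref{def:perm_inv_preser_channel}. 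In particular $\omega^\Lambda_N$ is invariant under the subgroup of permutations that fix the adversarial set $\mathcal{A}$ pointwise and act only on the honest nodes. From here the argument is exactly that of Theorem~\ref{thm:WSsecure}: for every honest candidate sender $i\notin\mathcal{A}$ one shows that the adversary's end-of-protocol state satisfies $\rho_{W^\adv C|S=i} = \rho_{W^\adv C}$, because the classical data in $C$ carries no dependence on $S$ by the subroutine assumption, and the marginal of $\omega^\Lambda_N$ on the adversarial subsystems is independent of which honest node plays the role of $S$ by the permutational invariance just established. Plugging this into Eq.~\eqref{EQ:sec_active2} and using $\sum_{i} M^i = \mathbb{1}$ and $\Tr[\rho_{W^\adv C}] = 1$ yields $P_{\guess}[S|W^\adv,C,S\notin\mathcal{A}] \leq \max_{i\in[N]} P[S=i|S\notin\mathcal{A}]$, i.e.\ sender-security; receiver-security follows by the same reasoning applied to $R$.

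I expect the main difficulty to be conceptual rather than computational: one must make sure the proof of Theorem~\ref{thm:WSsecure} really is insensitive to the change of resource state. In the noisy network the honest measuring nodes in Step~4 produce a different outcome distribution, and the Veto protocol announces the coarse-grained event ``all $N-2$ outcomes are $0$'' versus ``not all $0$''; one has to argue that this announced bit remains uncorrelated with the identities of $S$ and $R$. This holds precisely because $\omega^\Lambda_N$ is symmetric under permutations of the honest nodes, so the joint distribution of the honest measurement outcomes---and hence of the Veto output---is unchanged when one honest node is swapped for another. Once this is in place, the noisy statement is a genuine specialization of the noiseless one, and the detailed register bookkeeping can be deferred to App.~\ref{app:sec:security_analysis}.
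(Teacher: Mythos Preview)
Your proposal is correct and matches the paper's own argument essentially step for step: observe that $\Lambda^{\otimes N}$ is permutational-invariance preserving (the paper invokes Definition~\ref{def:perm_inv_preser_channel}, you verify it explicitly via commutation with $\mathcal{V}_\Sigma$), then replace $\ketbra{W}_N$ by $\omega^\Lambda_N$ throughout the proof of Theorem~\ref{thm:WSsecure} to obtain $\rho^\Lambda_{W^\adv C|S=i}=\rho^\Lambda_{W^\adv C}$ and bound the guessing probability exactly as before. Your additional remarks about the Veto output being independent of $S$ and $R$ are more detail than the paper provides but are consistent with its appendix argument.
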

\begin{proof}	
According to Def.~\ref{def:perm_inv_preser_channel}, the noise channel $\Lambda^{\otimes N}$ is permutational-invariance preserving. Therefore, the proof of Thm.~\ref{thm:WSsecure_Nfold}  follows exactly the same steps as the proof of Thm.~\ref{thm:WSsecure}, where one replaces the state distributed by the source, $\ketbra{\tn{W}}_N$, with $\omega^\Lambda_N$. Therefore if $\rho^\Lambda_{W^\adv C|S=i}$ is the state of the adversaries at the end of the protocol, given that node $i$ is the sender, we have that $\rho^\Lambda_{W^\adv C|S=i} = \rho^\Lambda_{W^\adv C}$, for all $i \notin \adv$, and 
\begin{align}
\begin{split}
& P_{\guess}[S|A,C, S\notin \mathcal{A}] := \\
& \quad = \max_{\{M^i\}} \sum_{i \in [N]}  P[S=i|S\notin \mathcal{A}] \Tr[M^i \cdot \rho^\Lambda_{W^\adv C|S=i} ]\\
& \quad \leq \max_{i \in [N]} P[S=i|S\notin \mathcal{A}].
\end{split}
\end{align}
The same statement holds for receiver-security.
\end{proof}

\textbf{$\varepsilon$ security.} In a realistic quantum network, it is quite unlikely that one will be able to control the noise channels perfectly and ensure that all qubits are subjected to the action of exactly the same noise channel. Here we would like to analyze what happens in the case when the network noise is slightly perturbed, in the sense that each qubit experiences a slightly different noise.
We say that in the perturbed case, the network noise is such that each individual qubit of the multipartite W state, $\ket{\tn{W}}_N$, is subjected to an action of a channel $\Lambda_i$,
\begin{align}\label{eq:n_fold_pert}
\hat{\omega}^\Lambda_N = \bigotimes_{i=1}^N \Lambda_i (\ketbra{\tn{W}}_N),
\end{align}
where $\parallel \Lambda - \Lambda_i \parallel_1 \leq \varepsilon_i$ for some map $\Lambda$, and $\parallel \cdot \parallel_1$ denotes the induced trace norm \cite{Watrous2018}.

Since each channel is slightly perturbed, the state after the action of the channel, $\hat{\omega}^\Lambda_N$, is no longer perfectly permutationally invariant. Yet, intuitively, since the perturbation is small, the state $\hat{\omega}^\Lambda_N$ is
$\varepsilon$-close to a permutationally invariant state, for some small $\varepsilon$, and, consequently, the protocol should be $\varepsilon$-secure. In the following we show that this intuition is, indeed, true. First, let us formalize the notion of $\varepsilon$ security.

\begin{definition}[$\varepsilon$-sender security]
We say that the anonymous transmission protocol is \textit{$ \varepsilon$-sender-secure} if, given that the sender is not the adversary, the probability of the adversaries guessing the sender is
\begin{align}\label{EQ:sec_active}
P_{\guess}[S|W^\adv,C, S\notin \mathcal{A}] \leq \max_{i \in [N]} P[S=i|S\notin \mathcal{A}] + \varepsilon.
\end{align}
And analogously for $\varepsilon$-receiver security.
\end{definition}

\begin{theorem}\label{thrm:n_fold_pert}
The anonymous transmission protocol with the W state, Protocol 1, is $N\varepsilon_{\max}$-sender-secure in the semiactive adversary scenario when the noise in the network is defined by Eq.~\eqref{eq:n_fold_pert}, i.e.,
\begin{align}
\begin{split}
& P_{\guess}[S|W^\adv,C, S\notin \mathcal{A}] \\
& \quad = \max_{\{M^i\}} \sum_{i \in [N]}  P[S=i|S\notin \mathcal{A}] \Tr[M^i \cdot \hat{\rho}^\Lambda_{W^\adv C|S=i} ]  \\ 
& \quad \leq \max_{i \in [N]} P[S=i|S\notin \mathcal{A}] + N\varepsilon_{\max} ,
\end{split}
\end{align}
where $\hat{\rho}_{W^\adv C|S=i}^\Lambda$ is the state of the adversaries at the end of the protocol, and  $\varepsilon_{\max} = \max_{i \in [N]}\varepsilon_i$, with $\varepsilon_i$ given by Eq.~\eqref{eq:n_fold_pert}.
\end{theorem}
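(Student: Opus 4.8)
The plan is to reduce the perturbed situation to the exactly permutationally invariant one already handled in Thm.~\ref{thm:WSsecure_Nfold}. The idea is that the source state $\hat{\omega}^\Lambda_N$ of Eq.~\eqref{eq:n_fold_pert} is trace-norm close to the idealized state $\omega^\Lambda_N = \Lambda^{\otimes N}(\ketbra{W}_N)$ of Eq.~\eqref{eq:n-fold_noise}, which is permutationally invariant, and that every subsequent step of the protocol --- the (possibly dishonest) joint operations and measurements of the adversaries, the honest parties' measurements, and the classical subroutines, which by assumption act only on the classical registers --- is trace-distance non-increasing, so that the adversaries' final states in the two scenarios remain equally close.

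First I would bound $\| \hat{\omega}^\Lambda_N - \omega^\Lambda_N \|_1$ by a hybrid (telescoping) argument. Define $\omega^{(k)} = (\Lambda_1 \otimes \cdots \otimes \Lambda_k \otimes \Lambda \otimes \cdots \otimes \Lambda)(\ketbra{W}_N)$, so that $\omega^{(0)} = \omega^\Lambda_N$ and $\omega^{(N)} = \hat{\omega}^\Lambda_N$. Consecutive hybrids differ only through the channel acting on qubit $k$, so $\| \omega^{(k)} - \omega^{(k-1)} \|_1 \le \| \Lambda_k - \Lambda \|_1 \le \varepsilon_k$, using that the spectator channels on the remaining qubits do not increase the norm. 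The triangle inequality then gives
\begin{align}
\| \hat{\omega}^\Lambda_N - \omega^\Lambda_N \|_1 \le \sum_{k=1}^N \varepsilon_k \le N\varepsilon_{\max}.
\end{align}

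Next, fix an honest candidate sender $i \notin \adv$. Once the adversaries' strategy is fixed, the map from the state distributed in Step 3, conditioned on $S=i$, to the adversaries' register $W^\adv C$ at the end of the protocol is a single CPTP map $\mathcal{T}_i$; applying it to the two source states and invoking monotonicity of the trace distance,
\begin{align}
\| \hat{\rho}^\Lambda_{W^\adv C|S=i} - \rho^\Lambda_{W^\adv C|S=i} \|_1 \le \| \hat{\omega}^\Lambda_N - \omega^\Lambda_N \|_1 \le N\varepsilon_{\max},
\end{align}
where $\rho^\Lambda_{W^\adv C|S=i} = \mathcal{T}_i(\omega^\Lambda_N)$ is the adversaries' state in the unperturbed noisy network. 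By Thm.~\ref{thm:WSsecure_Nfold} the latter is independent of $i$; write $\sigma := \rho^\Lambda_{W^\adv C}$. Then for any POVM $\{M^i\}$, since $0 \le M^i \le \unit$ we have $\Tr[M^i(\hat{\rho}^\Lambda_{W^\adv C|S=i} - \sigma)] \le N\varepsilon_{\max}$, and therefore
\begin{align}
\begin{split}
& \sum_{i\in[N]} P[S=i|S\notin\adv]\,\Tr[M^i \cdot \hat{\rho}^\Lambda_{W^\adv C|S=i}] \\
& \quad \le \Tr\!\Big[\Big(\sum_{i\in[N]} P[S=i|S\notin\adv]\,M^i\Big)\sigma\Big] + N\varepsilon_{\max} \\
& \quad \le \max_{i\in[N]} P[S=i|S\notin\adv] + N\varepsilon_{\max},
\end{split}
\end{align}
the last line using $\sum_i M^i = \unit$ and $P[S=i|S\notin\adv]\ge 0$, so that $\sum_i P[S=i|S\notin\adv]M^i \le \big(\max_i P[S=i|S\notin\adv]\big)\unit$. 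Maximizing over POVMs gives the stated bound; the argument for receiver-security is identical.

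The step I expect to be most delicate is the hybrid bound: because each $\Lambda_k$ acts on a single qubit of the entangled state $\ket{W}_N$, making each per-hybrid estimate rigorous requires the perturbation bound $\| \Lambda_k - \Lambda \|_1 \le \varepsilon_k$ to be read as a stabilized (diamond-norm type) statement so that it survives tensoring with the spectator subsystems, and one must set up the telescoping so that the inputs of the two channels compared at stage $k$ agree exactly on all the other qubits. Everything else --- monotonicity of the trace distance under the protocol's operations and the POVM manipulation --- is routine.
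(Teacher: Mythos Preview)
Your proposal is correct and follows essentially the same route as the paper: bound the trace distance between the perturbed and unperturbed source states by $N\varepsilon_{\max}$ via telescoping, push this bound through the protocol by CPTP monotonicity, and then combine with Thm.~\ref{thm:WSsecure_Nfold} and the POVM completeness relation. The paper telescopes at the level of the channel norm, $\|\bigotimes_i\Lambda_i-\Lambda^{\otimes N}\|_1\le\sum_i\|\Lambda_i-\Lambda\|_1$, before applying it to $\ketbra{W}_N$, whereas you telescope directly on the states; the content is the same, and your caveat about needing a stabilized (diamond-type) norm for the per-hybrid step is well taken and applies equally to the paper's version of the argument.
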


The idea of the proof is to show that, for all $i \in [N]$, the trace $\Tr[M^i \cdot \hat{\rho}^\Lambda_{W^\adv C|S=i} ]$ is upper-bounded by $\Tr[M^i \cdot \rho^\Lambda_{W^\adv C|S=i} ] + N\varepsilon_{\max}$. Then using the fact that $N\varepsilon_{\max}$ is independent of $i$, the rest of the proof follows from Thm.~\ref{thm:WSsecure_Nfold}. For details see App. \ref{app:sec:security_analysis}.

\subsection{Performance in a noisy network}

In this section we analyze the performance of Protocol~{1} in a noisy quantum network. To do so reliably, we assume honest implementation; i.e., all of the parties follow the protocol. In the honest implementation, given success in the protocol, the anonymous entangled state between $S$ and $R$ after Step 5. is 
\begin{align}
\omega_{SR} = \frac{1}{\mathcal{N}} \Tr_{N-2}\left[\Lambda^{\otimes N} (\ketbra{\tn{W}}_N) \cdot \left(\mathbb{1}_{SR} \otimes|\vec{0}\rangle\!\langle\vec{0}|_{N-2}\right)\right],
\end{align}
where $\ketbra{\tn{W}}_{N}$ is the $N$-partite W state, $|\vec{0}\rangle\!\langle\vec{0}|_{N-2}$ is a projection onto the $\ket{0}$ state of $N-2$ parties and $\mathcal{N}$ is a normalization factor. Note that in the case where no noise is present we recover the maximally entangled state, i.e. $\omega_{SR} = \ketbra{\psi^+}$, where $\ket{\psi^+} = \frac{1}{\sqrt{2}}(\ket{01} + \ket{01})$.

Throughout the rest of the paper, we will be interested in discussing the performance of anonymous transmission protocols under two types of noise:
\begin{enumerate}
\item $\Lambda$ is the dephasing channel
\begin{align}
\Lambda(\rho) = \mathcal{P}_q(\rho) = q\rho + (1-q) Z\rho Z,
\end{align}
where $\rho$ is a single qubit state, $Z$ is the Pauli $Z$ gate, and $q\in[0,1]$ is the noise parameter.
\item $\Lambda$ is the depolarizing channel
\begin{align}
 \Lambda(\rho) = \mathcal{D}_q(\rho) = q\rho + (1-q) \frac{\mathbb{1}}{2},
\end{align}
where $\rho$ is a single qubit state, $ \frac{\mathbb{1}}{2}$ is a maximally mixed single-qubit state, and $q\in[0,1]$ is the noise parameter. 
\end{enumerate}

\vspace{1em}
\textbf{Comparison with the GHZ protocol \cite{Christandl2005}}. In the following we are interested in comparing the performance of our protocol using the W state with the protocol that uses the GHZ state (for reference see \cite{Christandl2005,Brassard2007}). The main differences between our protocol and the protocol presented in \cite{Christandl2005} lie in \textit{(i)} the initial resource state: W in our case and GHZ for \cite{Christandl2005}; \textit{(ii)} the measurement basis: standard basis for our protocol and $X$ basis for \cite{Christandl2005}; \textit{(iii)} the fact that our protocol is probabilistic, whereas the one with the GHZ state continues regardless of the measurement outcome.

For the noise under consideration, all measurement outcomes in the GHZ protocol are equally likely and the resulting states are equivalent up to a local unitary operation. Therefore, without loss of generality, we consider the state between $S$ and $R$ created in this protocol to be
\begin{align}
\begin{split}
\gamma_{SR} & = \frac{1}{\mathcal{N}'} \Tr_{N-2}\Big[\Lambda^{\otimes N} (\ketbra{\text{GHZ}}_N) \\
& \qquad \qquad \qquad \cdot \left(\mathbb{1}_{SR} \otimes \ketbra{\vec{+}}_{N-2}\right)\Big],
\end{split}
\end{align}
where $\ketbra{\text{GHZ}}_N$ is the $N$-partite GHZ state, $\ketbra{\vec{+}}_{N-2}$ is a projection onto the $\ket{+}$ state of $N-2$ honest parties and $\mathcal{N}'$ is a normalization factor. In the case where no noise is present in the network, the ideal state of $S$ and $R$ is the maximally entangled state $\gamma_{SR} = \ketbra{\phi^+}$, with $\ket{\phi^+} = \frac{1}{\sqrt{2}}(\ket{00} + \ket{11})$. Note that this is a different maximally entangled state than in our W state protocol, but both states are equally useful for teleportation.

To compare the performance of the two protocols, we fix the figure of merit to be the \textit{fidelity} of the obtained anonymous entangled (AE) state with the ideal state that is obtained in the protocol when no noise is present, 
\begin{align}
\label{eq:fid_omega} F_{AE}(\omega_{SR}) &= \Tr[\omega_{SR} \cdot \ketbra{\psi^+}] \\
\label{eq:fid_gamma} F_{AE}(\gamma_{SR}) &= \Tr[\gamma_{SR} \cdot \ketbra{\phi^+}] 
\end{align}
where $\omega_{SR}$ and $\gamma_{SR}$ are anonymous entangled states between $S$ and $R$ arising from measuring W and GHZ states subjected to the network noise.

{In what follows we define what it means for an anoymous entangled state to be useful. Before that, let us motivate it twofold. First, not all states are entangled enough to be a resource for teleportation. It has been shown in \citep{Horodecki1999} that any two-qubit entangled state can be used for teleportation if and only if its singlet fidelity exceeds $\frac{1}{2}$. Secondly, note that the quality of a low-fidelity anonymous entanglement could be further improved by performing entanglement distillation \cite{Deutsch1996} -- a protocol which  creates an entangled state with high fidelity out of a few lower-fidelity states. However, entanglement distillation protocols can be carried out only when fidelities of initial states are larger than $\frac{1}{2}$.  
{We remark that performing entanglement distillation without compromising security of anonymous transfer requires support of anonymous two-way classical communication between $S$ and $R$. This can be achieved, for example, by using a classical anonymous broadcast protocol \cite{Broadbent2007}}.}

{We are now ready to define what it means to say that a resource state is useful for anonymous transmission.}

\begin{definition}[Usefulness]\label{def:FAE_useful}
We say that the anonymous entangled state is a \textit{useful} resource for transmission of a quantum message if its fidelity is strictly larger than $\frac{1}{2}$, i.e. $F_{AE}>\frac{1}{2}$. 
Therefore an $N$-partite state is a useful resource state for anonymous transmission if, upon the parties acting honestly, it can generate anonymous entanglement between any two nodes with $F_{AE}>\frac{1}{2}$.
\end{definition}

To evaluate the behavior of the protocols, we calculate the fidelity of anonymous entanglement as a function of the noise parameter $q$ and the number of nodes $N$, for the depolarizing and dephasing channels. {Examples of the performance of the W and GHZ protocols }for $N=\DE{4,10,50}$ are shown in Fig.~\ref{fig:performance}.

\begin{table}[H]
\centering
\bgroup
\def\arraystretch{2.5}
  \begin{tabular}{ccc} 
  & $F_{AE}(\omega_{SR})$ &$F_{AE}(\gamma_{SR})$ \\ \hline
    Dephasing noise $\mathcal{P}_q^{\otimes N}$ & $1-2q(1-q)$ & $\frac{1+(2q-1)^N}{2}$ \\ 
    Depolarizing noise $\mathcal{D}_q^{\otimes N}$& $\frac{(1+q)(N(q-1)^2 + 4q(1+q))}{4(N(1-q)+4q)}$ & $\frac{2q^N+q^2+1}{4}$ 
  \end{tabular}
  \egroup
\end{table}

\begin{figure} [t]
  \centering
  \includegraphics[width=0.85\linewidth]{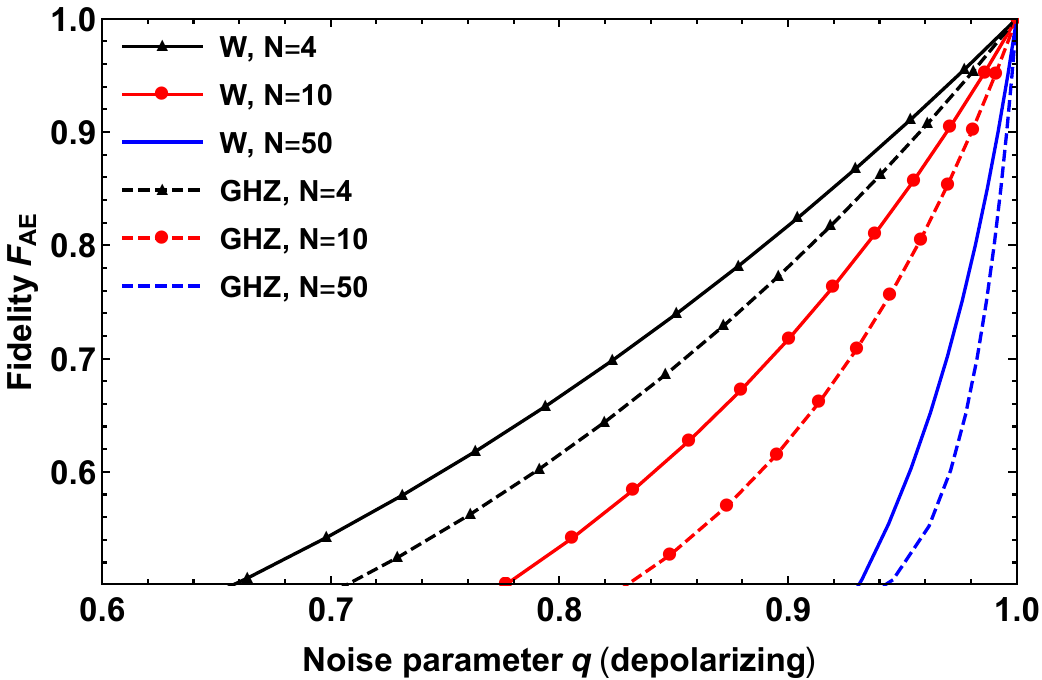}  
\caption{Fidelity of anonymous entanglement as a function of the noise parameter $q$ for depolarizing network noise. Examples for $N=\DE{4,10,50}$.}\label{fig:performance}
\end{figure} 

{We can now ask ourselves which of the states, GHZ or W, tolerates more noise. Note that if one has access to both parameters of the network, noise parameter $q$ and number of nodes $N$, it is easy to determine which of the states would perform better by simply looking at values of $F_{AE}$ calculated from our analytical expressions.}

We start by looking at the dephasing noise. Observe that in this case the fidelity of anonymous entanglement created with the W state $F_{AE}(\omega_{SR})$ is constant in $N$. Specifically, this implies that when fixed dephasing noise is present in the network, the quality of the anonymous link is always the same, regardless of the number of nodes $N$. Moreover, for the dephasing noise, one can observe that $F_{AE}(\omega_{SR})\geq F_{AE}(\gamma_{SR})$ for all $N\geq 2$ and all $q$, which implies that our Protocol~{1} tolerates more noise than the GHZ-based protocol \cite{Christandl2005,Brassard2007}.
  
{When depolarizing noise is present in the network, unlike for the dephasing noise, the fidelity of the anonymous entanglement generated by Protocol~{1} decreases as the number $N$ of parties  increases.
Let us define the noise threshold $q^*$ as the minimum value of noise parameter $q$ for which the anonymous entangled state is still useful in the sense of Def.~\ref{def:FAE_useful}.
One can see that, for small networks (e.g., $N<50$), the threshold $q^*$ is lower for the W state than for the GHZ state $q^*_W < q^*_{GHZ}$, see Fig.~\ref{fig:threshold}, which implies that the $W$ state tolerates more noise in these cases. {However, for $N\geq 182$ one finds that the converse is true, $q^*_W > q^*_{GHZ}$, and therefore the GHZ-based protocol tolerates more noise in this regime.} 
Nevertheless, in App.~\ref{app:sec:realistic_performance} we show that for $N\geq 182$ and larger values of $q$, $q>q^*_W$, we still recover the behavior $F_{AE}(\omega_{SR}) \geq F_{AE}(\gamma_{SR})$. Lastly, we remark 
that the challenge to create a multipartite state scales with the number of parties. Therefore, applications of anonymous transmission of interest in the near future will likely be in the range of 
 $N<50$, in which case Protocol~{1} has proven to be the most noise-tolerant.}

\begin{figure}
  \centering
  \includegraphics[width=0.85\linewidth]{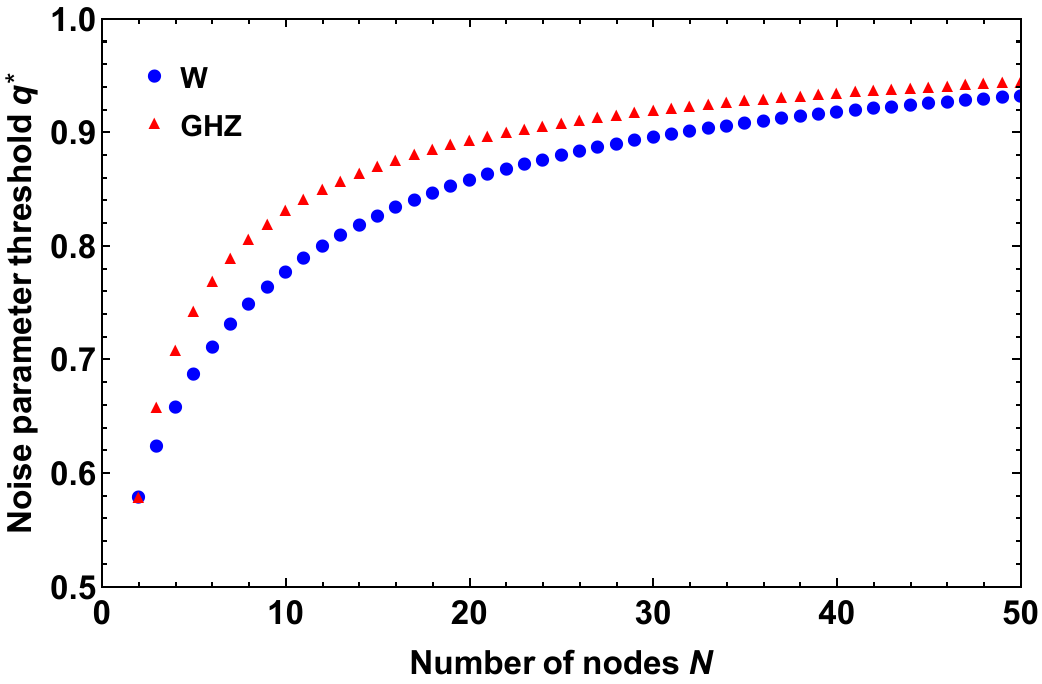}
\caption{Depolarizing parameter thresholds for fidelity of anonymous entanglement $F_{AE} = \frac{1}{2}$.}\label{fig:threshold}   
\end{figure}

Let us also comment on the probability of success of our protocol in the presence of noise. Recall that a round of the protocol only succeeds if in Step 3 the measurement outcome of the  $N-2$ measuring parties is 0. For the dephasing noise the probability of success in our protocol remains $\frac{2}{N}$, which is due to the fact that the noise commutes with the measurement basis. However, for the depolarizing noise the probability of success drops exponentially in $N$. In contrast, for the GHZ state, the outcomes do not need to be post-selected, therefore the protocol \cite{Christandl2005} remains deterministic.

\vspace{1em}
\textbf{Comparison with the relay protocol \cite{Yang2016}.} We now compare our protocol to a scheme proposed in Ref.~\cite{Yang2016}, {which only requires the creation of local Bell pairs and therefore could potentially offer an advantage for a quantum network  implementation}. 
{The main idea of the relay protocol \cite{Yang2016} is
to locally prepare and transmit Bell pairs in order to create a four-partite GHZ state, which will then be turned into anonymous entanglement.}

In the protocol proposed in Ref. \cite{Yang2016}, the nodes are consecutively ordered and each node locally prepares a Bell pair. The first node sends half of her Bell pair to the second node. The second node performs entanglement swapping with a half of her own Bell pair and sends the other half of the state to the next node. This relay continues until the last $N$-th node is reached. $S$ and $R$, however, perform an additional CNOT operation, where they locally entangle the state received from another node with an additional qubit initiated in $\ket{0}$. At the end of this relay
a four-partite GHZ state is created among $S$, $R$, the first and the last node. Finally, anonymous entanglement is established after the first and  the last node perform a measurement.

We explore a scenario for $N = 6$ nodes, assuming that the network is such that quantum channels between parties are depolarizing channels $\Lambda = \mathcal{D}_q$; i.e., whenever a qubit is sent from one party to another it is subject to depolarization. We calculate fidelities of anonymous entanglement for different locations of the $S$ and $R$ in the network. Our results are summarized in App.~\ref{app:sec:realistic_performance}. The numerical evidence shows that in the presence of the depolarizing noise in the network, the fidelity of anonymous entanglement is different depending on the ordering of $S$ and $R$ in the network. 
Note that this does not necessarily imply that the security of the protocol is broken, in the sense that nodes can learn the identity of $S$ and $R$. However, we can see that the performance of the protocol strongly depends on who is sender and receiver, which is not a desirable feature for the anonymous transmission task.

With this in mind, we define the usefulness of the anonymous entanglement created with the relay scheme as the worst case fidelity achieved by the scheme. This is practical if one wants to make sure that the scheme achieves at least a certain fidelity threshold. We then compare the behavior of the relay scheme with the behavior of Protocol {1} in the presence of depolarizing noise. In Fig.~\ref{fig:relay} one can see that in the presence of the depolarizing noise in the network the relay protocol achieves lower fidelity than both the GHZ and the W state protocols.

\begin{figure}[t]
\centering
\includegraphics[width=0.85\linewidth]{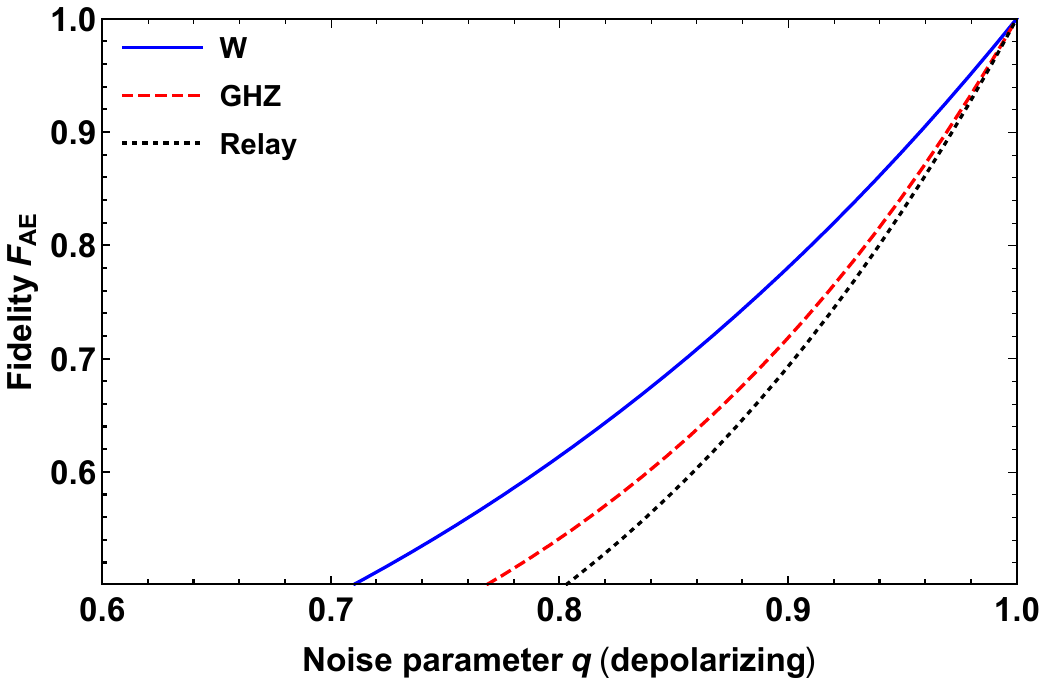}
\caption{Comparison of the fidelity of anonymous entanglement $F_{AE}$ for the W state protocol (Protocol {1}), the GHZ protocol \cite{Christandl2005} and the relay scheme \cite{Yang2016} for $N=6$ nodes.}\label{fig:relay}
\end{figure}

\vspace{1em}
\textbf{Nonresponsive nodes.} {Finally, let us consider the scenario where some of the nodes, that are neither $S$ nor $R$, stop responding. This can happen, for example, due to particle losses in the multipartite state. Note that if $S$ or $R$ lose their particle the teleportation cannot be carried out and, therefore, the protocol is not correct.} 

{Let us consider that the resource state prepared by the source suffers from the action of a noise channel where particles might get lost. Then, with some probability $k$ out of $N$ nodes experience particle loss. Here we ask the question of how many particles losses can be tolerated in an anonymous transmission protocol. Say that a protocol tolerates $k'$ particle losses. After the distribution of the state, if $k$ particles are lost: \textit{(i)} the nodes abort the protocol if $k>k'$, or \textit{(ii)} the remaining $N-k$ parties proceed with the protocol if $k\leq k'$.}

It is known that the entanglement of the GHZ state is not robust to particle losses; \textit{i.e.}, if one particle is lost the remaining $N-1$ parties are left with a separable state.
On the other hand, if the W state is subjected to $N-2$ particle losses the remaining bipartite state is still entangled. In fact, the W state is the most robust to particle losses among all $N$ qubit states \cite{Koashi2000}.
Motivated by this property of the W state, we show that Protocol {1} can tolerate one nonresponsive node. Observe that the $N$-partite W state has the following form after tracing out $k$ out of $N$ parties,
\begin{align}\label{eq:reduced_W}
\Tr_{k}\ketbra{\tn{W}}_N = \frac{N-k}{N}\ketbra{\tn{W}}_{N-k} + \frac{k}{N} |\vec{0}\rangle \! \langle\vec{0}|_{N-k}
\end{align}
where $\ketbra{\tn{W}}_{N-k}$ is the W state of $N-k$ parties.

{In the following theorem we show that Protocol~1 tolerates one particle loss.}

\begin{theorem}
Protocol~1 tolerates one nonresponsive node $i\in[N]\setminus\DE{S,R}$ to produce useful anonymous entanglement, regardless of the number of parties. 
\end{theorem}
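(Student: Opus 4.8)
The plan is to follow the state that $S$ and $R$ end up holding when one node $i\in[N]\setminus\DE{S,R}$ fails to respond and the remaining $N-1$ parties simply execute Protocol~1 among themselves, with $N-3$ of them measuring in the $\DE{\ket{0},\ket{1}}$ basis. First I would invoke Eq.~\eqref{eq:reduced_W} with $k=1$ to write the state received by the $N-1$ responsive nodes as the mixture $\frac{N-1}{N}\ketbra{W}_{N-1}+\frac{1}{N}\ketbra{\vec{0}}_{N-1}$. Note that this state is still permutationally invariant on the $N-1$ remaining subsystems, so the anonymity arguments of Thm.~\ref{thm:WSsecure} (and Thm.~\ref{thm:WSsecure_Nfold} in the noisy case) carry over unchanged; the only thing left to establish is that the heralded anonymous entanglement is still \emph{useful} in the sense of Def.~\ref{def:FAE_useful}.

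Next I would compute the (unnormalised) state of $S,R$ conditioned on the success event of Step~5 (the Veto output being $0$, i.e. all $N-3$ announced outcomes equal $0$). On the $\ketbra{W}_{N-1}$ branch, the same computation as in Lem.~\ref{lemmaprob}, now with $N-1$ parties, gives that this event has probability $\frac{2}{N-1}$ and leaves $S,R$ in $\ketbra{\psi^+}$; on the $\ketbra{\vec{0}}_{N-1}$ branch the event occurs with certainty and leaves $S,R$ in $\ketbra{00}$. Weighting by the branch probabilities, the unnormalised conditional $SR$-state is $\frac{N-1}{N}\cdot\frac{2}{N-1}\ketbra{\psi^+}+\frac{1}{N}\ketbra{00}=\frac{2}{N}\ketbra{\psi^+}+\frac{1}{N}\ketbra{00}$, which has trace $\frac{3}{N}$, so after renormalising
\begin{align}
\omega_{SR}=\tfrac{2}{3}\ketbra{\psi^+}+\tfrac{1}{3}\ketbra{00}.
\end{align}
Since $\ket{00}$ is orthogonal to $\ket{\psi^+}$, the fidelity is $F_{AE}(\omega_{SR})=\Tr[\omega_{SR}\cdot\ketbra{\psi^+}]=\tfrac{2}{3}>\tfrac{1}{2}$, and crucially it is independent of $N$. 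Hence the anonymous entanglement is useful and the teleportation step -- or, first, an entanglement-distillation step as discussed before Def.~\ref{def:FAE_useful} -- can still be carried out, which proves the theorem.

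I would close by remarking that this is tight in the number of tolerated losses: redoing the same calculation with $k=2$ via Eq.~\eqref{eq:reduced_W} yields the unnormalised state $\frac{2}{N}\ketbra{\psi^+}+\frac{2}{N}\ketbra{00}$, hence $\omega_{SR}=\tfrac{1}{2}\ketbra{\psi^+}+\tfrac{1}{2}\ketbra{00}$ and $F_{AE}=\tfrac{1}{2}$, which violates the strict inequality of Def.~\ref{def:FAE_useful}; this is exactly why the statement is restricted to a single non-responsive node. The only point needing a little care is the conditioning: the success post-selection must be applied to the full mixture and only then renormalised, and one must correctly identify the $\ketbra{\vec{0}}_{N-1}$ component as contributing a $\ketbra{00}$ term on $SR$ with conditional probability one. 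Beyond that bookkeeping, the argument is the short computation above.
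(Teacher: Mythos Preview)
Your proposal is correct and follows essentially the same route as the paper: you invoke Eq.~\eqref{eq:reduced_W}, project the remaining $N-3$ parties onto $\ket{\vec{0}}$, renormalise to obtain $\tilde{\omega}_{SR}=\tfrac{2}{3}\ketbra{\psi^+}+\tfrac{1}{3}\ketbra{00}$ with $F_{AE}=\tfrac{2}{3}>\tfrac{1}{2}$, and appeal to permutational invariance for security---exactly the paper's argument, which parameterises by $k$ and derives $F_{AE}=\tfrac{2}{2+k}$ before specialising. Your tightness remark for $k=2$ also matches the paper's reasoning that $\tfrac{2}{2+k}>\tfrac{1}{2}$ forces $k<2$.
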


\begin{proof} {The proof of the above theorem involves two steps. We first show the correctness of Protocol~{1} when one of the nodes stopped responding, and then show that the created entangled link between $S$ and $R$ is in fact anonymous, i.e. that the security is preserved.}

Let us look at the correctness. The measurement of the state \eqref{eq:reduced_W} in the standard basis and after obtaining all 0 outcomes on $N-k-2$ parties yields a normalized state
\begin{align}\label{eq:Wstate_lossy}
\tilde{\omega}_{SR} = \frac{2}{2+k} \ketbra{\psi^+} + \frac{k}{2+k} \ketbra{00}
\end{align}
which has entanglement fidelity $F_{AE}(\tilde{\omega}_{SR}) = \frac{2}{2+k}$. By Def. \ref{def:FAE_useful} the state $\tilde{\omega}_{SR}$ is useful for anonymous transmission if $\frac{2}{2+k} > \frac{1}{2}$ which implies $k<2$. This yields the desired result. 

{To show that the created entanglement is anonymous, observe that when one of the nodes stops responding the resource state is the state from Eq.~\eqref{eq:reduced_W} with $k=1$. This state is invariant under permutations of nodes and, therefore, we can treat it as a new resource state. Then the security proof follows the same pattern as the proof of Theorem \ref{thm:WSsecure}. }
\end{proof}

For completeness, in App.~\ref{app:sec:realistic_performance} we provide analytical expressions for the fidelity of anonymous entanglement when the W state is subjected to one particle loss, as well as dephasing and depolarizing noise. Fig.~\ref{fig:performance_loss} shows the comparison of anonymous entanglement fidelity of Protocol~{1} under depolarizing noise without particle loss, $F_{AE}({\omega}_{SR})$, and when one particle is lost, $F_{AE}(\tilde{\omega}_{SR})$, for  $N = \DE{4,10,50}$ nodes.
Note that with the growing number of nodes the fidelity of anonymous entanglement in the lossy case approaches the one with no-loss. Indeed, the larger $N$ the smaller the admixture of the $|\vec{0}\rangle \! \langle\vec{0}|_{N-1}$ term in Eq.~\eqref{eq:reduced_W}, and so, with growing $N$ the fidelity is less affected by the loss of a particle. On the other hand, for a larger number of nodes more than one particle loss is more likely to occur. Therefore, the probability that the protocol aborts also increases with the number of nodes.

Lastly, we point out that when one particle is lost in the protocol of Ref. \cite{Yang2016}, the relay cannot be completed. Therefore, much like the GHZ protocol, the relay protocol also cannot be used to create anonymous entanglement whenever one of the nodes is not responsive.

\begin{figure} [t]
\centering
  \centering
  \includegraphics[width=0.85\linewidth]{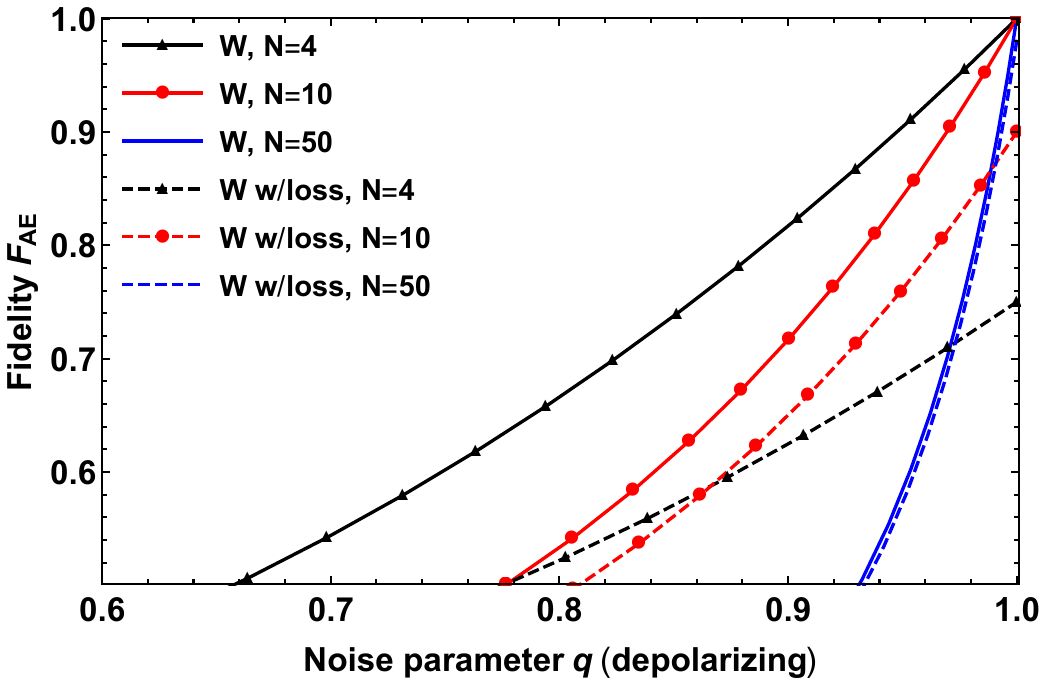}  
\caption{Fidelity of anonymous entanglement as a function of the noise parameter $q$ for depolarizing network noise when the resource W state is subjected to one particle loss. Examples for $N=\DE{4,10,50}$.}\label{fig:performance_loss}
\end{figure} 
\vspace{1em}

\section{Outlook}

We presented a protocol for quantum anonymous transmission using the W state, and proved its security in the semiactive adversary scenario, i.e. when the adversary is active and the source of a quantum state is trusted. 
Moreover, 
we analyzed the behavior of our protocol under the action of common noise models that occur in a realistic quantum network.
\vl{An important question is whether our security proof can be extended to the case where the source might be corrupted, i.e. the fully active adversary scenario. Note that to achieve full security in the noiseless case for the GHZ protocol, Refs. \cite{Brassard2007,Pappa2012} introduced a certification step of the resource state shared by the trusted parties. We remark that for the noiseless W state protocol, it may be possible to achieve full security in a similar way by employing self-testing techniques \cite{Supic2017,Fadel2017}. The problem of certifying the resource state in the presence of noise in the network remains an open question.}

{We have also analyzed the security of our protocol when each qubit suffers the action of a noise channel with slightly different parameters. This bound, however, may not be tight, so another interesting question is whether the security proof can be improved and a stronger bound can be derived  for this case. }

Finally, we have seen that in many instances our W-state based protocol outperforms the GHZ-state and Bell-pair based protocols. For the values of parameters $N$ and $q$, where all the protocols produce useful anonymous entanglement, we remark that a more refined comparison of their performance should take into account the generation rates and resources required to produce the states in every particular experimental setup.

\section{Acknowledgments}

We would like to thank J. Ribeiro, V. Caprara Vivoli, A. Dahlberg, F. Rozp\k{e}dek, I. Kerenidis and E. Diamanti for valuable discussions and insights. We also thank K.  Chakraborty, B. Dirkse, M. Steudtner and K. Goodenough for feedback on the manuscript. This work was supported by STW Netherlands, NWO VIDI, ERC Starting Grant and NWO Zwaartekracht QSC.


%

\newpage
\appendix
\begin{widetext}

\section{Security}\label{sec:security}
\subsection{Classical subroutines}

Our anonymous transmission protocol, Protocol {1}, is built on a few classical subroutines. As mentioned, in Ref. \cite{Broadbent2007}, protocols for implementing these classical subroutines were proposed. Here we list the protocols which we will use as building blocks of our anonymous transmission protocol:

\begin{theorem}[collision detection \cite{Broadbent2007}]
There exists an information-theoretically secure collision detection protocol that takes as input the classical register $Cd_\inn$ of all the participants, $Cd_\inn^i=1$ if node $i$ wishes to be a sender and  $Cd_\inn^i=0$ otherwise, and outputs $Cd_\out=0$ if only one register wants to be the sender and  $Cd_\out=1$ otherwise.
\end{theorem}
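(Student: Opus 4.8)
\medskip
\noindent\emph{Idea of the proof.} The plan is to establish the statement constructively: we exhibit a protocol that realizes the Collision Detection functionality using only pairwise authenticated private channels, a broadcast channel, and local randomness, and then verify separately the two properties the theorem asserts -- \emph{correctness}, i.e. the announced bit satisfies $Cd_\out = 0$ exactly when $\sum_i Cd_\inn^i = 1$, and \emph{information-theoretic privacy}, i.e. the joint view of any coalition of corrupted nodes can be reproduced by a simulator that is given only the corrupted nodes' own inputs $\{Cd_\inn^i\}_{i \in T}$ and the output $Cd_\out$, and in particular never learns which honest node (if any) set its flag to $1$.

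The core building block is an anonymous \emph{parity} subprotocol: $N$ parties holding bits $x_1,\dots,x_N$ compute $\bigoplus_i x_i$ on the broadcast channel while every individual broadcast is uniformly random. This is the standard dining-cryptographers construction -- each pair $\{i,j\}$ pre-shares a uniform bit $r_{ij} = r_{ji}$, node $i$ broadcasts $b_i = x_i \oplus \bigoplus_{j \neq i} r_{ij}$, and $\bigoplus_i b_i = \bigoplus_i x_i$ since each $r_{ij}$ appears twice and cancels. I would first check that, when at least two nodes are honest, the mask on any honest broadcast $b_i$ contains some $r_{ij}$ with $j$ honest, so $b_i$ is uniform and independent of everything the coalition sees, and that the only linear combination of honest broadcasts the coalition can reconstruct is $\bigoplus_{i \notin T} x_i$, which is already determined by the final value $\bigoplus_i x_i$ together with the coalition's own inputs; the case of a single honest node is trivial, since then that node's flag is reconstructible from the output and the other inputs. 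Hence the parity subprotocol leaks exactly its output and nothing more.

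Next I would assemble Collision Detection from a bounded number of parity calls on \emph{randomized} inputs, following Broadbent--Tapp. Writing $k = \sum_i Cd_\inn^i$, an anonymous-OR (veto) test -- realized by a few repetitions of the parity protocol applied to each node's flag times a fresh local random bit -- decides whether $k \geq 1$; if $k = 0$ the protocol outputs $Cd_\out = 1$. To separate $k = 1$ from $k \geq 2$, one runs further randomized parity rounds in which each would-be sender injects an independent random bit, and the honest-majority-free analysis of the parity primitive shows that, conditioned on $\{k=1\}$, the transcript of these rounds is distributed exactly as it would be for a single lone sender, whereas for $k \geq 2$ a collision is detected except with probability that halves per round. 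A short case analysis on $k \in \{0, 1, \geq 2\}$ then confirms the output function, and taking enough rounds drives the error to zero (or one invokes the perfectly-correct variant).

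Finally, security: the entire transcript is by construction a deterministic function of the outputs of the individual parity subprotocols and of fresh local randomness, and each such subprotocol, by the second step, discloses to the coalition only the honest-side parity of that round's inputs. A simulator holding $\{Cd_\inn^i\}_{i \in T}$ and $Cd_\out$ can therefore sample all intermediate announcements with the correct distribution, since conditioning on the final answer pins down exactly the relevant honest-side parities and nothing else; composing the per-round simulators sequentially yields a simulator for the full view. I expect the main obstacle to be precisely this last point applied to the $k = 1$ versus $k \geq 2$ rounds: the randomized reduction must be designed so that the intermediate parities never reveal the multiplicity $k$ beyond the single bit ``$k = 1$ or not'', i.e. so that the coalition's view for $k = 1$ coincides with its view for a lone honest sender -- this is the delicate part of the Broadbent--Tapp argument, and the rest is routine bookkeeping on top of it.
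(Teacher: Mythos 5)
This statement is not proved anywhere in the paper: it is a black-box import of a result of Broadbent and Tapp \cite{Broadbent2007}, restated in the appendix purely so that the anonymous-transmission protocol can invoke it (and the paper is explicit that even its security against a \emph{quantum} adversary is taken as a separate assumption rather than proved). So there is no proof of the authors' to compare yours against; what can be assessed is whether your blind reconstruction would actually establish the theorem. Your overall architecture is the right one and matches the known construction: a dining-cryptographers-style anonymous parity primitive over pairwise shared randomness and a broadcast channel, composed into randomized repetitions, with privacy argued by a round-by-round simulator that is given only the coalition's inputs and the honest-side parities. Your analysis of the parity primitive itself (uniformity of each honest broadcast when at least two nodes are honest, and the fact that the coalition can reconstruct only $\bigoplus_{i\notin T}x_i$) is correct and is indeed the workhorse of the Broadbent--Tapp proofs.

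The genuine gap is in the step that separates $k=1$ from $k\geq 2$, which you yourself flag as the delicate part but then do not actually supply. As written --- ``each would-be sender injects an independent random bit'' into further parity rounds, and ``a collision is detected except with probability that halves per round'' --- the argument fails: the XOR of $k\geq 2$ i.i.d.\ uniform bits is distributed identically to the XOR of a single uniform bit, so the transcript of such rounds carries no information distinguishing one sender from several, and no number of repetitions helps. The actual Broadbent--Tapp collision-detection protocol uses a more structured randomization (correlated inputs across paired parity invocations) precisely to make a multiplicity $k\geq 2$ produce a detectable inconsistency while keeping the $k=1$ transcript simulatable from the output alone; without specifying that mechanism, both the correctness claim ($Cd_\out=1$ whenever $k\geq 2$) and the simulator for those rounds are unsupported. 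The veto-based test for $k\geq 1$ and the $k=0$ case are fine, but the theorem's content lives exactly in the part you left open, so the proposal as it stands does not prove the statement.
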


\begin{theorem}[receiver notification \cite{Broadbent2007}]
There exists an information-theoretically secure  receiver notification protocol that takes as input the classical register $Rn_\inn$ of the participants and outputs $Rn_\out$, where $Rn_\out^{R}=1$ for the receiver, and all the other parties get output 0.
\end{theorem}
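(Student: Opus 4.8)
The plan is to exhibit the Receiver Notification protocol as $N$ parallel invocations of an \emph{anonymous private-output parity} subroutine, each built from the assumed resources (pairwise authenticated private channels plus a broadcast channel), and then to check the claimed input/output behaviour and the information-theoretic security. This is the construction of Ref.~\cite{Broadbent2007}, which we adopt; below I sketch how I would reconstruct and verify it.

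First I would recall the dining-cryptographers parity primitive. Using the private channels, every unordered pair $\{\ell,m\}$ of nodes establishes a uniform shared bit $r_{\ell m}$; node $\ell$ with private input $x_\ell$ broadcasts $a_\ell = x_\ell \oplus \bigoplus_{m\neq\ell} r_{\ell m}$, and each node outputs $\bigoplus_\ell a_\ell = \bigoplus_\ell x_\ell$. The two facts I need are: the broadcast transcript determines only the total parity of the inputs — any coalition missing at least one honest--honest key bit cannot invert an individual $a_\ell$ — and if exactly one node inputs $1$ the output is $1$. Next I would upgrade this to deliver the parity \emph{privately} to a single designated node $k$: node $k$ runs the parity primitive with effective input $x_k\oplus s_k$ for a fresh uniform mask $s_k$ she keeps secret, so the public result is $s_k\oplus\bigoplus_\ell x_\ell$, uniform and useless to everyone ignorant of $s_k$, while node $k$ recovers $\bigoplus_\ell x_\ell$ by un-masking.

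Now run $N$ such instances in parallel, instance $k$ having node $k$ as its private recipient. The register $Rn_\inn$ is used so that the (unique) sender $S$ inputs $1$ in the instance indexed by her intended receiver $R$ and $0$ in every other instance, while all other nodes input $0$ everywhere. Then instance $k$ outputs $1$ to node $k$ if and only if $k=R$; setting $Rn_\out^{k}$ to that bit gives exactly the required behaviour, $Rn_\out^{R}=1$ and $Rn_\out^{k}=0$ for $k\neq R$. For security I would build a simulator: the view of any coalition $\mathcal{A}$ with $|\mathcal{A}|\le N-1$ consists, per instance, of the broadcasts $\{a_\ell\}_\ell$ and the keys incident to $\mathcal{A}$, and conditioned on $\mathcal{A}$'s own inputs and outputs this is statistically a list of uniform bits subject only to a known XOR constraint (the honest--honest key bit, and for an honest recipient the mask $s_k$, act as one-time pads). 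Hence the transcript is independent of which honest node is $R$ (receiver-security when $R\notin\mathcal{A}$), and an honest recipient only ever learns the single parity bit of her own instance, never the pre-image of an individual broadcast, so she cannot tell which node supplied the $1$ (sender-security when $S\notin\mathcal{A}$, including against a dishonest $R\in\mathcal{A}$).

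The main obstacle I expect is precisely this last simulation argument: one must check that the per-instance masks $s_k$ and all pairwise keys are freshly and independently sampled, so that a malicious recipient — or a coalition — cannot correlate broadcasts across the $N$ parallel instances (or across the other classical subroutines of Protocol~1) to localise $R$ or $S$ beyond the prior. I would also note the boundary cases: if all nodes but one are corrupted the bound is trivial, and a dishonest ``sender'' can input $1$ in several instances or none, notifying several or no nodes — this harms only correctness, not anonymity — which is why the theorem is phrased for honest inputs.
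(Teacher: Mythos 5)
You should first be aware that the paper does not prove this statement at all: it is imported verbatim from Ref.~\cite{Broadbent2007} as an external result, and the surrounding text explicitly treats its security as an \emph{assumption} (cf.\ Assumption~3 in the appendix, and the caveat that security against a \emph{quantum} adversary ``was not analyzed'' and is simply assumed). So there is no in-paper proof to match your argument against; the relevant comparison is with the construction of Broadbent and Tapp.

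Against that benchmark your reconstruction is essentially the right one: $N$ parallel dining-cryptographers parity computations, one per candidate recipient, with the output delivered privately to node $k$ via a one-time-pad mask $s_k$, and the standard simulation argument that the coalition's view is uniform subject to a single known XOR constraint whenever at least one honest--honest key survives. Two points of divergence from the actual protocol of \cite{Broadbent2007} are worth flagging. First, their Notification protocol computes a logical \emph{OR} per instance, realized by repeating the parity primitive with randomized inputs, rather than a bare parity; this matters for robustness when several (possibly dishonest) parties input $1$ in the same instance, and it means the protocol is only statistically correct (failure probability exponentially small in the number of repetitions), not perfectly correct as your single-parity version would suggest. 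You partially acknowledge this in your closing remarks, but the theorem as used in Protocol~1 does rely on the OR semantics of the classical subroutines. Second, your simulator argument establishes security against a \emph{classical} coalition; the paper needs (and explicitly assumes rather than proves) that the same protocols leak nothing to adversaries holding quantum side information, which your dining-cryptographers analysis does not by itself deliver. Neither point is a flaw in your reasoning so much as a boundary of what the cited theorem actually guarantees versus what the paper assumes of it.
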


\begin{theorem}[veto \cite{Broadbent2007}]
There exists an information-theoretically secure  veto protocol that takes as input the classical register $O_\inn$ of the parties and outputs $O_{\out}=0$ if all the parties input 0, $O_\inn=\vec{0}$, and  $O_{\out}=1$ otherwise.
\end{theorem}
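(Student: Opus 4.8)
The plan is to recall and re-derive the protocol of Ref.~\cite{Broadbent2007}: exhibit a concrete Veto protocol and argue that it realizes the stated functionality with perfect privacy against any coalition. First I would isolate the real building block, an anonymous \emph{parity} subprotocol that, on private input bits $b_1,\dots,b_N$, publicly announces $\bigoplus_i b_i$ and leaks nothing else. I would implement it in dining-cryptographers style: for each unordered pair $\{i,j\}$ of nodes, parties $i$ and $j$ use their pairwise authenticated private channel to agree on a uniform shared bit $r_{\{i,j\}}$; each party $i$ then broadcasts $m_i = b_i \oplus \bigoplus_{j\neq i} r_{\{i,j\}}$. Since every $r_{\{i,j\}}$ appears in exactly $m_i$ and $m_j$, the broadcast sum $\bigoplus_i m_i$ equals $\bigoplus_i b_i$, which everyone reads off the broadcast channel.

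Next I would build Veto from the parity subprotocol: repeat a round $k$ times, where in each round every party feeds the parity subprotocol the bit $b_i := 0$ if $O_\inn^i = 0$ and a fresh uniform bit if $O_\inn^i = 1$; output $O_\out = 0$ if all $k$ announced parities equal $0$, and $O_\out = 1$ otherwise. Correctness in one direction is immediate: $O_\inn = \vec{0}$ forces every parity to $0$, so $O_\out = 0$. For the other direction, if some $O_\inn^i = 1$ then each round's parity is an XOR containing a fresh uniform bit, hence uniform and independent across rounds, so the protocol erroneously outputs $0$ with probability only $2^{-k}$; taking $k$ large (or in the exact idealized formulation) this gives $O_\out = 1$ whenever somebody vetoes.

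The core step is the privacy analysis, which I would phrase as a one-round lemma and then lift to the whole protocol. For any coalition $\adv \subseteq [N]$, I would show that, conditioned on the coalition's complete view of a run of the parity subprotocol --- its own inputs, the shared bits $r_{\{i,j\}}$ touching a corrupted index, and all broadcasts --- the honest broadcasts $(m_i)_{i\notin\adv}$ are uniformly random subject only to $\bigoplus_{i\notin\adv} m_i = \bigoplus_{i\notin\adv} b_i =: p_H$. The reason is that the bits $r_{\{i,j\}}$ with $i,j$ both honest are uniform and hidden from $\adv$, and, the complete graph on the honest nodes being connected, they act as independent one-time pads realizing any honest broadcast vector with the prescribed XOR. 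Hence the parity subprotocol leaks exactly $p_H$. Lifting this: if no honest party vetoes then $p_H = 0$ every round regardless of corrupted inputs, and if some honest party vetoes then each round's $p_H$ is a fresh uniform bit, independent of which and how many honest parties vetoed. Either way I would exhibit a simulator that, given only $O_\out$ and the coalition's own inputs, reproduces the exact transcript distribution --- all-zero parities if $O_\out = 0$; i.i.d.\ uniform parities conditioned on not all being zero if $O_\out = 1$; honest broadcasts filled in uniformly subject to each round's parity --- which is the desired perfect information-theoretic privacy against an arbitrary number of corrupted parties.

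Finally I would handle active deviations, which is also where I expect the only genuine difficulty. A corrupted party can broadcast an $m_i$ inconsistent with its shared bits; this is undetectable and shifts the announced parity, but it is equivalent to that party changing its effective input $b_i$, i.e.\ choosing whether to veto --- which the functionality already allows. What a coalition cannot do is force a round's parity to $0$ when an honest party has vetoed, because that parity contains a uniform bit unknown to the coalition at broadcast time, so $O_\out = 1$ whenever an honest party vetoes. The hard part is making this last statement rigorous against a \emph{rushing} adversary on the broadcast channel: I would either invoke the standard simultaneous-broadcast assumption as in \cite{Broadbent2007}, or else settle for the still-sufficient guarantee --- it is all that the anonymity proof of Sec.~\ref{sec:security_analysis} uses --- that a malicious coalition can at worst deny service (force $O_\out = 1$, or an abort) but never learns anything about the honest inputs beyond $O_\out$. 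The privacy half of the statement, by contrast, holds unconditionally.
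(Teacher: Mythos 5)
The paper does not prove this statement at all: it is imported verbatim as a cited result from Ref.~\cite{Broadbent2007}, and the only thing the paper itself adds is the \emph{assumption} (made explicit in App.~\ref{app:sec:states_and_registers}) that the protocol acts only on the classical registers $O_\inn$, $O_\out$ and remains private even against a quantum adversary. Your proposal instead reconstructs the underlying construction, and the reconstruction is essentially the right one --- it is the dining-cryptographers-style anonymous parity protocol composed $k$ times with fresh random inputs for vetoers, which is indeed how Broadbent and Tapp build Veto. Your privacy argument (honest pads on the complete graph of honest pairs make the honest broadcasts uniform subject to their XOR, hence the transcript is simulatable from $O_\out$ and the coalition's inputs, independently of which or how many honest parties veto) is the correct key lemma. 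Two caveats are worth flagging. First, the protocol you exhibit realizes the stated functionality only up to error $2^{-k}$ (a lone vetoer's random bits may all come up $0$), so the theorem as literally stated is an idealization; the paper silently absorbs this. Second, your closing claim that an active coalition ``can at worst force $O_\out=1$'' is not quite right without the simultaneous-broadcast fix: a rushing adversary who broadcasts last can cancel the honest parity and force $O_\out=0$ even when an honest party vetoed. This breaks correctness of Veto, not privacy, and the paper explicitly concedes that correctness of Protocol~1 is not robust to malicious parties, so nothing downstream is affected --- but the sentence as written overstates the guarantee. The privacy half, which is all the security proof of Sec.~\ref{sec:security_analysis} actually uses, is sound as you argue it.
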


\begin{theorem}[logical OR \cite{Broadbent2007}]
There exists an information-theoretically secure  logical OR protocol that takes as input the classical register $T_\inn$  and publicly outputs $T_{out}=\oplus_{i=1}^N T_\inn^{i}$.
\end{theorem}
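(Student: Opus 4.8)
The plan is to exhibit a concrete protocol based on additive secret sharing over $\mathbb{F}_2$ and then establish its correctness and information-theoretic security by a simulation argument. The protocol has three phases. In the \emph{sharing phase}, each party $i$ picks uniformly random bits $r_{i,1},\dots,r_{i,N}$ subject to the single constraint $\bigoplus_{k=1}^N r_{i,k}=T_\inn^i$, and sends $r_{i,k}$ to party $k$ over the pairwise authenticated private channel. In the \emph{folding phase}, each party $k$ computes the partial parity $s_k=\bigoplus_{i=1}^N r_{i,k}$ of all shares it received and broadcasts $s_k$. In the \emph{output phase}, every party outputs $T_{out}=\bigoplus_{k=1}^N s_k$. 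Correctness is then immediate, since $\bigoplus_k s_k=\bigoplus_k\bigoplus_i r_{i,k}=\bigoplus_i\bigoplus_k r_{i,k}=\bigoplus_i T_\inn^i$, and the broadcast channel guarantees that all honest parties see the same $s_k$ and hence agree on the output.

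For security, let $\mathcal{A}\subsetneq[N]$ be the (non-adaptively chosen) coalition of corrupted parties; if $\mathcal{A}=[N]$ the statement is vacuous. The total view of $\mathcal{A}$ consists of their own inputs $\{T_\inn^i:i\in\mathcal{A}\}$, their own randomness $\{r_{i,k}:i\in\mathcal{A}\}$, the shares received from honest parties $\{r_{i,k}:i\notin\mathcal{A},\,k\in\mathcal{A}\}$, and all broadcasts $\{s_k:k\in[N]\}$. I would build a simulator that, given only $\{T_\inn^i:i\in\mathcal{A}\}$ and $T_{out}$, samples the corrupted randomness as in the real protocol, the honest-to-corrupted shares uniformly and independently, and the honest broadcasts $\{s_k:k\notin\mathcal{A}\}$ uniformly at random subject to the single linear constraint $\bigoplus_{k\notin\mathcal{A}}s_k=T_{out}\oplus\bigoplus_{k\in\mathcal{A}}s_k$ (the right-hand side being computable from the already sampled data). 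It then remains to show this matches the real view. Two observations drive the argument. First, because $\mathcal{A}\neq[N]$, each honest party $i$ sends at least one share to an honest recipient, so the sub-tuple $\{r_{i,k}:k\in\mathcal{A}\}$ observed by $\mathcal{A}$ is unconstrained, hence uniform and independent of $T_\inn^i$. Second, conditioning on everything except one fixed honest party's internal randomness, the honest broadcast vector $(s_k)_{k\notin\mathcal{A}}$ is an $\mathbb{F}_2$-affine image of that party's free shares-to-honest-parties, and a short linear-algebra argument shows it is uniform on the coset of the even-weight code pinned down by $T_{out}$ and $\mathcal{A}$'s own broadcasts. Together these give that the real view, conditioned on $(\{T_\inn^i:i\in\mathcal{A}\},T_{out})$, is distributed exactly as the simulated one, which is the desired security statement. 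For an actively deviating coalition, the authenticated channels ensure every honest party receives well-defined messages, so any deviation is equivalent to honestly running with some effective (possibly inconsistent) inputs; for an XOR functionality this only changes the output, so the same simulator applies with $T_{out}$ redefined accordingly, and no honest input is leaked beyond what $T_{out}$ already implies.

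The main obstacle is the joint-distribution step for the honest broadcasts: marginal uniformity of each individual $s_k$ is easy, but what is actually required is that the whole vector $(s_k)_{k\notin\mathcal{A}}$ is uniform on exactly the affine subspace of codimension one fixed by the output, with no residual correlation to honest inputs. This is where one must isolate a single honest party's freshly chosen shares, treat the remaining honest shares and all corrupted data as a fixed $\mathbb{F}_2$-shift, and invoke the fact that a uniform vector on a coset of the even-weight code remains uniform on a (generally different, but determined) coset after such a shift. The degenerate case of a single honest party, in which the honest broadcast is entirely determined by the output, should be checked separately, as should the mutual consistency of the simulator's three sampled blocks.
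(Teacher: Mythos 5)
The paper does not actually prove this statement: it is imported wholesale from Broadbent and Tapp \cite{Broadbent2007}, and the appendix merely records it as an assumption-level building block (together with the further, explicitly separate assumption that it survives quantum side information). Your proposal therefore does more than the paper does -- it reconstructs what is essentially the Parity subprotocol of \cite{Broadbent2007}: additive secret sharing of each input over $\mathbb{F}_2$ via the pairwise private authenticated channels, followed by broadcasting the column sums. Your correctness argument is right, and the passive-security argument is sound and complete in outline: each honest row restricted to corrupted columns is uniform and input-independent because every honest party retains at least one share for an honest recipient (itself), and the honest broadcast vector is a sum of independent uniform distributions on cosets of the even-weight code, hence uniform on the single coset pinned down by $T_{out}$; you correctly identify that the joint distribution (not just the marginals) is what must be controlled, and you correctly flag the degenerate single-honest-party case. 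What this buys over the paper's citation is an explicit, self-contained argument; what it does not buy is the stronger property the paper actually needs, namely Assumption 6 of the appendix (security against adversaries holding quantum side information correlated with the inputs), which your classical simulation does not address -- though for this particular protocol the honest-to-adversary messages are uniform and independent of the honest inputs given $T_{out}$, so the argument does lift. The one genuinely thin spot is the active case: your reduction of an actively deviating coalition to ``effective inputs'' is the right intuition for privacy, but it glosses over rushing on the broadcast round (corrupted parties can choose their $s_k$ after seeing the honest ones, which affects which output is induced, though not what is leaked) and over the fact that output correctness, as literally stated in the theorem ($T_{out}=\oplus_i T_\inn^i$), cannot be guaranteed against an active adversary -- only privacy can. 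Since the paper itself only ever uses the privacy guarantee (and separately concedes that malicious parties can disrupt correctness), this does not undermine your argument, but a careful write-up should state the active-security claim as ``leaks nothing beyond the announced output'' rather than as computing the correct XOR.
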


The protocols are information-theoretically secure, in the sense that they do not reveal any classical information other than the one specified by the protocol. The security holds even with an arbitrary number of corrupted participants, assuming the parties share pairwise authenticated private channels and a broadcast channel. 
However, security against a quantum adversary was not analyzed. Here we assume that the protocols listed above remain secure even in the presence of a quantum adversary. This assumption is made explicit in Section \ref{app:sec:states_and_registers} where we assume that the classical subprotocols only act on the classical input register and create the output register, therefore not revealing any information other than what is specified by the protocol, also in the quantum setting.\vspace{1em}

\subsection{States and registers}\label{app:sec:states_and_registers}
In what follows we make
a detailed description of the state in each step of Protocol {1}. Our main goal is to show that the quantum state of the adversary at the end of the protocol does not depend on who is the sender or the receiver. We will later use this fact in the security proof in Sec. \ref{sec:security_analysis}.
  
  Here we adopt the notation that $A$ denotes registers held by the adversary $\adv$, and $\bar{A}$ denotes all the other registers, i.e., of the honest parties (including the sender and the receiver). After Step 2, i.e., once $S$ and $R$ are defined, we distinguish $S$ and $R$ registers from the registers of honest parties $\hon$.

\begin{table}[H]
\centering
\caption{Registers available to parties at each step of the Protocol {1}. All registers are classical unless specified otherwise.}
\bgroup
\def\arraystretch{1.5}
\begin{tabular}{ |c|c|l| } 

 \hline
Step & Available registers & Description\\ \hline
0. & $A_0,\bar{A}_0$ & Quantum side information of dishonest and honest parties before the beginning of Protocol {1}. \\ \hline
1. & $Cd_\inn^{A}$, $Cd_\inn^{\bar{A}}$ & Private input of the parties in the collision detection protocol. \\
& & The node which wants to be a sender inputs 1, the rest 0. \\
& $Cd_\out^{A}$, $Cd_\out^{\bar{A}}$ & Outputs of the collision detection protocol. \\ \hline
2. & $Rn_\inn^{A}$, $Rn_\inn^{\bar{A}}$ & Private input of the receiver notification protocol.\\
& & $S$ inputs the identifier of $R$, everyone else 0. \\
& $Rn_\out^{A}$, $Rn_\out^{\bar{A}}$ & Private outputs of receiver notification protocol. \\ 
& & Output 0 for $R$, 1 for everyone else. \\ \hline
& $D^\adv, D^{\hon S R}$ & Redefined register of dishonest parties $D^\adv=\DE{A_0Cd_{\inn}^{A} Cd_\out^{A}Rn_\inn^{A}Rn_\out^{A}}$\\ & &  and honest parties $D^{\hon S R} =\DE{\bar{A}_0Cd_{\inn}^{\bar{A}} Cd_\out^{\bar{A}}Rn_\inn^{\bar{A}}Rn_\out^{\bar{A}}}$ after Step 2. \\ \hline
3. & $W^\hon,W^\adv,W^S,W^R$ & Quantum registers of the state prepared by the source. \\ \hline
4. & $W^\hon,W^\adv,W^S,W^R$ & Quantum registers of the state prepared by the source. \\ \hline
5. & $O_\inn^\hon$ & Private input of the honest parties to the veto protocol. \\
& & Represented by a string of measurement outcomes $\vec{\nu}$. \\
& $O_\inn^\adv$ & Private input of dishonest parties to the veto protocol. \\
& & Represented by a string of measurement outcomes $\vec{\mu}$. \\
& $O_\out$ & Public output of the veto protocol. \\ 
& & 0 if all entries of strings $\vec{\nu}$ and $\vec{\mu}$ are 0, 1 otherwise. \\ \hline
6. & $Q$ & Quantum register of quantum message $\ket{\psi}$ which $S$ wants to transmit. \\ 
& $T_\inn^S, T_\inn^R$ & Private inputs of $S$ and $R$ to the logical OR protocol. \\
& & $S$ inputs teleportation message $m$ and $R$ inputs random bit rand. \\
& $T_\inn^{\hon},T_\inn^{\adv}$ & Private input of the honest and dishonest parties to the logical OR protocol. \\
& $T$ & Public outcome the logical OR protocol. Outputs XOR of all the inputs. \\
\hline
\end{tabular}\label{app:tab:registers}
\egroup
\end{table}

In the following we specify what are the assumptions associated with each step of the protocol. Additionally, we explicitly write out the state $\xi^{(j)}$ after each step $j$ of the protocol, taking into account all the registers that play a role in the particular step. Therefore, we remark that our notation may be cumbersome at the first glance. However, we advise the reader to refer to Table \ref{app:tab:registers} at any point of our proof.

\subsubsection*{Step 1. Collision detection.}

\begin{assumption}
Let $A_0$ be the quantum side information of dishonest parties and $\bar{A}_0$ be the quantum side information of the honest parties, including sender and receiver, before the beginning of the protocol. We assume that before the start of the protocol the parties share the following state:
\begin{align}
\xi_{A_0 \bar{A}_0 Cd_{\inn} Rn_{\inn}}^{(0)} = \sigma^{(0)}_{A_0\bar{A}_0Cd_{\inn}^{A} Rn_\inn^{A}} \otimes \sigma^{(0)}_{Cd_\inn^{\bar{A}} Rn_\inn^{\bar{A}}}.
\end{align}
In words, we assume the adversaries have a quantum side information, $A_0$, and classical inputs to the collision detection and receiver notification protocol, $Cd_{\inn}^{A}$ and $Rn_\inn^{A}$,  that might be correlated with some quantum side information $\bar{A}_0$ of the remaining parties. However the inputs of the honest parties $Cd_\inn^{\bar{A}}$ and $Rn_\inn^{\bar{A}}$ are uncorrelated with the adversary's state.
\end{assumption}

\begin{assumption}\label{app:assumptionCD}
We assume that the classical collision detection protocol is secure against a quantum adversary, that is, it acts on classical registers $Cd_{\inn}$
and outputs $Cd_\out$ without revealing any other information to the dishonest parties. In particular, if sender and receiver are honest, it does not leak their identity.
\end{assumption}

Let $\xi_{A_0 \bar{A}_0 Cd_{\inn}Cd_\out Rn_{\inn} }^{(1)}$ be the global output state after collision detection (Step 1).
Ass. \ref{app:assumptionCD} implies that tracing out the registers of honest parties (all registers of $\bar{A}$) we obtain a partial state of the adversary (all registers of ${A}$) which is independent of the sender, if the sender is honest. That is, for all honest parties, $\forall i \notin \mathcal{A}$, the state after the collision detection step (Step 1 of Protocol 1) is
\begin{align}
\Tr_{\bar{A}_0 Cd_{\inn}^{\bar{A}} Cd_\out^{\bar{A}} Rn_{\inn}^{\bar{A}}}\de{\xi_{A_0 \bar{A}_0 Cd_{\inn} Cd_\out Rn_{\inn}|S=i}^{(1)}} &= \xi^{(1)}_{A_0Cd_{\inn}^{A} Cd_\out^{A}Rn_\inn^{A}|S=i}\\
&=\xi^{(1)}_{A_0Cd_{\inn}^{A} Cd_\out^{A}Rn_\inn^{A}} .
\end{align}

\subsubsection*{Step 2. Receiver notification.}

\begin{assumption}\label{app:assumptionRN}
We assume that the classical receiver notification protocol is secure against the quantum adversary; that is, the protocol acts on the classical register $Rn_{\inn} $ and outputs $Rn_\out$, without revealing any other information to the dishonest parties. In particular, if sender and receiver are honest, it does not leak their identity.
\end{assumption}

Let the input state to the receiver notification protocol be $\xi_{A_0 \bar{A}_0 Cd_{\inn} Cd_\out Rn_{\inn} }^{(1)}$ and 
the output state conditioned on node $i$ being the sender be
$\xi_{A_0 \bar{A}_0 Cd_{\inn} Cd_\out Rn_{\inn} Rn_\out|S=i}^{(2)}$. 
Ass. \ref{app:assumptionRN} implies that, again, tracing out the registers of honest parties (all registers of $\bar{A}$) we obtain a partial state of the adversary (all registers of ${A}$) which is independent of the sender. That is, for all honest parties $\forall i \notin \mathcal{A}$, the state after the receiver notification step (Step 2 of Protocol 1) is
\begin{align}
\Tr_{\bar{A}_0 Cd_{\inn}^{\bar{A}} Cd_\out^{\bar{A}} Rn_{\inn}^{\bar{A}}Rn_{\inn}^{\bar{A}} Rn_\out^{\bar{A}} }(\xi_{A_0 \bar{A}_0 Cd_{\inn} Cd_\out Rn_{\inn} Rn_\out|S=i}^{(2)}) &= \xi^{(2)}_{A_0Cd_{\inn}^{A} Cd_\out^{A}Rn_\inn^{A} Rn_\out^{A}|S=i}\\
&=\xi^{(2)}_{A_0Cd_{\inn}^{A} Cd_\out^{A}Rn_\inn^{A}Rn_\out^{A}} .
\end{align}

For clarity, we denote the state after the receiver notification (Step 2), given that node $i$ is the sender, by 
\begin{align}
\xi_{A_0 \bar{A}_0 Cd_{\inn} Cd_\out Rn_{\inn} Rn_\out|S=i}^{(2)} &  \equiv \sigma_{D^{\adv}D^{\hon S R}|S=i}
\end{align}
where $D^\adv=\DE{A_0Cd_{\inn}^{A} Cd_\out^{A}Rn_\inn^{A}Rn_\out^{A}}$ denotes all the registers in possession of the adversary at the end of Step 2. And similarly, $D^{\hon S R}$ denotes the registers of the honest parties. Note that, now that sender $S$ and receiver $R$ are defined, we distinguish them from the subset of honest players. 

\begin{lemma}\label{app:lem:sigma_indep_S}
If $S$ and $R$ are honest, the state of the adversary at the end of the receiver notification protocol does not carry any information about their identity. Let $\sigma_{D^{\adv}|S=i}:=\Tr_{D^{\hon S R}}[ \sigma_{D^{\adv}D^{\hon S R}|S=i}]$; by Ass. 2 and 3 it holds that
\begin{align}
\sigma_{D^{\adv}|S=i}=\sigma_{D^{\adv}|S=j} = \sigma_{D^{\adv}} \quad \forall i,j \notin \mathcal{A},
\end{align}
and
\begin{align}
\sigma_{D^{\adv}|R=i}=\sigma_{D^{\adv}|R=j} = \sigma_{D^{\adv}} \quad \forall i,j \notin \mathcal{A}.
\end{align}
\end{lemma}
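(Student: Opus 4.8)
The plan is to establish Lemma~\ref{app:lem:sigma_indep_S} directly from Assumptions~2 and~3 together with the observation that the classical inputs of the honest parties are themselves independent of who plays the role of $S$ and $R$. First I would unwind the definition of $\sigma_{D^\adv D^{\hon S R}|S=i}$ as the state $\xi^{(2)}_{A_0\bar A_0 Cd_\inn Cd_\out Rn_\inn Rn_\out|S=i}$ after Step~2, and note that $D^\adv = \{A_0 Cd_\inn^A Cd_\out^A Rn_\inn^A Rn_\out^A\}$ collects exactly the adversary's registers. The quantity $\sigma_{D^\adv|S=i}$ is obtained by tracing out $D^{\hon S R}$, which is precisely the partial trace over all $\bar A$-registers appearing in the displayed equations just above the lemma.

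Next I would invoke the two chained equalities already derived in the excerpt: after Collision Detection, Ass.~\ref{app:assumptionCD} gives $\xi^{(1)}_{A_0 Cd_\inn^A Cd_\out^A Rn_\inn^A|S=i} = \xi^{(1)}_{A_0 Cd_\inn^A Cd_\out^A Rn_\inn^A}$ for all honest $i$, and after Receiver Notification, Ass.~\ref{app:assumptionRN} gives $\xi^{(2)}_{A_0 Cd_\inn^A Cd_\out^A Rn_\inn^A Rn_\out^A|S=i} = \xi^{(2)}_{A_0 Cd_\inn^A Cd_\out^A Rn_\inn^A Rn_\out^A}$, again for all honest $i$. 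The right-hand side of the last equation is by definition $\sigma_{D^\adv}$, and the left-hand side is $\sigma_{D^\adv|S=i}$. Hence $\sigma_{D^\adv|S=i} = \sigma_{D^\adv}$ for every $i \notin \mathcal{A}$, which immediately yields $\sigma_{D^\adv|S=i} = \sigma_{D^\adv|S=j}$ for all honest $i,j$. The receiver statement is obtained the same way: the Receiver Notification protocol's security (Ass.~\ref{app:assumptionRN}) asserts it reveals nothing about $R$'s identity either when $R$ is honest, so tracing out the honest registers gives a state independent of the value of $R$; one repeats the argument with $R=i$ in place of $S=i$.

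The only genuinely delicate point — and the one I would be most careful to state explicitly rather than gloss over — is the role of the initial product structure in Assumption~1: the honest parties' inputs $Cd_\inn^{\bar A}, Rn_\inn^{\bar A}$ are uncorrelated with the adversary's side information $A_0$, and moreover the distribution over these honest inputs (in particular which honest node carries the flag identifying it as $S$, and which identifier $S$ feeds into Receiver Notification) must be the conditioning event ``$S=i$''. So ``$\xi^{(2)}_{\cdots|S=i}$'' means the global state in the branch where honest node $i$ is the sender; the content of Assumptions~2 and~3 is that the classical subroutines act only on the input registers to produce the output registers and leak nothing else to $\adv$, so the adversary's marginal is the same in every such branch. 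I would phrase the proof as: ``By Ass.~1 the honest inputs are independent of $A_0$; by Ass.~\ref{app:assumptionCD} and~\ref{app:assumptionRN} the subroutine outputs visible to $\adv$ are functions of $\adv$'s own registers only; therefore tracing out $D^{\hon S R}$ from $\sigma_{D^\adv D^{\hon S R}|S=i}$ gives a state that does not depend on $i$, and likewise not on the identity of $R$.'' This is essentially a bookkeeping argument, so no step should require computation; the main obstacle is purely expository, namely making precise what ``does not reveal any other information'' means at the level of reduced density matrices and then citing the already-displayed chain of equalities.

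\begin{proof}
Recall from the preceding discussion that, by definition,
$\sigma_{D^\adv D^{\hon S R}|S=i} = \xi^{(2)}_{A_0\bar A_0 Cd_\inn Cd_\out Rn_\inn Rn_\out|S=i}$,
where $D^\adv = \{A_0 Cd_\inn^A Cd_\out^A Rn_\inn^A Rn_\out^A\}$ and $D^{\hon S R}$ collects all the remaining ($\bar A$) registers. Hence
\begin{align}
\sigma_{D^\adv|S=i} = \Tr_{D^{\hon S R}}\!\big[\sigma_{D^\adv D^{\hon S R}|S=i}\big] = \Tr_{\bar A_0 Cd_\inn^{\bar A} Cd_\out^{\bar A} Rn_\inn^{\bar A} Rn_\out^{\bar A}}\!\big(\xi^{(2)}_{A_0 \bar A_0 Cd_\inn Cd_\out Rn_\inn Rn_\out|S=i}\big).
\end{align}
By Assumption~1 the honest parties' inputs to Collision Detection and Receiver Notification are uncorrelated with the adversary's side information $A_0$, so that the only dependence on the identity of $S$ (when $S$ is honest) enters through these honest input registers. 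Assumption~\ref{app:assumptionCD} states that the Collision Detection protocol acts only on $Cd_\inn$ and produces $Cd_\out$ without leaking any further information to $\adv$; tracing out the honest registers therefore gives, for every honest $i$,
\begin{align}
\xi^{(1)}_{A_0 Cd_\inn^A Cd_\out^A Rn_\inn^A|S=i} = \xi^{(1)}_{A_0 Cd_\inn^A Cd_\out^A Rn_\inn^A}.
\end{align}
Likewise, Assumption~\ref{app:assumptionRN} states that Receiver Notification acts only on $Rn_\inn$ and produces $Rn_\out$ without leaking anything else to $\adv$, so that for every honest $i$,
\begin{align}
\xi^{(2)}_{A_0 Cd_\inn^A Cd_\out^A Rn_\inn^A Rn_\out^A|S=i} = \xi^{(2)}_{A_0 Cd_\inn^A Cd_\out^A Rn_\inn^A Rn_\out^A} =: \sigma_{D^\adv}.
\end{align}
The left-hand side is exactly $\sigma_{D^\adv|S=i}$, so $\sigma_{D^\adv|S=i} = \sigma_{D^\adv}$ for all $i \notin \mathcal{A}$, and in particular $\sigma_{D^\adv|S=i} = \sigma_{D^\adv|S=j}$ for all $i,j \notin \mathcal{A}$. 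Finally, Assumption~\ref{app:assumptionRN} equally guarantees that Receiver Notification does not leak the identity of $R$ to $\adv$ when $R$ is honest; repeating the argument above with the conditioning event $R=i$ in place of $S=i$ yields $\sigma_{D^\adv|R=i} = \sigma_{D^\adv|R=j} = \sigma_{D^\adv}$ for all $i,j \notin \mathcal{A}$.
\end{proof}
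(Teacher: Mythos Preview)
Your proposal is correct and follows exactly the paper's approach: the paper does not give a separate proof environment for this lemma, but states it as the summary of the chain of equalities derived in the discussion of Steps~1 and~2 (the displayed equations immediately preceding the lemma), which is precisely what you recapitulate by invoking Assumptions~1,~\ref{app:assumptionCD} and~\ref{app:assumptionRN}. Your added remark about the role of Assumption~1 in ensuring that the only $S$-dependence enters through the honest input registers is a useful clarification that the paper leaves implicit.
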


\subsubsection*{Step 3. State distribution.} 

\begin{assumption}
The $N$-partite state distributed by a trusted source is $\ketbra{\tn{W}}_{W^\hon W^\adv W^S W^R}$. Here $W^\hon$ is the \textit{quantum} register of the honest parties, $W^\adv$ is the \textit{quantum} register of dishonest parties, and $W^S$ and $W^R$ are \textit{quantum} registers of the sender and receiver. 
\end{assumption}

Therefore, the global state after the source distributed the quantum state (Step 3 of Protocol 1)  is
\begin{align}
\xi_{W^\hon W^\adv W^S W^R D^{\adv}D^{\hon S R}|S=i}^{(3)} = \ketbra{\tn{W}}_{W^\hon W^\adv W^S W^R} \otimes \sigma_{D^{\adv}D^{\hon S R}|S=i}.
\end{align}

\subsubsection*{Step 4. Measurement.}

Step 4 describes a measurement on quantum registers $W^\hon W^\adv$ and creates the classical registers $O_\inn^\hon$ and $O_\inn^\adv$. 
The honest parties perform a projection $\Pi^{\vec{\nu}}_{W^\hon}$ on the $\DE{0,1}$ basis and the string of outcomes $\vec{\nu}$ is recorded on register $O_\inn^\hon$.
The adversaries, however, instead of performing the measurement specified by the protocol, can apply an arbitrary map on their registers and produce a classical outcome $\ketbra{\vec{\mu}}_{O_\inn^\adv}$. This action is descried by applying a map $\mathcal{F}^{\vec{\mu}}_{W^\adv D^\adv}$ labeled by $\vec{\mu}$, which acts on registers $W^\adv D^\adv$ and producing a classical outcome $\ketbra{\vec{\mu}}_{O_\inn^\adv}$ in register $O_\inn^\adv$. Note that this outcome can be a strategy upon which dishonest parties agree and, in particular, it does not have to represent the actual action of the map  $\mathcal{F}^{\vec{\mu}}_{W^\adv D^\adv}$.  
Therefore, the state after the parties perform local measurements (Step 4 of Protocol 1)  is described as, 
\begin{align}
\begin{split}
\xi_{W^\hon W^\adv W^S W^R D^{\adv}D^{\hon S R} O_\inn^\hon O_\inn^\adv|S=i}^{(4)} = \sum_{\vec{\mu},\vec{\nu}} \Pi^{\vec{\nu}}_{W^\hon} \otimes \mathcal{F}^{\vec{\mu}}_{W^\adv D^\adv} (\ketbra{\tn{W}}_{W^\hon W^\adv W^S W^R} \otimes \sigma_{D^{\adv}D^{\hon S R}|S=i})\\ \otimes \ketbra{\vec{\nu}}_{O_\inn^\hon} \otimes \ketbra{\vec{\mu}}_{O_\inn^\adv},
\end{split}
\end{align}
where $\Pi^{\vec{\nu}}_{W^\hon}$ corresponds to a projection of register $W^\hon$ onto the state $\ketbra{\vec{\nu}}$ in the standard basis.

\subsubsection*{Step 5. Anonymous announcement of outcomes.}
Each of the parties inputs their measurement outcome into the veto protocol. In particular, $O_\inn^\hon = \ketbra{\vec{\nu}}_{O_\inn^\hon}$ is a private input of the honest parties and $O_\inn^\adv = \ketbra{\vec{\mu}}_{O_\inn^\adv}$ is a private input of the dishonest parties. 

\begin{assumption}\label{app:ass:veto}
We assume that the classical veto protocol is secure against the quantum adversary; i.e., the veto protocol acts on the classical registers $O_\inn^\hon, O_\inn^\adv$, and only outputs $O_\out=0$ if $ O_\inn^\hon = O_\inn^\adv = \ketbra{\vec{0}}$ and 1 otherwise, and does not reveal any other information. 
\end{assumption}

Then, the state after the veto protocol, where the parties announce their outcomes (Step 5 of Protocol 1), is 
\begin{align}
\xi_{W^\hon W^\adv W^S W^R D^{\hon \adv SR} O_\inn^\hon O_\inn^\adv O_\out|S=i}^{(5)} 
=&  \Pi_{W^\hon}^{\vec{0}} \otimes \mathcal{F}_{{W^\adv D^\adv}}^{\vec{0}} (\ketbra{\tn{W}}_{W^\hon W^\adv W^S W^R}  \otimes \sigma_{D^{\adv}D^{\hon S R}|S=i}) \nonumber \\ 
 &\otimes \ketbra{\vec{0}}_{O_\inn^\hon} \otimes \ketbra{\vec{0}}_{O_\inn^\adv} \otimes \ketbra{0}_{O_\out} \nonumber \\
 + & \sum_{\vec{\mu}\neq 0, \vec{\nu} }\Pi_{W^\hon}^{\vec{\nu}} \otimes \mathcal{F}_{{W^\adv D^\adv}}^{\vec{\mu}} (\ketbra{\tn{W}}_{W^\hon W^\adv W^S W^R} \otimes \sigma_{D^{\adv}D^{\hon S R}|S=i})  \nonumber \\ 
 &\otimes \ketbra{\vec{\nu}}_{O_\inn^\hon} \otimes \ketbra{\vec{\mu}}_{O_\inn^\adv} \otimes \ketbra{1}_{O_\out} 
\end{align}

\subsubsection*{Step 6. Teleportation.}

In Step 6., sender and receiver wish to perform the teleportation.  To do so, the sender performs the Bell state measurement and communicates the classical outcome to the receiver, so that she can correct the teleported state. The classical communication is carried out by using the classical protocol logical OR.

\begin{assumption}\label{app:ass:logicalOR}
The classical logical OR protocol acts on classical registers and does not reveal any information other than the logical OR of the  inputs. 
\end{assumption}

Let $Q$ denote the register of the quantum message which sender $S$ wishes to transmit. More formally, this step consists of applying a map, a Bell state measurement, acting on the registers of the sender $W^S$ and $Q$  and producing a classical message in the public register $T$, followed by the receiver applying a unitary operation according to the outcome $m$ of the Bell measurement. We denote the map that describes the teleportation step by $\mathcal{T}_{W^S W^R Q O_\out \rightarrow W^S W^R Q O_\out T_\inn^S T_\inn^R T}$. Its action is conditioned on the outcome of Step 5., i.e., public output of the veto protocol. We define its action on a state $\phi_{W^S W^R}\otimes \ketbra{\psi}_Q$ as follows,
\begin{align}
\begin{split}
\mathcal{T}_{W^S W^R Q | O_\out = 0 \rightarrow W^S W^R Q O_\out T_\inn^S T_\inn^R T} := &\sum_m \mathcal{R}^m_{W^R} \circ \mathcal{B}^m_{W^S Q}(\phi_{W^S W^R}\otimes \ketbra{\psi}_{Q})\\
& \otimes \sum_{\tn{rand}} \frac{1}{4}\ketbra{m}_{T_\inn^S} \otimes \ketbra{\tn{rand}}_{T_\inn^R}\otimes\ketbra{m \oplus \tn{rand}}_T,
\end{split}
\end{align}
\begin{align}
\mathcal{T}_{W^S W^R Q | O_\out = 1 \rightarrow W^S W^R Q O_\out T_\inn^S T_\inn^R T} := \mathbb{1}_{W^S W^R Q}(\phi_{W^S W^R}\otimes \ketbra{\psi}_{Q}) \otimes \ketbra{\perp}_{T_\inn^S}\otimes \ketbra{\perp}_{T_\inn^R}\otimes \ketbra{\perp}_T.
\end{align}
The map $\mathcal{B}^m_{W^S Q}$ represents the Bell state measurement, on registers $W^S Q$, with outcome $m$, and the map $\mathcal{R}^m_{W^R}$ corresponds to the unitary the receiver applies to correct the teleported state.
The action of the map $\mathcal{T}_{W^S W^R Q O_\out \rightarrow W^S W^R Q O_\out T_\inn^S T_\inn^R T}$ describes  that the state $\ketbra{\psi}_{Q}$ is either teleported to register $W^R$ when $O_\out=0$ or the protocol aborts when $O_\out = 1$, which we represent by the state $\ketbra{\perp}_T$ in register $T$.

However, we note that in this step the adversaries could also deviate from the protocol. In general, they could perform an arbitrary map in their registers and input a string $\vec{\kappa}\neq \vec{0}$ to the logical OR protocol. In that case, the teleportation step can be described as
\begin{align}
\begin{split}
&\mathcal{T}_{W^S W^R W^\adv D^\adv Q | O_\out = 0 \rightarrow W^S W^R W^\adv D^\adv Q O_\out T_\inn^S T_\inn^R T_\inn^{\mathcal{A}} T} 
(\xi_{W^\hon W^\adv W^S W^R Q D^{\hon \adv SR} O_\inn^\hon O_\inn^\adv O_\out|S=i}^{(5)} ):= 
\\
&\quad \quad \sum_{m,\vec{\kappa}} \mathcal{R}^{m \oplus_i \kappa_i}_{W^R} \circ \mathcal{G}_{W^\adv D^\adv}^{\vec{\kappa}} \circ \mathcal{B}^m_{W^S Q}(\Pi_{W^\hon}^{\vec{0}} \otimes \mathcal{F}_{{W^\adv D^\adv}}^{\vec{0}} (\ketbra{\tn{W}}_{W^\hon W^\adv W^S W^R} \otimes \ketbra{\psi}_{Q} \otimes \sigma_{D^{\adv}D^{\hon S R}|S=i})) \\
& \quad \quad  \otimes \sum_{\tn{rand}} \frac{1}{4}\ketbra{m}_{T_\inn^S} \otimes \ketbra{\tn{rand}}_{T_\inn^R}\otimes \ketbra{\vec{\kappa}}_{T_\inn^{\mathcal{A}}}\otimes\ketbra{m \oplus \tn{rand}\oplus_i \kappa_i}_T
 \end{split}
\end{align}
where  $\mathcal{G}_{W^\adv D^\adv}^{\vec{\kappa}}$ represents an arbitrary map the adversaries apply to registers $W^\adv D^\adv$, which is followed by the creation of classical register ${T_\inn^{\mathcal{A}}}$. $\mathcal{R}^{m \oplus_i \kappa_i}_R$ expresses the fact that the receiver now applies a unitary labeled by $m \oplus_i \kappa_i$ instead of $m$.

Note that the map $\mathcal{G}_{W^\adv D^\adv}^{\vec{\kappa}}$ 
only acts on the registers of the adversaries and after the teleportation step (Step 6) no other operations are performed by the honest parties. The security of the protocol is defined in terms of the guessing probability, which takes into account an optimization over all maps on the register of the adversary. Therefore, for the security analysis, we can, without loss of generality, neglect the map $\mathcal{G}_{AD^\adv}^{\vec{\kappa}}$ in the final state, since it is taken into account in the definition of the guessing probability.

Finally, the state after the teleportation protocol (Step 6 of Protocol 1) is 
\begin{align}\label{app:eq:xi_6}
\begin{split}
&\xi_{W^\hon W^\adv W^S W^R Q D^{\hon \adv SR} O_\inn^\hon O_\inn^\adv O_\out T_\inn^S T_\inn^R T_\inn^\adv T|S=i}^{(6)} = \\
& \quad \sum_{m,\vec{\kappa}}  \mathcal{R}^{m \oplus_i \kappa_i}_{W^R} \circ \mathcal{B}^m_{W^S Q}(\Pi_{W^\hon}^{\vec{0}} \otimes \mathcal{F}_{{W^\adv D^\adv}}^{\vec{0}} (\ketbra{\tn{W}}_{W^\hon W^\adv W^S W^R} \otimes \ketbra{\psi}_{Q} \otimes \sigma_{D^{\adv}D^{\hon S R}|S=i}))   \\ 
& \quad \quad  \otimes   \ketbra{\vec{0}}_{O_\inn^\hon} \otimes \ketbra{\vec{0}}_{O_\inn^\adv} \otimes \ketbra{0}_{O_\out}\\
& \quad \quad  \otimes   \sum_{\tn{rand}} \frac{1}{4}\ketbra{m}_{T_\inn^S} \otimes \ketbra{\tn{rand}}_{T_\inn^R}\otimes \ketbra{\vec{\kappa}}_{T_\inn^{\mathcal{A}}}\otimes\ketbra{m \oplus \tn{rand}\oplus_i \kappa_i}_T \\
 \quad & +  \sum_{\vec{\mu}\neq 0, \vec{\nu} }\mathbb{1}_{W^S W^R Q}  (\Pi_{W^\hon}^{\vec{\nu}} \otimes \mathcal{F}_{{W^\adv D^\adv}}^{\vec{\mu}} (\ketbra{\tn{W}}_{W^\hon W^\adv W^S W^R} \otimes \ketbra{\psi}_{Q} \otimes \sigma_{D^{\adv}D^{\hon S R}|S=i})) \\ 
& \quad \quad \otimes  \ketbra{\vec{\nu}}_{O_\inn^\hon} \otimes \ketbra{\vec{\mu}}_{O_\inn^\adv} \otimes \ketbra{1}_{O_\out} \otimes \ketbra{\perp}_{T_\inn^S}\otimes \ketbra{\perp}_{T_\inn^R} \otimes \ketbra{\perp}_{T_\inn^\adv} \otimes \ketbra{\perp}_{T}.
 \end{split}
 \end{align}

\vspace{1em}

Observe, however, that the classical registers $D^{\hon SR}, O_\inn^\hon, T_\inn^S T_\inn^R$ are not further acted upon with any map. Moreover, their content is private, as by Lem. \ref{app:lem:sigma_indep_S} 
and Ass. \ref{app:ass:veto} and \ref{app:ass:logicalOR} no information about it is revealed to the adversary. Since we are interested in the information available to the adversary we will trace out these subsystems.

\begin{lemma}\label{app:lem:reduced_state_final}
Let $C = \{D^\adv,O_\inn^\adv,O_\out,T_\inn^\adv,T\}$ represent all the classical and quantum side information accessible to the adversary at the end of  the protocol. The reduced output state of the anonymous transmission protocol with the W state, where we trace out all private information of the honest parties $\hon$, $S$, and $R$, given that node $i$ is the sender, can be described as follows,
\begin{align}\label{app:eq:statefinal}
\rho_{W^\hon W^\adv W^S W^R Q C|S=i} 
=&  \sum_{m,\vec{\kappa}}  \mathcal{R}^{m \oplus_i \kappa_i}_{W^R} \circ \mathcal{B}^m_{W^S Q}(\Pi_{W^\hon}^{\vec{0}} \otimes \mathcal{F}_{{W^\adv D^\adv}}^{\vec{0}} (\ketbra{\tn{W}}_{W^\hon W^\adv W^S W^R} \otimes \ketbra{\psi}_{Q} \otimes \sigma_{D^{\adv}})) \nonumber \\ 
 &\otimes \ketbra{\vec{0}}_{O_\inn^\adv} \otimes \ketbra{0}_{O_\out} \otimes \ketbra{\vec{\kappa}}_{T_\inn^{\mathcal{A}}} \otimes \frac{\mathbb{1}_T}{4} \nonumber \\
 + & \sum_{\vec{\mu}\neq 0, \vec{\nu} }\mathbb{1}_{W^S W^R Q}  (\Pi_{W^\hon}^{\vec{\nu}} \otimes \mathcal{F}_{{W^\adv D^\adv}}^{\vec{\mu}} (\ketbra{\tn{W}}_{W^\hon W^\adv W^S W^R} \otimes \ketbra{\psi}_{Q} \otimes \sigma_{D^{\adv}})) \nonumber \\ 
 &\otimes \ketbra{\vec{\mu}}_{O_\inn^\adv} \otimes \ketbra{1}_{O_\out} \otimes \ketbra{\perp}_{T_\inn^{\mathcal{A}}} \otimes \ketbra{\perp}_{T} 
\end{align}
where we made use of Lem. \ref{app:lem:sigma_indep_S} and the explicitly wrote that the state of register $T$ is maximally mixed.
\end{lemma}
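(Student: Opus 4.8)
The plan is to derive Eq.~\eqref{app:eq:statefinal} directly from the explicit post-Step-6 state $\xi^{(6)}$ of Eq.~\eqref{app:eq:xi_6} by taking the partial trace over the honest-party registers $D^{\hon S R}$, $O_\inn^\hon$, $T_\inn^S$ and $T_\inn^R$. The observation that reduces this to bookkeeping is that none of these four registers is ever read by a map in $\xi^{(6)}$: after creation they are only carried along as tensor factors. Indeed, the only channels appearing in Eq.~\eqref{app:eq:xi_6} are $\Pi^{\vec\nu}_{W^\hon}$, $\mathcal{F}^{\vec\mu}_{W^\adv D^\adv}$, $\mathcal{B}^m_{W^S Q}$ and $\mathcal{R}^{m\oplus_i\kappa_i}_{W^R}$ (the adversarial map $\mathcal{G}^{\vec\kappa}_{W^\adv D^\adv}$ having already been dropped without loss of generality in the paragraph preceding the lemma, since it acts only on the adversary's registers and is absorbed into the optimization in the guessing probability), and none of them touches $D^{\hon S R}$, $O_\inn^\hon$, $T_\inn^S$ or $T_\inn^R$. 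Hence each of these partial traces commutes with all of the channels and can be evaluated on the corresponding leaf factor.

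First I would trace out $D^{\hon S R}$. Pushing $\Tr_{D^{\hon S R}}$ through the channels, it acts directly on $\sigma_{D^{\adv}D^{\hon S R}|S=i}$, and by Lem.~\ref{app:lem:sigma_indep_S} this produces $\sigma_{D^{\adv}}$, a state that no longer carries the label $i$. This is the only place where the conditioning $S=i$ is actually removed, and it is where the assumed security of the classical subroutines enters; everything else is linear algebra. Next, $O_\inn^\hon$ occurs in both branches of Eq.~\eqref{app:eq:xi_6} as a normalized classical flag $\ketbra{\vec0}_{O_\inn^\hon}$ or $\ketbra{\vec\nu}_{O_\inn^\hon}$, so $\Tr_{O_\inn^\hon}$ simply deletes that factor; the sum over $\vec\nu$ in the abort branch is untouched, since it is already carried by $\Pi^{\vec\nu}_{W^\hon}$ and $\ketbra{\vec\mu}_{O_\inn^\adv}$. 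Likewise $T_\inn^S=\ketbra{m}$ is normalized, so $\Tr_{T_\inn^S}$ deletes it while leaving the sum over $m$ in place --- and $m$ must be retained because it still labels $\mathcal{B}^m_{W^S Q}$ and $\mathcal{R}^{m\oplus_i\kappa_i}_{W^R}$, so it cannot be summed away.

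Finally, tracing out $T_\inn^R$ is where the randomization of Step~6 pays off: in the success branch, for each fixed $m$ and $\vec\kappa$ one has $\sum_{\tn{rand}}\tfrac14\ketbra{m\oplus\tn{rand}\oplus_i\kappa_i}_T=\tfrac14\,\mathbb{1}_T$, because as $\tn{rand}$ runs over the two-bit strings so does $m\oplus\tn{rand}\oplus_i\kappa_i$; thus $\Tr_{T_\inn^R}$ replaces the correlated pair $(T_\inn^R,T)$ by the maximally mixed register $\mathbb{1}_T/4$, uncorrelated with everything else, whereas in the abort branch $T_\inn^R$ and $T$ are both $\ketbra{\perp}$ and tracing out $T_\inn^R$ merely deletes its factor. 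Reassembling the two branches after these four partial traces yields exactly Eq.~\eqref{app:eq:statefinal}. The only genuinely non-routine point is the ``never read, hence commutes'' justification for moving $\Tr_{D^{\hon S R}}$ inside the channels so that Lem.~\ref{app:lem:sigma_indep_S} applies; once that is granted, all that remains are the three elementary identities above, which I would spell out branch by branch rather than merely gesture at.
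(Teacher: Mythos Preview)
Your proposal is correct and follows essentially the same approach as the paper: the paper's justification is the brief paragraph immediately preceding the lemma, which observes that the registers $D^{\hon SR}$, $O_\inn^\hon$, $T_\inn^S$, $T_\inn^R$ are not acted upon by any subsequent map, invokes Lem.~\ref{app:lem:sigma_indep_S} (together with Assumptions~\ref{app:ass:veto} and~\ref{app:ass:logicalOR}) to justify tracing them out, and notes within the lemma statement itself that the $T$ register becomes maximally mixed. Your write-up simply unpacks these same steps more explicitly, including the bijection $\tn{rand}\mapsto m\oplus\tn{rand}\oplus_i\kappa_i$ that yields $\mathbb{1}_T/4$.
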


In summary, Lem. \ref{app:lem:reduced_state_final} represents the state at the end of the protocol, given that the adversaries might have acted arbitrarily in Step 4 and under the assumption that, in particular, the classical protocols do not reveal the identities of the sender and the receiver. We will use this state to prove security in the following section.

\subsection{Security analysis}\label{app:sec:security_analysis}

\subsubsection{Semiactive adversary}

In this section we show that Protocol {1} is sender-secure. The key point of the proof is that security follows from permutational invariance of the state. Before proving Thm. \ref{thm:WSsecure}, we first prove the following useful lemma.

\begin{lemma}\label{app:lem:independent_S}
The reduced quantum state of the adversary at the end of the protocol is independent of the sender, i.e., $\forall i \notin \mathcal{A}$, 
\begin{align}
\rho_{W^\adv C|S=i} = \rho_{W^\adv C}.
\end{align}
\end{lemma}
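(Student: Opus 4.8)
\textbf{Proof proposal for Lemma~\ref{app:lem:independent_S}.}

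The plan is to start from the explicit form of the reduced state at the end of the protocol, Eq.~\eqref{app:eq:statefinal} in Lem.~\ref{app:lem:reduced_state_final}, trace out the registers $W^\hon, W^S, W^R, Q$ that are \emph{not} part of the adversary's side information $C$, and then exploit the permutational invariance of $\ketbra{W}_N$ to conclude that what remains cannot depend on $i$. Concretely, I would first note that, by Lem.~\ref{app:lem:sigma_indep_S}, the classical correlation register $\sigma_{D^\adv}$ appearing in Eq.~\eqref{app:eq:statefinal} is already independent of $i$, so the only place an $i$-dependence could enter is through (a) the receiver-correction map $\mathcal{R}^{m\oplus_i\kappa_i}_{W^R}$ and (b) which subsystem of $\ketbra{W}_N$ plays the role of $W^S$ and $W^R$ relative to the adversary's fixed register $W^\adv$. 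Point (a) is harmless once we trace out $W^R$: summing $\mathcal{R}^{m\oplus_i\kappa_i}_{W^R}\circ\mathcal{B}^m_{W^S Q}(\cdots)$ over $m$ and then tracing out $W^R$ (and $W^S, Q$) gives a quantity that no longer sees the particular bit $m\oplus_i\kappa_i$, because the Bell measurement outcome $m$ is uniformly random and summed over; so the $i$-dependence of the receiver's correction unitary washes out. The same remark handles the abort branch trivially since there $\mathbb{1}_{W^S W^R Q}$ carries no $i$.

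Point (b) is where permutational invariance does the work. After tracing out $W^\hon, W^S, W^R, Q$, each branch of Eq.~\eqref{app:eq:statefinal} contains a term of the form $\Tr_{W^\hon W^S W^R}\!\big[\,\Pi^{\vec\nu}_{W^\hon}\otimes \mathcal{F}^{\vec\mu}_{W^\adv D^\adv}(\ketbra{W}_{W^\hon W^\adv W^S W^R}\otimes\sigma_{D^\adv})\,\big]$, tensored with fixed classical registers. Since $\mathcal{F}^{\vec\mu}$ acts only on $W^\adv D^\adv$ and the projector acts only on $W^\hon$, I can pull the partial traces inside and the relevant object becomes $\mathcal{F}^{\vec\mu}_{W^\adv D^\adv}\!\big(\Tr_{W^S W^R}[\,\Pi^{\vec\nu}_{W^\hon}\,\ketbra{W}_{W^\hon W^\adv W^S W^R}\,]\otimes\sigma_{D^\adv}\big)$, and then I trace out $W^\hon$ too. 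The point is that the reduced state of $\ketbra{W}_N$ on the adversary's subset $W^\adv$, after measuring the honest subset in the standard basis and discarding $W^S,W^R$, depends only on \emph{which physical qubits belong to $\adv$}, and by the non-adaptive adversary assumption this subset $\adv$ is fixed before the protocol and in particular does not change when we relabel who among $[N]\setminus\adv$ is the sender. Because $\ketbra{W}_N$ is invariant under any permutation of its $N$ subsystems, in particular under permutations fixing $\adv$ pointwise and permuting the honest nodes, the role-assignment ``node $i$ is $S$, node $j$ is $R$'' versus ``node $i'$ is $S$, node $j'$ is $R$'' (with $i,j,i',j'\notin\adv$) are related by such a permutation, hence produce the same reduced state on $W^\adv$ and on all classical registers in $C$. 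Summing over $\vec\mu,\vec\nu,m,\vec\kappa$ preserves this equality, giving $\rho_{W^\adv C|S=i}=\rho_{W^\adv C|S=j}$ for all $i,j\notin\adv$, and since the left side does not depend on $i$ it equals the average $\rho_{W^\adv C}$.

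I expect the main obstacle to be the bookkeeping in point (a): making rigorous that the $i$-dependence carried by $\mathcal{R}^{m\oplus_i\kappa_i}_{W^R}$ genuinely disappears once $W^R$ is traced out, given that $\mathcal{R}^{m\oplus_i\kappa_i}$ is applied \emph{before} (not after) we discard $W^R$, and given the sum over the adversary's chosen string $\vec\kappa$. The clean way is to observe that for a fixed input the family $\{\mathcal{B}^m_{W^S Q}(\cdot)\}_m$ produces, on $W^R$, the four teleported candidates with equal weight, and $\sum_m \Tr_{W^R}[\mathcal{R}^{m\oplus_i\kappa_i}_{W^R}(\cdot_m)]$ is symmetric under any relabelling of $m$; since $m\mapsto m\oplus_i\kappa_i$ is a bijection for each fixed $i,\vec\kappa$, the sum is the same for every $i$. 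A secondary subtlety is that $C$ also contains the \emph{quantum} register $W^\adv$, not merely classical data, so I must keep track of it as a genuine quantum register throughout and only trace out $W^\hon, W^S, W^R, Q$; but since none of these caveats interact with the permutation argument, the structure above should go through. This lemma then feeds directly into the guessing-probability bound: independence of $\rho_{W^\adv C|S=i}$ from $i$ lets one replace $\Tr[M^i\rho_{W^\adv C|S=i}]$ by $\Tr[M^i\rho_{W^\adv C}]$ and bound $\sum_i P[S=i|S\notin\adv]\Tr[M^i\rho_{W^\adv C}]\le \max_i P[S=i|S\notin\adv]$.
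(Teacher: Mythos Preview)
Your argument covers only the case where the receiver is honest, $R\notin\adv$. The paper treats the case $R\in\adv$ separately, and your proposal has a genuine gap there. When $R$ is dishonest: (i) Eq.~\eqref{app:eq:statefinal} is not the correct starting point, since that expression was obtained after tracing out $R$'s private randomness $T_\inn^R$ (which is why $T$ became $\tfrac{\mathbb{1}}{4}$); with $R\in\adv$ the adversary knows $\tn{rand}$ and hence $m$; (ii) $W^R$ is part of the adversary's registers and cannot be traced out, so your point~(a) collapses; (iii) the dishonest receiver need not apply $\mathcal{R}^{m\oplus_i\kappa_i}_{W^R}$ at all. The paper writes down a modified final state (their Eq.~\eqref{eq:statefinalRcheat}) and then argues directly from permutational invariance of the full state: $\rho_{|S=i}=\mathcal{P}_{i\leftrightarrow j}(\rho_{|S=j})$, so tracing out $W^\hon,W^S,Q$ (but \emph{not} $W^R$) yields equality. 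You should add this case.

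A smaller point on your part~(a) for the honest-$R$ case: the claim that ``the Bell measurement outcome $m$ is uniformly random'' and that the four teleported candidates appear ``with equal weight'' is not true in general (the anonymous entanglement may be imperfect due to adversarial action on $W^\adv$), so the bijection argument as stated does not go through. The paper's route is simpler and avoids this: each $\mathcal{R}^{m\oplus_i\kappa_i}_{W^R}$ is a CPTP map acting on $W^R$ alone, so $\Tr_{W^R}\big[\mathcal{R}^{m\oplus_i\kappa_i}_{W^R}(X)\big]=\Tr_{W^R}[X]$ for every $m,i,\vec\kappa$; then $\sum_m\mathcal{B}^m_{W^SQ}$ is CPTP on $W^SQ$ and disappears upon tracing those out. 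No uniformity or relabelling is needed.
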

\begin{proof}
Let us first consider the case where the receiver is not an adversary, $R\notin \mathcal{A}$.

By tracing out we have that 
\begin{align}
{\rho}_{W^\adv C|S=i} = \Tr_{W^\hon W^S W^R Q}[{\rho}_{W^\hon W^\adv W^S W^R Q C|S=i} ], 
\end{align}
where $\rho_{W^\hon W^\adv W^S W^R Q C|S=i} $ is the total state at the end of the protocol \eqref{app:eq:statefinal}, Lem. \ref{app:lem:reduced_state_final}, {given that $i$ is the sender}.  
Since $\mathcal{R}^{m \oplus_i \kappa_i}_{W^R}$ and $\sum_m \mathcal{B}^m_{W^S Q}$ are CPTP, they do not change the trace and thus we can write the first part of Eq. \eqref{app:eq:statefinal} as

\begin{align}
\begin{split}
& \Tr_{W^\hon W^S W^R Q}\Big[  \sum_{m,\vec{\kappa}}  \mathcal{R}^{m \oplus_i \kappa_i}_{W^R} \circ \mathcal{B}^m_{W^S Q}(\Pi_{W^\hon}^{\vec{0}} \otimes \mathcal{F}_{{W^\adv D^\adv}}^{\vec{0}} (\ketbra{\tn{W}}_{W^\hon W^\adv W^S W^R} \otimes \ketbra{\psi}_{Q} \otimes \sigma_{D^{\adv}}))  \\ 
& \qquad\qquad\qquad  \otimes \ketbra{\vec{0}}_{O_\inn^\adv} \otimes \ketbra{0}_{O_\out} \otimes \ketbra{\vec{\kappa}}_{T_\inn^{\mathcal{A}}} \otimes \frac{\mathbb{1}_T}{4} \Big]  \\
& = \Tr_{W^\hon W^S W^R Q}\Big[  \Pi_{W^\hon}^{\vec{0}} \otimes \mathcal{F}_{{W^\adv D^\adv}}^{\vec{0}} (\ketbra{\tn{W}}_{W^\hon W^\adv W^S W^R} \otimes \ketbra{\psi}_{Q} \otimes \sigma_{D^{\adv}})  \\ 
& \qquad\qquad\qquad  \otimes \ketbra{\vec{0}}_{O_\inn^\adv} \otimes \ketbra{0}_{O_\out} \otimes\sum_{\vec{\kappa}} \ketbra{\vec{\kappa}}_{T_\inn^{\mathcal{A}}} \otimes \frac{\mathbb{1}_T}{4} \Big] \\
& = \Tr_{W^\hon}\,\Big[ \Pi_{W^\hon}^{\vec{0}} \otimes \mathcal{F}_{{W^\adv D^\adv}}^{\vec{0}} (\tilde{W}_{W^\hon W^\adv} \otimes \sigma_{D^\adv}) \Big] \otimes \ketbra{\vec{0}}_{O_\inn^\adv} \otimes \ketbra{0}_{O_\out} \otimes \sum_{\vec{\kappa}}\ketbra{\vec{\kappa}}_{T_\inn^{\mathcal{A}}} \otimes \frac{\mathbb{1}_T}{4} 
\end{split}
 \end{align}
where $\tilde{W}_{W^\hon W^\adv}$ is the reduced W state on registers $W^\hon$ and $W^\adv$ after tracing out $W^S$ and $W^R$, i.e. $\tilde{W}_{W^\hon W^\adv} = \Tr_{W^S W^R}(\ketbra{\tn{W}}_{W^\hon W^\adv W^S W^R})$, and similarly for the second term of \eqref{app:eq:statefinal}. So,

\begin{align}
\rho_{W^\adv C|S=i}&
=\Tr_{W^\hon}\,\Big[ \Pi_{W^\hon}^{\vec{0}} \otimes \mathcal{F}_{{W^\adv D^\adv}}^{\vec{0}} (\tilde{W}_{W^\hon W^\adv} \otimes \sigma_{D^\adv}) \Big] \otimes \ketbra{\vec{0}}_{O_\inn^\adv} \otimes \ketbra{0}_{O_\out} \otimes \sum_{\vec{\kappa}} \ketbra{\vec{\kappa}}_{T_\inn^{\mathcal{A}}} \otimes \frac{\mathbb{1}_T}{4} \nonumber \\
&+ \sum_{\vec{\mu}\neq0,\vec{\nu} }\Tr_{W^\hon}\,\Big[ \Pi_{W^\hon}^{\vec{\nu}} \otimes \mathcal{F}_{{W^\adv D^\adv}}^{\vec{\mu}} (\tilde{W}_{W^\hon W^\adv} \otimes \sigma_{D^\adv}) \Big] \otimes  \ketbra{\vec{\mu}}_{O_\inn^\adv} \otimes \ketbra{1}_{O_\out} \otimes \ketbra{\perp}_{T_\inn^\adv} \otimes \ketbra{\perp}_{T} 
\end{align}
But since the state distributed by the source is permutationally invariant, it holds that 
\begin{align}
\tilde{W}_{W^\hon W^\adv}  = \Tr_{W^{S=i} W^R}(\ketbra{\tn{W}}_{W^\hon W^\adv W^{S=i} W^R}) = \Tr_{W^{S=j} W^R}(\ketbra{\tn{W}}_{W^\hon W^\adv W^{S=j} W^R}), \quad \forall i,j \notin \adv
\end{align} 
Since no other part of the state $\rho_{W^\adv C|S=i}$ depends on the sender, the state $\rho_{W^\adv C|S=i}$ must be the same for all senders and we denote $\rho_{W^\adv C|S=i} = \rho_{W^\adv C}$. Note that the same statement holds when the receiver is honest since,
\begin{align}
\Tr_{W^{S} W^{R=i}}(\ketbra{\tn{W}}_{W^\hon W^\adv W^{S} W^{R=i}}) = \Tr_{W^{S} W^{R=j}}(\ketbra{\tn{W}}_{W^\hon W^\adv W^{S} W^{R=j}}), \quad \forall i,j \notin \adv
\end{align}
and, therefore, $\rho_{W^\adv C|R = i} = \rho_{W^\adv C}$.

\vspace{1em}

Now we proceed to the proof of this statement in the case where the receiver is an adversary.

If the receiver is dishonest then the teleportation map has to take into account the fact that the adversaries can apply an arbitrary map instead of $\mathcal{R}^{m \oplus_i \kappa_i}_{W^R}$. Also, now the output of the teleportation $m$ is known to the adversaries and the map $ \mathcal{F}_{{W^\adv D^\adv}}^{\vec{\mu}}$ could initially also act on the receiver's register. Now we can model the action of the receiver after receiving $m$ by an arbitrary map that acts on all the registers in possession of the adversaries, i.e., $\mathcal{R}^{m \oplus_i \kappa_i}_{W^R} \longrightarrow \mathcal{R'}_{W^\adv W^RCT_\inn^\adv T}$ and instead of \eqref{app:eq:statefinal}, the final state of the protocol is described by 
\begin{align}\label{eq:statefinalRcheat}
\rho_{W^\hon W^\adv W^S W^R Q C|S=i} & =  \mathcal{R'}_{W^\adv W^RCT_\inn^\adv T}\circ \Big(\sum_{m,\vec{\kappa}}  \mathcal{B}^m_{W^S Q}(\Pi_{W^\hon}^{\vec{0}} \otimes \mathcal{F}_{{W^\adv D^\adv}}^{\vec{0}} (\ketbra{\tn{W}}_{W^\hon W^\adv W^S W^R} \otimes \ketbra{\psi}_{Q} \otimes \sigma_{D^{\adv}})) \nonumber \\ 
& \qquad\qquad\qquad  \otimes \ketbra{\vec{0}}_{O_\inn^\adv} \otimes \ketbra{0}_{O_\out} \otimes \ketbra{\vec{\kappa}}_{T_\inn^{\mathcal{A}}} \otimes \ketbra{m}_T \Big) \nonumber \\
& + \sum_{\vec{\mu}\neq0,\vec{\nu}}\mathbb{1}_{W^S W^R Q} \circ (\Pi_{W^\hon}^{\vec{\nu}} \otimes \mathcal{F}_{{W^\adv D^\adv}}^{\vec{\mu}} (\ketbra{\tn{W}}_{W^\hon W^\adv W^S W^R} \otimes \ketbra{\psi}_{Q} \otimes \sigma_{D^{\adv}})) \nonumber \\ 
& \qquad\qquad\qquad \otimes  \ketbra{\vec{\mu}}_{O_\inn^\adv} \otimes \ketbra{1}_{O_\out} \otimes \ketbra{\perp}_{T_\inn^{\mathcal{A}}}\otimes  \ketbra{\perp}_T 
\end{align}
Let us look at the reduced final state of the adversary, which now includes the receiver,
$\rho_{W^\adv W^R C|S=i} = \Tr_{W^\hon W^S Q}[\rho_{W^\hon W^\adv W^S W^R Q C|S=i}]$. 
By the permutational invariance of the state generated by the source we have that the state at the end of the protocol given that node $i$ is the sender is equivalent to the state given that node $j$ is the sender up to a permutation of $i$ and $j$,
\begin{align}
\rho_{W^\hon W^\adv W^S W^R Q C|S=i} = \mathcal{P}_{i\leftrightarrow j}(\rho_{W^\hon W^\adv W^S W^R Q C|S=j}).
\end{align}
Therefore tracing out the sender and the other honest parties, the remaining states are equal
\begin{align}
\rho_{W^\adv W^R C|S=i}=\rho_{W^\adv W^R C|S=j},
\end{align}
which proves anonymity of the sender even if the receiver is dishonest.

\end{proof}

\begin{proof}[Proof Thm. \ref{thm:WSsecure} (sender security)]
Here we focus on proving sender security. The receiver security is formally stated in Thm. \ref{app:thm:WSsecureR}G. iven Lem. \ref{app:lem:independent_S}, we have that
\begin{align}
P_{\guess}[S|W^\adv,C, S\notin \mathcal{A}] & = \max_{\{M^i\}} \sum_{i \in [N]}  P[S=i|S\notin \mathcal{A}] \Tr[M^i \cdot \rho_{W^\adv C|S=i} ] \\
& = \max_{\{M^i\}} \sum_{i \in [N]}  P[S=i|S\notin \mathcal{A}] \Tr[M^i \cdot \rho_{W^\adv C} ] \\
& \leq \max_{i} P[S=i|S\notin \mathcal{A}] \max_{\{M^i\}}  \Tr\Big[\underbrace{\sum_{i \in [N]}  M^i}_{{\mathbb{1}_{W^\adv C}}} \cdot \rho_{W^\adv C}\Big] \\
& = \max_{i} P[S=i|S\notin \mathcal{A}] 
\end{align}

\end{proof}

Analogously, we will prove the following statement for the receiver-security.

\begin{theorem}[receiver security]\label{app:thm:WSsecureR}
The anonymous transmission protocol, Protocol 1, with the W state is receiver-secure in the semiactive adversary scenario, i.e. 
\begin{align}
\max_{\{M^i\}} \sum_{i \in [N]}  P[R=i|W^\adv,C, R\notin \mathcal{A}] \Tr[M^i \cdot \rho_{W^\adv C|R=i} ] \leq \max_{i} P[R=i|R\notin \mathcal{A}],
\end{align}
given that the receiver is honest. 
\end{theorem}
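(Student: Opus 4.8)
The plan is to mirror the proof of Theorem~\ref{thm:WSsecure} with the roles of sender and receiver interchanged. The one substantive ingredient is the receiver-analogue of Lemma~\ref{app:lem:independent_S}, namely that for every honest node $i\notin\adv$ the reduced adversarial state at the end of the protocol satisfies $\rho_{W^\adv C|R=i}=\rho_{W^\adv C}$; once this is in hand the guessing-probability bound is a one-line POVM-completeness argument. I would split the independence claim into the two sub-cases already anticipated in Lemma~\ref{app:lem:independent_S}. If $S\notin\adv$, the situation is completely symmetric to the honest-receiver remark inside that proof: by Lemma~\ref{app:lem:sigma_indep_S} the classical registers produced in Steps~1--2 carry no information about $R$, and tracing $W^S$ and $W^{R=i}$ out of the permutationally invariant $\ketbra{W}_N$ produces the same reduced state $\tilde W_{W^\hon W^\adv}$ independently of $i$, so every term of Eq.~\eqref{app:eq:statefinal} is $i$-independent. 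If $S\in\adv$, I would argue exactly as in the $R\in\adv$ case of Lemma~\ref{app:lem:independent_S} with $S$ and $R$ swapped: a dishonest sender controls the Bell measurement in Step~6 and the bit $m$ it feeds into Logical~OR, so the honest teleportation map is replaced by an arbitrary map acting only on the adversaries' registers, while the still-honest receiver inputs a uniform two-bit string $\mathrm{rand}$, keeping the public register $T$ maximally mixed and uncorrelated with the receiver's identity. Permutational invariance of the source state then gives $\rho_{W^\hon W^\adv W^S W^R Q C|R=i}=\mathcal{P}_{i\leftrightarrow j}\big(\rho_{W^\hon W^\adv W^S W^R Q C|R=j}\big)$, and tracing out the honest parties (including the honest receiver's register $W^R$) yields $\rho_{W^\adv W^S C|R=i}=\rho_{W^\adv W^S C|R=j}$, hence in particular $\rho_{W^\adv C|R=i}=\rho_{W^\adv C}$.

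Given this independence, the bound closes exactly as in the sender case:
\begin{align}
P_{\guess}[R|W^\adv,C,R\notin\adv] & = \max_{\{M^i\}}\sum_{i\in[N]}P[R=i|R\notin\adv]\,\Tr[M^i\cdot\rho_{W^\adv C|R=i}]\nonumber\\
& = \max_{\{M^i\}}\sum_{i\in[N]}P[R=i|R\notin\adv]\,\Tr[M^i\cdot\rho_{W^\adv C}]\nonumber\\
& \leq \max_{i}P[R=i|R\notin\adv]\,\max_{\{M^i\}}\Tr\Big[\sum_{i\in[N]}M^i\cdot\rho_{W^\adv C}\Big]\nonumber\\
& = \max_{i}P[R=i|R\notin\adv],
\end{align}
where the last step uses $\sum_{i}M^i=\mathbb{1}_{W^\adv C}$ and $\Tr[\rho_{W^\adv C}]=1$. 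This is precisely the statement of receiver-security, establishing Theorem~\ref{app:thm:WSsecureR}.

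The routine part is the POVM-completeness manipulation and the register book-keeping inherited from Section~\ref{app:sec:states_and_registers}; the only genuine obstacle is the $S\in\adv$ sub-case of the independence lemma. There one must check that a dishonest sender's entire Step~6 deviation — an arbitrary joint operation on $W^S$, $Q$ and the adversarial registers, together with an arbitrary input $m$ (and $\vec{\kappa}$) to the classical Logical~OR — can be absorbed into a single map on registers the adversary already holds, without touching the honest receiver's private randomness, so that the optimization in the definition of the guessing probability already accounts for it. It is this fact, together with the assumptions that the classical subroutines reveal nothing beyond their specified outputs and the permutation invariance of $\ketbra{W}_N$, that keeps the reduced adversarial state blind to which honest node was notified as $R$ in Step~2. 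Once that is secured, the argument closes as above.
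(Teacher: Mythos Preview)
Your proposal follows the paper's approach exactly: establish that the adversaries' reduced state is independent of which honest node is the receiver, then close with the POVM-completeness calculation. The paper's own proof is in fact terser than yours---it simply invokes the receiver-independence remark that already appears inside the proof of Lemma~\ref{app:lem:independent_S} (stated there in the both-honest context) and runs the three-line bound, without separately spelling out the $S\in\adv$ sub-case at all.

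One wrinkle in your explicit $S\in\adv$ treatment: the permutation argument does not literally deliver $\rho_{W^\adv W^S C|R=i}=\rho_{W^\adv W^S C|R=j}$, because when $S\in\adv$ the classical register $Rn_\inn^S$---which stores the identifier of $R$ that $S$ fed into Receiver Notification---sits inside $D^\adv\subset C$ and is not traced out. Permuting the honest $W$-registers of nodes $i$ and $j$ does not erase this label, so the two conditional states differ there. The paper does not address this either; the resolution is that if $S\in\adv$ then $R$ was selected by the adversary, so for any fixed adversarial strategy $\max_i P[R=i|R\notin\adv]=1$ and the bound is vacuous. Your independence claim in that sub-case should be replaced by this observation about the prior rather than by the permutation argument; with that adjustment your proof is complete and matches the paper.
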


\begin{proof}
By the proof of Lem. \ref{app:lem:independent_S}, it follows that the reduced quantum state of the adversary at the end of the protocol is independent of the receiver, i.e.,  $\rho_{W^\adv C|R=i} = \rho_{W^\adv C}, \forall i \notin \mathcal{A}$. Therefore,
\begin{align}
P_{\guess}[R|W^\adv,C, R\notin \mathcal{A}] & = \max_{\{M^i\}} \sum_{i \in [N]}  P[R=i|R\notin \mathcal{A}] \Tr[M^i \cdot \rho_{W^\adv C|R=i} ] \\
& \leq \max_{i} P[R=i|R\notin \mathcal{A}] \max_{\{M^i\}}  \Tr\Big[\underbrace{\sum_{i \in [N]}  M^i}_{{\mathbb{1}_{W^\adv C}}} \cdot \rho_{W^\adv C}\Big] \\
& = \max_{i} P[R=i|R\notin \mathcal{A}]
\end{align}
\end{proof}

\subsubsection{Passive adversary}

\begin{definition}
Let $\mathcal{H}$ be the subset of honest players, excluding $S$ and $R$, and $\mathcal{A}$ be the subset of passive adversaries. Let $C$ be the register that contains all classical information accessible to the adversaries, i.e., the public outputs of the classical subprotocols, plus all the inputs and outputs of the adversaries to these classical subprotocols, $C = \{D^\adv,O_\inn^\adv,O_\out,T_\inn^\adv,T\}$. Then probability of the adversaries guessing the sender is given by
\begin{align}\label{app:EQ:sec_passive}
P_{\guess}[S|W^\adv,C, S\notin \mathcal{A}] = \sum_{a,c}P[W^\adv=a,C=c] \max_{i \in [N]} P[S=i|W^\adv=a,C=c,S\notin \mathcal{A}], \quad 
\end{align}
where maximization is taken over all the values of random variable $S$, and $a$ and $c$ are possible values of random variables $W^\adv$ and $C$ respectively. Note that, unlike before, here $W^\adv$ is a classical register of the adversary, since their share of the $W$ state was measured in the $\DE{0,1}$ basis. An analogous expression holds for receiver-security. 
\end{definition}

The proof for the passive adversary security scenario is a special case of the proof for the semiactive adversary scenario. Indeed, it corresponds to the case where the arbitrary map of the adversary, $\mathcal{F}_{W^\adv \mathcal{D}^{\mathcal{A}}}^{\vec{\mu}}$, is a measurement in the $\DE{\ket{0},\ket{1}}$ basis and $T_\inn^\adv=\vec{0}$. Let us first prove the following lemma.
\begin{lemma} \label{app:lem:independent_S_passive}
The probability of registers $W^\adv$ and $C$ assuming certain values $a$ and $c$ is independent of the sender,
\begin{align}
P[W^\adv=a,C=c|S = i,S\notin \mathcal{A}] = P[W^\adv=a,C=c]
\end{align}
\end{lemma}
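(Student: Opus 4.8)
The plan is to reduce Lemma~\ref{app:lem:independent_S_passive} to the already-established semi-active case by observing that the passive scenario is literally a restriction of it. First I would recall the structure of the final state $\rho_{W^\hon W^\adv W^S W^R Q C|S=i}$ from Lem.~\ref{app:lem:reduced_state_final}, specialized so that the adversary's map $\mathcal{F}^{\vec\mu}_{W^\adv D^\adv}$ is exactly the projective measurement $\Pi^{\vec\mu}_{W^\adv}$ in the $\{\ket0,\ket1\}$ basis, with $T_\inn^\adv=\vec0$. In this case the registers $W^\adv$ and $C=\{D^\adv,O_\inn^\adv,O_\out,T_\inn^\adv,T\}$ are all classical, so the joint distribution $P[W^\adv=a,C=c|S=i,S\notin\mathcal{A}]$ is obtained by reading off the diagonal entries of $\rho_{W^\adv C|S=i}$ in the corresponding product basis.

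The key step is then to invoke Lem.~\ref{app:lem:independent_S}, whose proof goes through verbatim in the passive case (it only used that $\mathcal{F}^{\vec\mu}$ acts on the adversary's registers, which a measurement does): it gives $\rho_{W^\adv C|S=i}=\rho_{W^\adv C}$ for all $i\notin\mathcal{A}$. Concretely, I would trace out $W^\hon,W^S,W^R,Q$ as in that proof, use that $\sum_m \mathcal{B}^m_{W^S Q}$ and $\mathcal{R}^m_{W^R}$ are trace-preserving, and arrive at an expression for $\rho_{W^\adv C|S=i}$ involving only $\Tr_{W^\hon}[\Pi^{\vec\nu}_{W^\hon}\otimes\Pi^{\vec\mu}_{W^\adv}(\tilde W_{W^\hon W^\adv}\otimes\sigma_{D^\adv})]$ and the various classical flags; since $\tilde W_{W^\hon W^\adv}=\Tr_{W^S W^R}\ketbra{W}$ is independent of which two honest indices were assigned to $S$ and $R$ by permutation invariance of $\ket W_N$, and $\sigma_{D^\adv}$ is independent of $S$ by Lem.~\ref{app:lem:sigma_indep_S}, nothing in the state depends on $i$. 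Taking the $(a,c)$ diagonal entry of both sides yields
\begin{align}
P[W^\adv=a,C=c|S=i,S\notin\mathcal{A}] = P[W^\adv=a,C=c],
\end{align}
which is the claim. (An entirely analogous statement holds with $R$ in place of $S$.)

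I do not expect a serious obstacle here; the only things to be careful about are bookkeeping points rather than mathematical ones. First, one must check that when the receiver is dishonest the Logical~OR output $m\oplus\mathrm{rand}$ still carries no $S$-dependence into $C$ — this follows because $\mathrm{rand}$ is uniform and independent, exactly as in the discussion around Eq.~\eqref{app:eq:xi_6}, so $T$ is maximally mixed and factors out. Second, one should note that the abort branch ($\vec\mu\neq\vec0$ or $\vec\nu\neq\vec0$) is handled on the same footing as the success branch, since the permutation-invariance argument applies to $\Pi^{\vec\nu}_{W^\hon}\otimes\Pi^{\vec\mu}_{W^\adv}(\tilde W_{W^\hon W^\adv})$ for every pair $(\vec\nu,\vec\mu)$, not just the all-zero one. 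With Lem.~\ref{app:lem:independent_S_passive} in hand, the passive-security Theorem~\ref{thm:passive_security} then follows immediately: substituting into Eq.~\eqref{app:EQ:sec_passive} gives $P_{\guess}[S|W^\adv,C,S\notin\mathcal{A}]=\sum_{a,c}P[W^\adv=a,C=c]\max_i P[S=i|S\notin\mathcal{A}]=\max_i P[S=i|S\notin\mathcal{A}]$, and similarly for the receiver.
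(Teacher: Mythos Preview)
Your proposal is correct and follows essentially the same approach as the paper: you specialize the semi-active analysis by taking $\mathcal{F}^{\vec\mu}_{W^\adv D^\adv}$ to be the $\{\ket0,\ket1\}$-basis projector and $T_\inn^\adv=\vec0$, then invoke the permutational-invariance argument of Lem.~\ref{app:lem:independent_S} to conclude the (now classical) reduced state is independent of $S$. The paper's own proof is just a two-sentence pointer to exactly this reduction; your write-up is more detailed (the bookkeeping on the $T$ register and the abort branch is accurate but more than the paper spells out), but the route is the same.
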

\begin{proof}
In the passive adversary scenario, the 
dishonest parties follow the protocol, therefore the map $\mathcal{F}_{W^\adv D^\adv}^{\vec{0}}$ is replaced by a projector onto the $\ketbra{\vec{0}}_{W^\adv}$ subspace, i.e. $\Pi_{W^\adv}^{\vec{0}}$. By the permutational invariance argument the state, in this case classical, is independent of the sender $S$ (or the receiver $R$), which completes the proof.
\end{proof}

\begin{proof}[Proof of Thm. \ref{thm:passive_security}]
Let us expand the probability appearing in the security definition \eqref{app:EQ:sec_passive} 
\begin{align}
P[S=i|W^\adv=a,C=c,S\notin \mathcal{A}] 
& = \frac{P[W^\adv=a,C=c|S=i,S\notin\mathcal{A}] P[S=i|S\notin\mathcal{A}]}{P[W^\adv=a,C=c]}  \\
& = \frac{P[W^\adv=a,C=c|S=i] P[S=i|S\notin\mathcal{A}]}{P[W^\adv=a,C=c]} \label{app:eq:sec_proof_passive}\\
& = P[S=i|S\notin\mathcal{A}]
\end{align}
where in \eqref{app:eq:sec_proof_passive} we used Lem. \ref{app:lem:independent_S_passive}. Therefore, \eqref{app:EQ:sec_passive} becomes,
\begin{align}
P_{\guess}[S|W^\adv,C, S\notin \mathcal{A}] 
& = \sum_{a,c}P[W^\adv=a,C=c] \max_{i \in [N]} P[S=i|S\notin \mathcal{A}] \\
& = \max_{i \in [N]} P[S=i|S\notin \mathcal{A}].
\end{align}
\end{proof}

\section{Anoymous transmission in a noisy quantum network}\label{app:sec:performance}

\subsection{Proof for $\varepsilon$-security}
Here we provide a proof of Thm. \ref{thrm:n_fold_pert} for $\varepsilon$-sender security.

\begin{proof}[Proof of Thm. \ref{thrm:n_fold_pert}]
The idea of our proof is to show that, for all $i$, the trace $\Tr[M^i \cdot \hat{\rho}^\Lambda_{W^\adv C|S=i} ]$ can be upper-bounded by $\Tr[M^i \cdot \rho^\Lambda_{W^\adv C|S=i} ] + N\varepsilon_{\max}$. Then using the fact that $N\varepsilon_{\max}$ is independent of $i$, the rest of the proof follows from Thm. \ref{thm:WSsecure_Nfold}. 

Let us look at the following expression, $\forall i$,
\begin{align}
&\left|\Tr[M^i \hat{\rho}^\Lambda_{W^\adv C|S=i}] - \Tr\Big[M^i \rho^\Lambda_{W^\adv C|S=i}\Big] \right| \nonumber \\
& \leq \norm{ \hat{\rho}^\Lambda_{W^\adv C|S=i} - \rho^\Lambda_{W^\adv C|S=i} }_1 \\
& \leq \norm{\xi'^{\Lambda ~ (6)}_{W^\hon W^\adv W^S W^R Q D^{\hon \adv SR} O_\inn^\hon O_\inn^\adv O_\out T_\inn^S T_\inn^R T_\inn^\adv T|S=i}
- \xi^{\Lambda ~ (6)}_{W^\hon W^\adv W^S W^R Q D^{\hon \adv SR} O_\inn^\hon O_\inn^\adv O_\out T_\inn^S T_\inn^R T_\inn^\adv T|S=i}}_1, \nonumber
\end{align}
where $\xi'^{\Lambda ~ (6)}_{W^\hon W^\adv W^S W^R Q D^{\hon \adv SR} O_\inn^\hon O_\inn^\adv O_\out T_\inn^S T_\inn^R T_\inn^\adv T|S=i}$ and $\xi^{\Lambda ~ (6)}_{W^\hon W^\adv W^S W^R Q D^{\hon \adv SR} O_\inn^\hon O_\inn^\adv O_\out T_\inn^S T_\inn^R T_\inn^\adv T|S=i}$ are final states of the protocol after Step 6 (defined analogously to equation \eqref{app:eq:xi_6}) when the network is perturbed \eqref{eq:n_fold_pert}, or not \eqref{eq:n-fold_noise}, respectively.
Since the protocol is described by a CPTP map, the trace distance of the final state is upper-bounded by the trace distance of the initial state, 
\begin{align}
&\left|\Tr[M^i \hat{\rho}^\Lambda_{W^\adv C|S=i}] - \Tr\Big[M^i \rho^\Lambda_{W^\adv C|S=i}\Big] \right| \nonumber \\
& \leq \norm{ \omega'^\Lambda_{W^\hon W^\adv W^S W^R} \otimes \ketbra{\psi}_{Q}  \otimes \sigma_{D^{\hon \adv SR}|S=i} - \omega^\Lambda_{W^\hon W^\adv W^S W^R} \otimes \ketbra{\psi}_{Q}  \otimes \sigma_{D^{\hon \adv SR}|S=i} }_1\\
&\leq \norm{ \omega'^\Lambda_{W^\hon W^\adv W^S W^R}  - \omega^\Lambda_{W^\hon W^\adv W^S W^R} }_1 \\
&\leq \norm{ \bigotimes_{i=1}^N \Lambda_i (\ketbra{\tn{W}}_{W^\hon W^\adv W^S W^R})  - \Lambda^{\otimes N} (\ketbra{\tn{W}}_{W^\hon W^\adv W^S W^R})}_1 \\
&\leq \norm{ \bigotimes_{i=1}^N \Lambda_i   - \Lambda^{\otimes N} }_1 \leq \sum_{i=1}^N \norm{\Lambda_i - \Lambda }_1  = \sum_{i=1}^N \varepsilon_i \leq N\varepsilon_{\max}
\end{align}
where we used the properties of the trace distance and the induced trace norm. Therefore we have that, $\forall i$
\begin{align}
\Tr[M^i \cdot \hat{\rho}^\Lambda_{W^\adv C|S=i} ] \leq \Tr[M^i \cdot \rho^\Lambda_{W^\adv C|S=i} ] + N\varepsilon_{\max}
\end{align}
so using Thm. \ref{thm:WSsecure_Nfold},
\begin{align}
P_{\guess}[S|W^\adv,C, S\notin \mathcal{A}] 
& = \max_{\{M^i\}} \sum_{i \in [N]}  P[S=i|S\notin \mathcal{A}] \Tr[M^i \cdot \hat{\rho}^\Lambda_{W^\adv C|S=i} ] \\
& \leq \max_{\{M^i\}} \sum_{i \in [N]}  P[S=i|S\notin \mathcal{A}] \left(\Tr[M^i \cdot \rho^\Lambda_{W^\adv C|S=i} ] + N\varepsilon_{\max}\right)\\
& = \max_{\{M^i\}} \sum_{i \in [N]}  P[S=i|S\notin \mathcal{A}] \Tr[M^i \cdot \rho^\Lambda_{W^\adv C|S=i} ] + \sum_{i \in [N]}  P[S=i|S\notin \mathcal{A}] N\varepsilon_{\max} \\
& \leq \max_{i \in [N]} P[S=i|S\notin \mathcal{A}] + N\varepsilon_{\max}.
\end{align}
\end{proof}
The same argument holds for receiver-security. 

\subsection{Performance in a noisy network}\label{app:sec:realistic_performance}

\textbf{Fidelity derivation.} In general, it is non-trivial to derive analytical expressions for fidelity of anonymous entanglement in the presence of noise. The most troublesome part is to obtain analytical expressions for anonymous entangled states shared between $S$ and $R$, which are affected by the noise. Nevertheless, to obtain these explicit formulas, we used the fact that the noise is described by a linear map which acts on each qubit individually. We will illustrate the gist of our derivation with an example for the GHZ state, since it is easier to follow than the one for the W state. 

As defined in the main text, the state shared by $S$ and $R$ in the noisy case is
\begin{align}
\gamma_{SR} = \frac{1}{\mathcal{N}'} \Tr_{N-2}\left[\Lambda^{\otimes N} (\ketbra{\tn{GHZ}}_{N}) \cdot \ketbra{\vec{+}}_{N-2}\right],
\end{align}
where $\mathcal{N}$ is the normalization factor. Note that the GHZ state can be written as
\begin{align}
\ketbra{\tn{GHZ}}_{N} = \frac{1}{2}\left(\ketbra{0}^{\otimes N} + \ketbra{0}{1}^{\otimes N} + \ketbra{1}{0}^{\otimes N} + \ketbra{1}^{\otimes N}\right)
\end{align}
Due to the tensor structure and linearity of the noise, we can write that
\begin{align}
\begin{split}
\gamma_{SR} & = \frac{1}{2\mathcal{N}'} \Tr_{N-2}\left[\left(\Lambda(\ketbra{0})^{\otimes N} + \Lambda(\ketbra{0}{1})^{\otimes N} + \Lambda(\ketbra{1}{0})^{\otimes N} + \Lambda(\ketbra{1})^{\otimes N}\right) \cdot \ketbra{+}^{\otimes N-2}\right] \\
& =\frac{1}{2\mathcal{N}'}\Big( \Tr[\Lambda(\ketbra{0})]^{N-2}\Lambda(\ketbra{0})^{\otimes 2} + \Tr[\Lambda(\ketbra{0}{1})]^{N-2}\Lambda(\ketbra{0}{1})^{\otimes 2} \\
& \qquad \quad + \Tr[\Lambda(\ketbra{1}{0})]^{N-2}\Lambda(\ketbra{1}{0})^{\otimes 2} + \Tr[\Lambda(\ketbra{1})]^{N-2}\Lambda(\ketbra{1})^{\otimes 2} \Big).
\end{split}
\end{align}
This way one only takes the tensor product of the two terms corresponding to $S$ and $R$, instead of taking the tensor of $N$ terms. The expression for the W state follows the exact same pattern, but one has to account for all the combinations of 0's and 1's occurring in the state $\ketbra{\tn{W}}_{N}$. Let $\tn{tr}_{xy} := \Tr[\Lambda(\ketbra{x}{y})\cdot \ketbra{0}]$ with $x,y = \DE{0,1}$. Then the state $\omega_{SR}$ shared between $S$ and $R$ in the noisy implementation of Protocol 1 is
\begin{align}
\begin{split}
\omega_{SR} & = \frac{1}{\mathcal{N}}\Big( (N-2)(N-3)\tn{tr}_{01}\tn{tr}_{10}\tn{tr}_{00}^{N-4} \Lambda(\ketbra{0}) \otimes \Lambda(\ketbra{0})\\
&\qquad  + (N-2)\tn{tr}_{10}\tn{tr}_{00}^{N-3} \big( \Lambda(\ketbra{0}{1}) \otimes \Lambda(\ketbra{0}) + \Lambda(\ketbra{0}) \otimes \Lambda(\ketbra{0}{1}) \big) \\
&\qquad  + (N-2)\tn{tr}_{01}\tn{tr}_{00}^{N-3} \big( \Lambda(\ketbra{1}{0}) \otimes \Lambda(\ketbra{0}) + \Lambda(\ketbra{0}) \otimes \Lambda(\ketbra{1}{0}) \big)\\
&\qquad  + (N-2)\tn{tr}_{11}\tn{tr}_{00}^{N-3} \Lambda(\ketbra{0}) \otimes \Lambda(\ketbra{0}) \\
&\qquad  + \tn{tr}_{00}^{N-2} \big( \Lambda(\ketbra{0}{1}) \otimes \Lambda(\ketbra{1}{0}) + \Lambda(\ketbra{1}{0}) \otimes \Lambda(\ketbra{0}{1}) + \Lambda(\ketbra{0}{0}) \otimes \Lambda(\ketbra{1}{1}) + \Lambda(\ketbra{1}{1}) \otimes \Lambda(\ketbra{0}{0}) \big)\Big).\\
\end{split}
\end{align}
Using the explicit form of $\Lambda$ for the depolarizing and dephasing noise, after easy but tedious calculations, one obtains explicit fidelity expressions derived from Eq. \eqref{eq:fid_omega} and \eqref{eq:fid_gamma}.

\vspace{1em}
\textbf{Dephasing and depolarizing noise.} In this section we provide additional details to the noise analysis provided in the main text. First, we plot the behavior of our protocol vs. the GHZ-based protocol under the dephasing noise, for examples $N = \DE{4,10,50}$, Fig. \ref{app:fig:fid_deph}. Note that the GHZ state is increasingly useful according to Def. \ref{def:FAE_useful} for $q<0.5$. For anonymous entanglement created with the W state this is always the case, however, for the GHZ only for even $N$. To observe the same behavior for odd $N$ and the GHZ state one would have to redefine Eq. \eqref{eq:fid_gamma} to compare the fidelity with the state $\ketbra{\phi^-}$.

\begin{figure} [!h]
\centering
\begin{minipage}[t]{.45\textwidth}
  \centering
  \includegraphics[width=\linewidth]{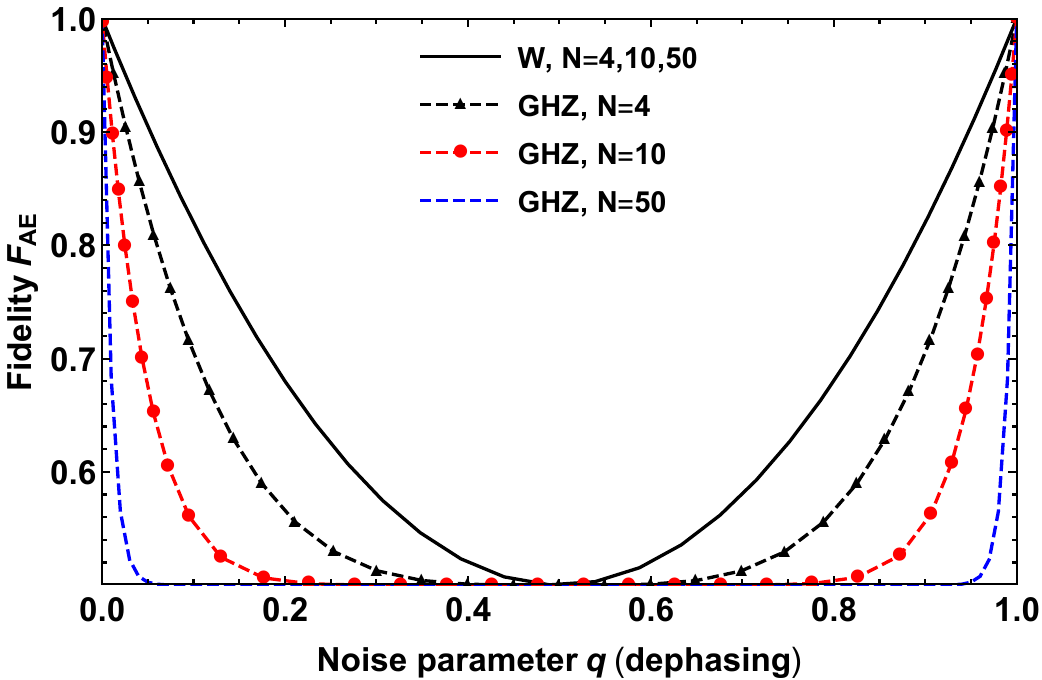}
\caption{Fidelity of anonymous entanglement as a function of the noise parameter for the dephasing channel.}\label{app:fig:fid_deph}
\end{minipage}%
\hspace{0.01\textwidth}
\begin{minipage}[t]{.45\textwidth}
  \centering
  \includegraphics[width=\linewidth]{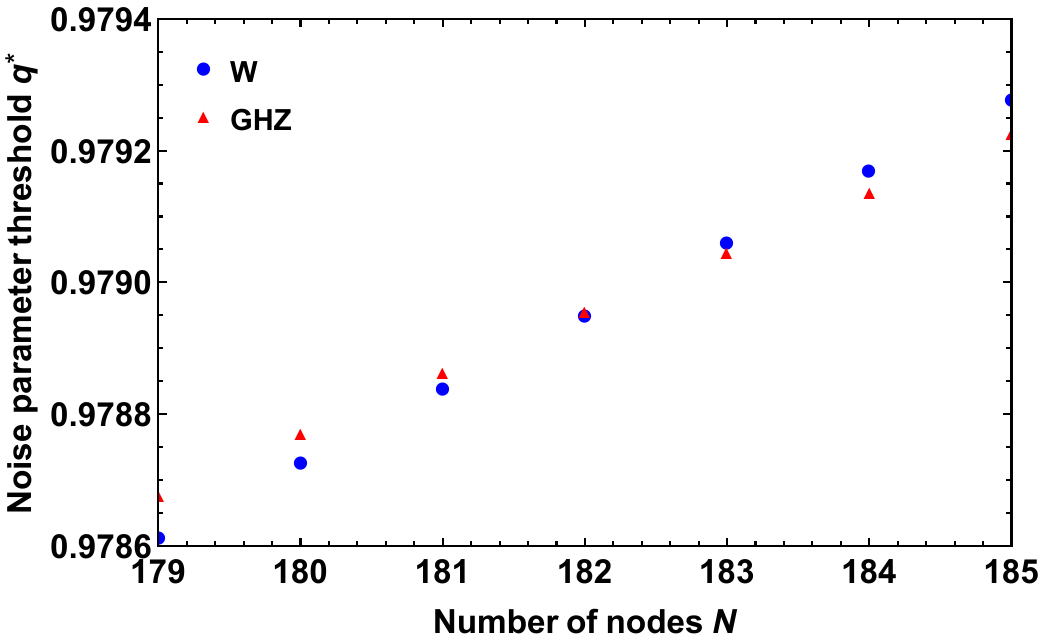}
\caption{Noise parameter threshold for the depolarizing noise. Close-up to $179 \leq N\leq 185$.}\label{app:fig:thresh_close}
\end{minipage}%
\hspace{0.01\textwidth}
\end{figure}

\begin{figure} [!h]
\begin{minipage}[t]{.45\textwidth} 
  \centering
  \includegraphics[width=\linewidth]{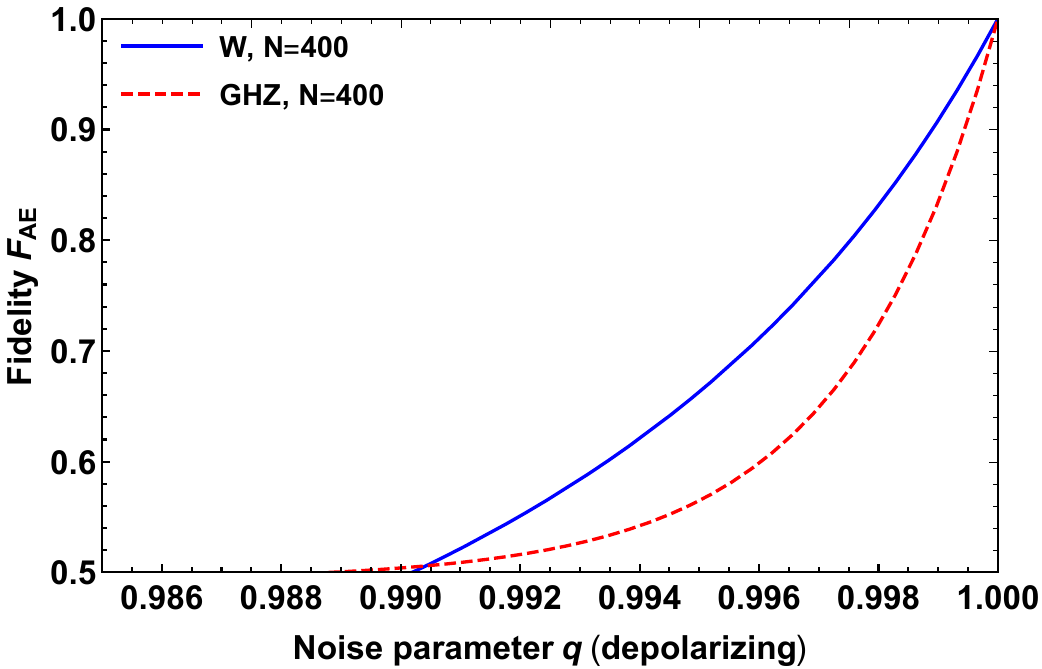}
\caption{Fidelity of anonymous entanglement for $N=400$.}\label{app:fig:fid400}   
\end{minipage}
\hspace{0.01\textwidth}
\begin{minipage}[t]{.45\textwidth} 
  \centering
  \includegraphics[width=\linewidth]{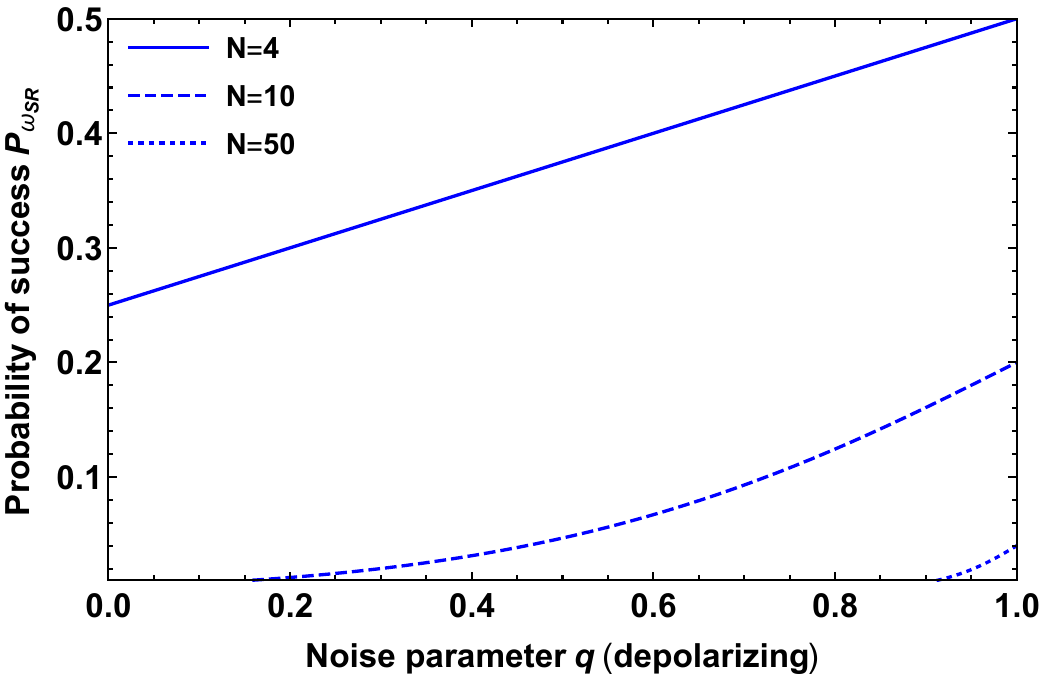}
\caption{Probability of success in Protocol 1 in the presence of the depolarizing noise, $N = \DE{4,10,50}$.}\label{app:fig:prob_succ}   
\end{minipage}
\end{figure}
As discussed, the noise parameter threshold $q^*$ for $N=182$ nodes becomes larger for the $W$ state: $q_W^* = 0.979057$, $q_{GHZ}^* = 0.979043$, $q_W^*>q_{GHZ}^*$. This means that for $N\geq 182$ the W state tolerates less noise than the GHZ; see Fig. \ref{app:fig:thresh_close}. However, we numerically see that there exists a value of $q>q^*_W$ for which $F_{AE}(\omega_{SR})>F_{AE}(\gamma_{SR})$. As an example for $N=400$ see Fig. \ref{app:fig:fid400}.

Moreover, we provide an analytical expression for the probability of success in our protocol, defined as $P_{\omega_{SR}} := \Tr[\Lambda^{\otimes N}(\ketbra{\tn{W}}_N) \cdot |\vec{0}\rangle\!\langle\vec{0}|_{N-2}]$,
which for the depolarizing noise assumes the form, 
\begin{align}
P_{\omega_{SR}} = \frac{(q+1)^{N-3}(N(1-q)+4q)}{N2^{N-2}}.
\end{align}
Examples of $P_{\omega_{SR}}$ as a function of $q$ for  $N=\DE{4,10,50}$ are plotted in Fig. \ref{app:fig:prob_succ}. Note that for the dephasing noise $P_{\omega_{SR}} = \frac{2}{N}$, {since the measurement basis is not affected by the $Z$ noise.} 

\vspace{1em}
\textbf{Particle loss.} In the case when one of the particles of the W state is lost and the state is subjected to the network noise, the fidelity of anonymous entanglement can be expressed as
\begin{align}
F_{AE}(\tilde{\omega}_{SR}) = \frac{(1+q)(N^2(q-1)^2-8q^2+4Nq(1+q))}{4N(N(1-q)+4q)}
\end{align}
for the depolarizing noise, and
\begin{align}
F_{AE}(\tilde{\omega}_{SR}) = \frac{N-1}{N}(1-2q(1-q)).
\end{align}
for the dephasing noise. In Fig. \ref{app:fig:fid_dephloss} we plot the examples of $F_{AE}$ for $N=\DE{4,10,50}$ when the initial W state is subjected to one particle loss and the dephasing noise.

\begin{figure}
\begin{minipage}[t]{.45\textwidth} 
  \centering
  \includegraphics[width=\linewidth]{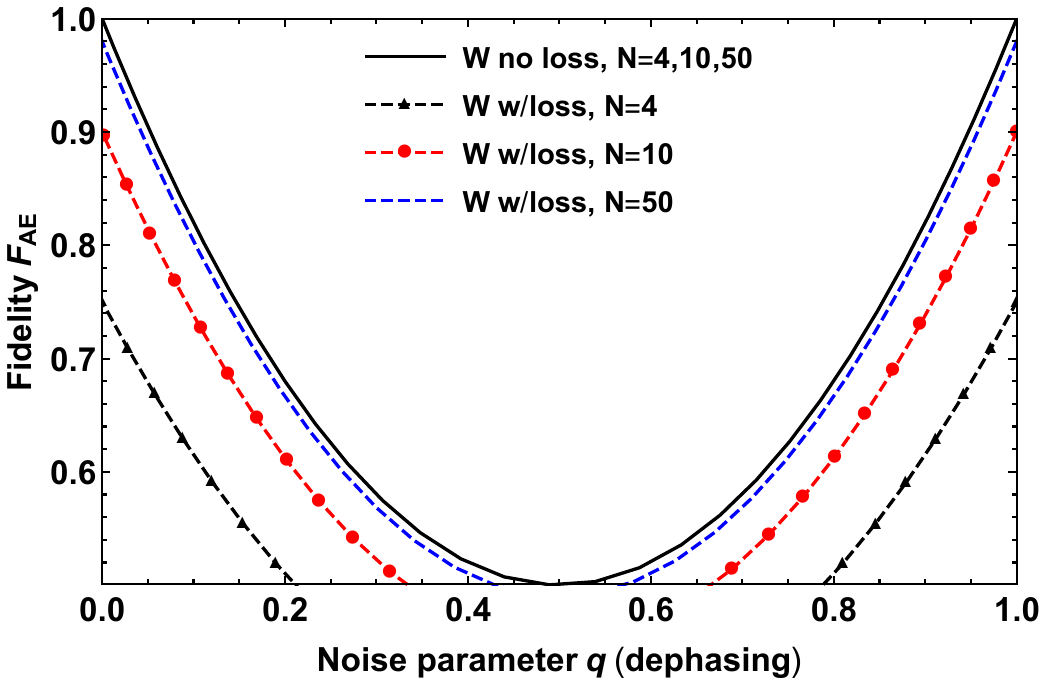}
\caption{Fidelity of anonymous entanglement for Protocol 1, as a function of the noise parameter for the dephasing channel in the presence of one particle loss.}\label{app:fig:fid_dephloss}   
\end{minipage}
\end{figure}

\textbf{Relay protocol.} Finally, in Tab. \ref{app:tab:relay} we present the values for anonymous entanglement in the relay protocol \citep{Yang2016} in the presence of the depolarizing noise.

\begin{table}[H]
\centering
  \caption{Fidelity of anonymous entanglement for the relay scheme \cite{Yang2016} in the $N$-fold noisy network for the depolarizing channel. Note that for the depolarizing parameter $q = 0.8$ the anonymous entanglement created between nodes 1 and 6 is not useful in the sense of Def. \ref{def:FAE_useful}.} \label{app:tab:relay}
  \begin{tabular}
      {c|C{3cm}|C{3cm}} 
      Scenario & $F_{AE}$ for $q = 0.8$ & $F_{AE}$ for $q = 0.95$ \\ \hline
      \begin{minipage}[4cm]{.3\textwidth}\centering
      \includegraphics[width=0.7\textwidth]{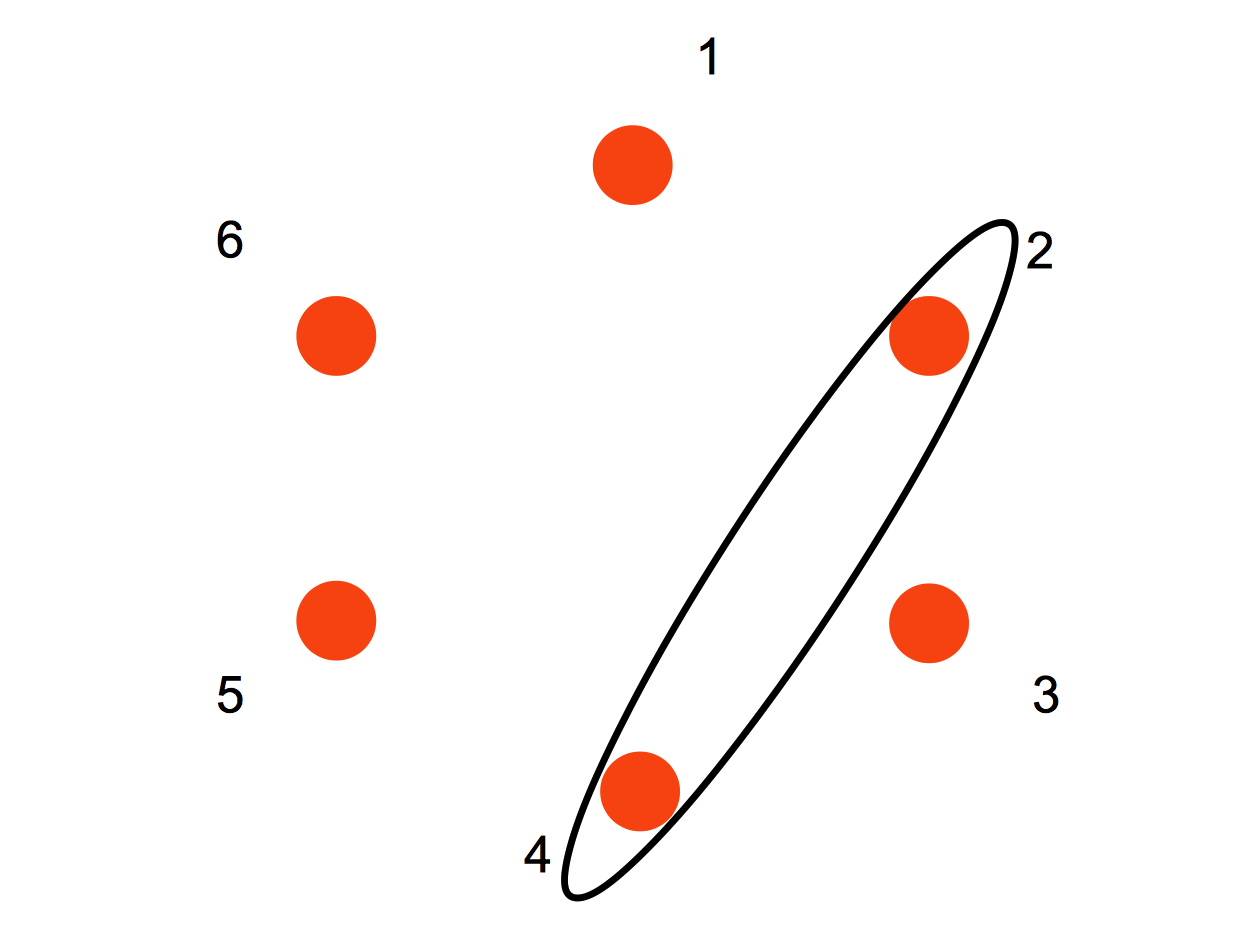}
      \\
     \end{minipage}
      & 0.5738 & 0.8625 \\ \hline
      \begin{minipage}[4cm]{.3\textwidth}\centering
      \includegraphics[width=0.7\textwidth]{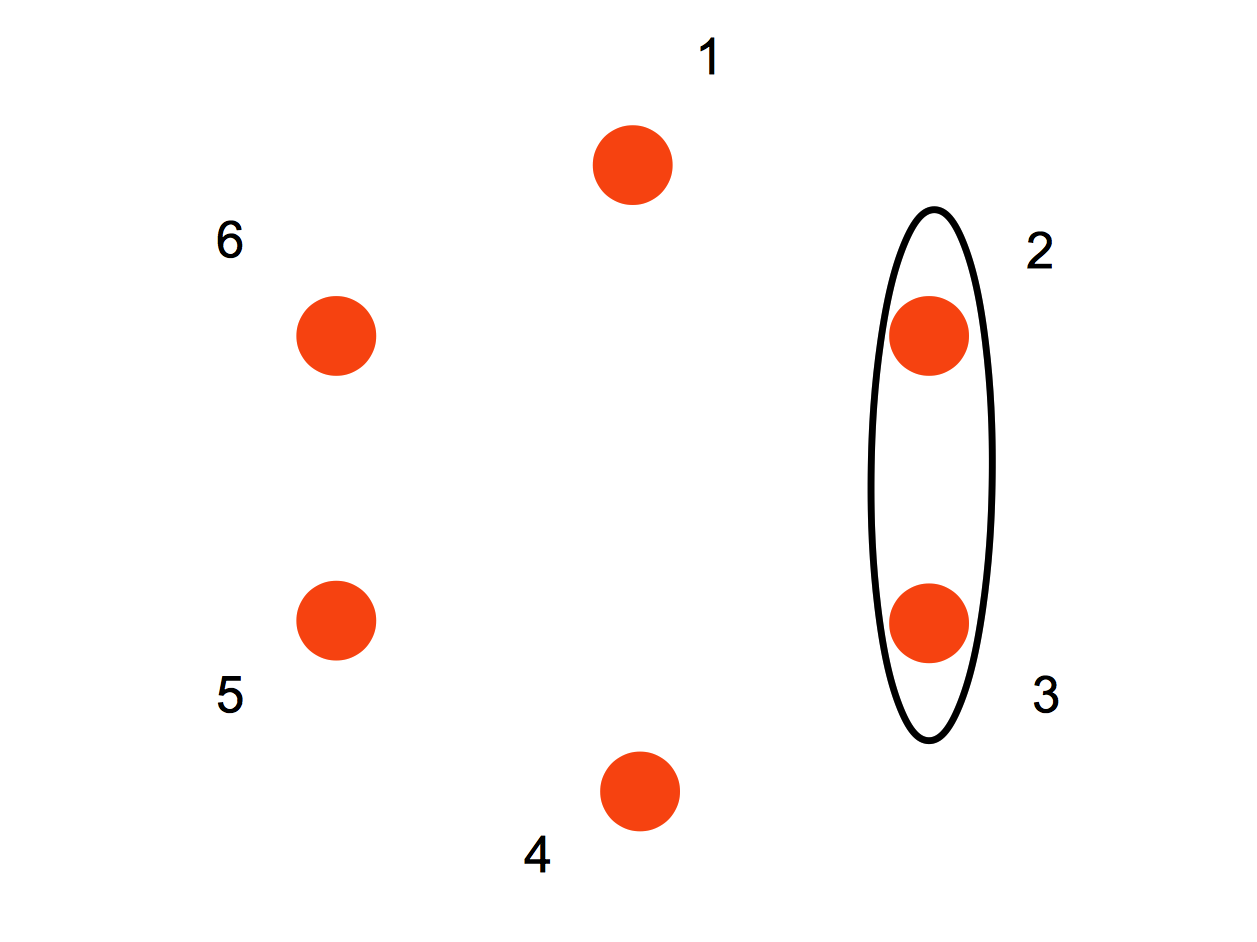}
     \end{minipage}
      & 0.6138 & 0.8744 \\  \hline
	  \begin{minipage}[4cm]{.3\textwidth}\centering
      \includegraphics[width=0.7\textwidth]{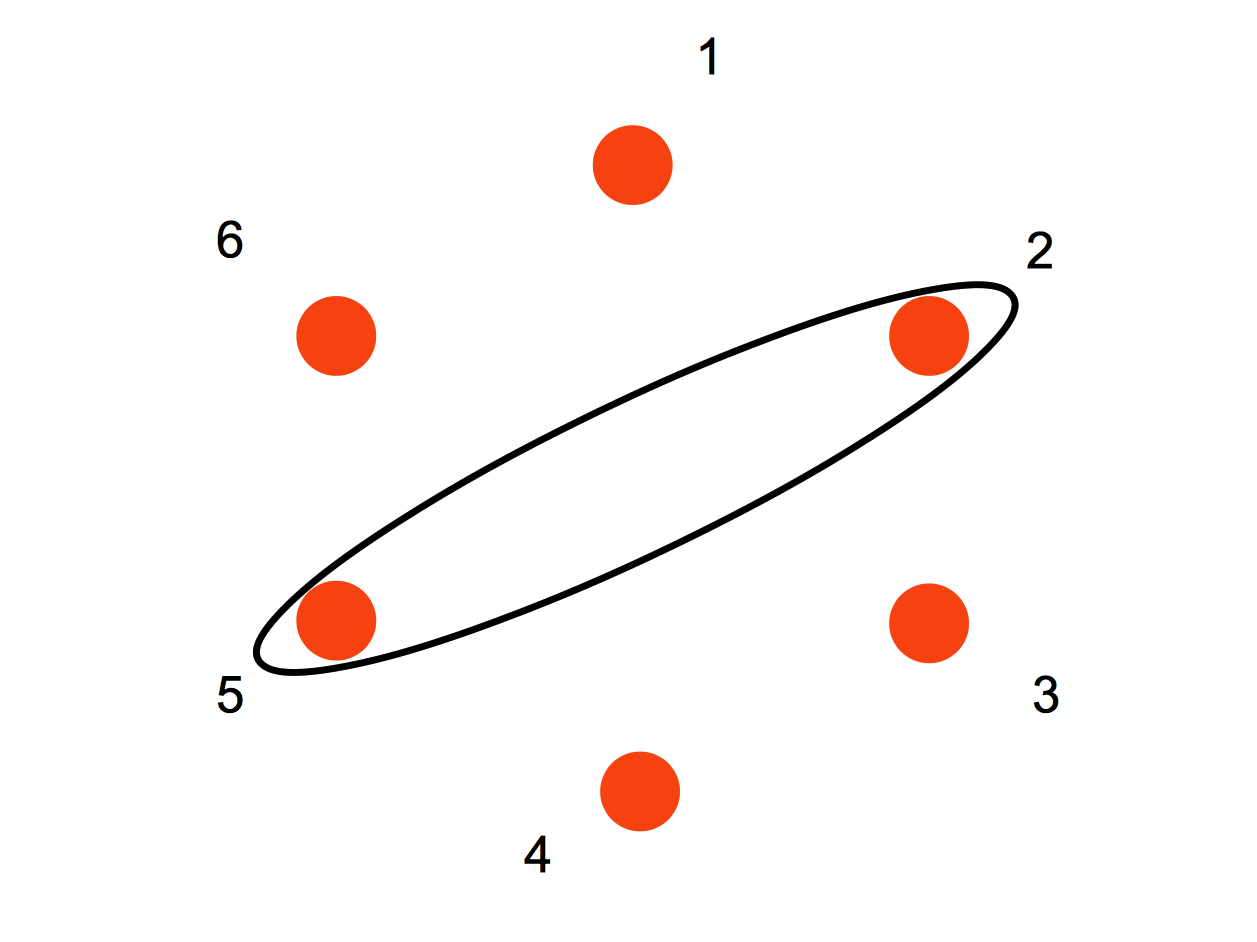}
     \end{minipage}
      & 0.5418 & 0.8512 \\ \hline
      \begin{minipage}[4cm]{.3\textwidth}\centering
      \includegraphics[width=0.7\textwidth]{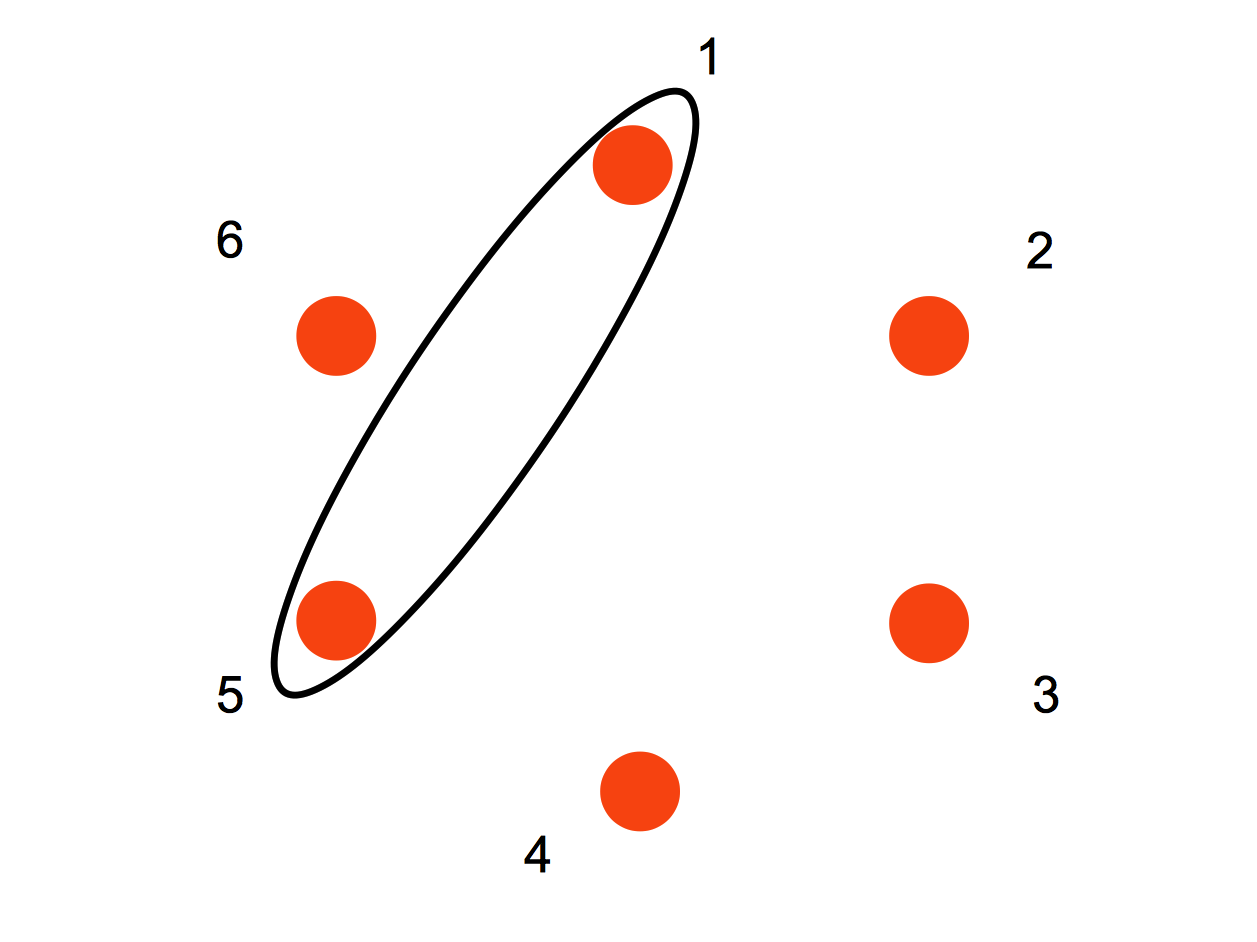}
     \end{minipage}
      & 0.5162 & 0.8405 \\ \hline
      \begin{minipage}[4cm]{.3\textwidth}\centering
      \includegraphics[width=0.7\textwidth]{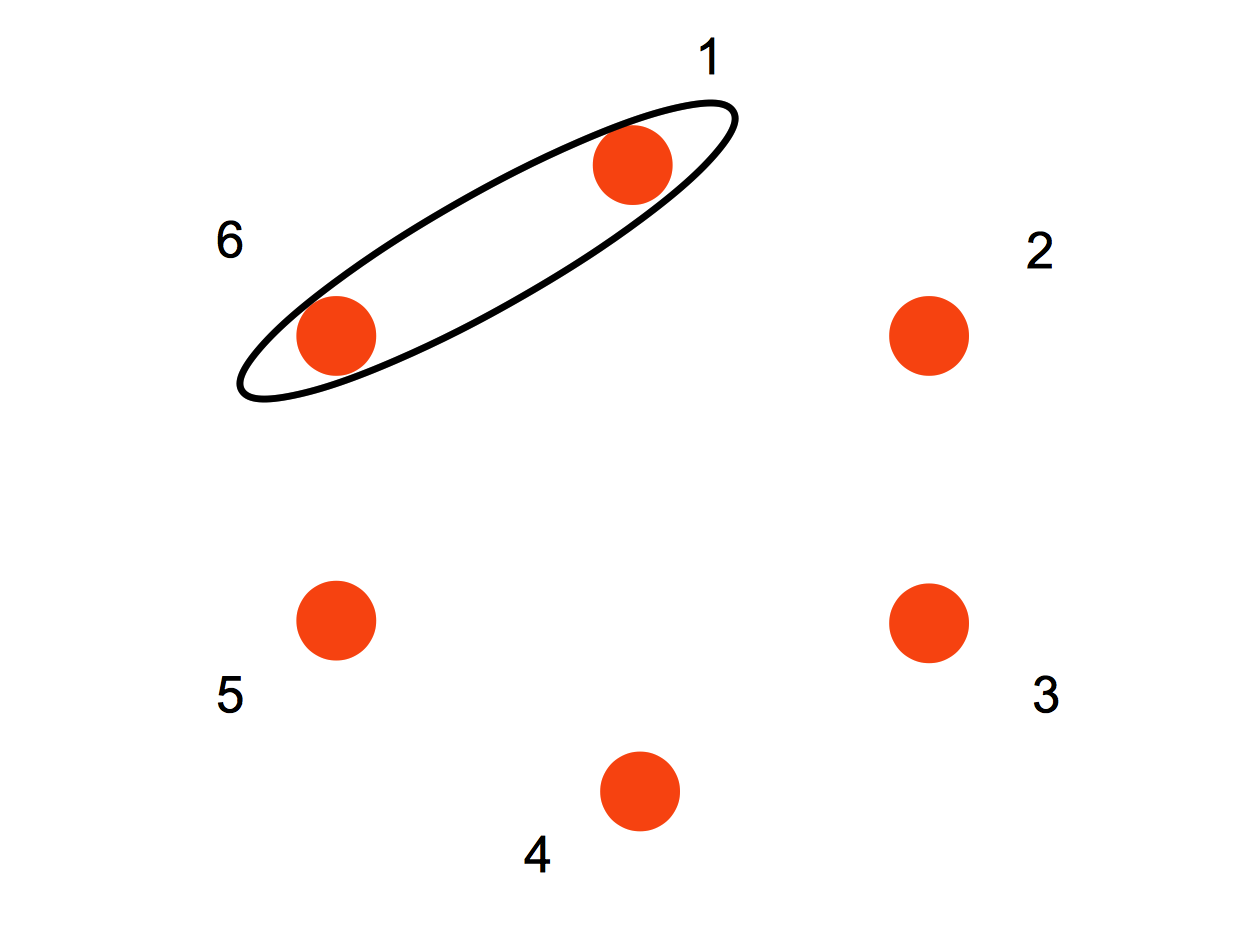}
     \end{minipage}
      & 0.4958 & 0.8303 \\ \hline
  \end{tabular}
\end{table}

\end{widetext}

\end{document}